\newtheorem{theorem}{Theorem}[section]
\newtheorem*{theorem*}{Theorem}
\newtheorem{lemma}[theorem]{Lemma}
\newtheorem{prop}[theorem]{Proposition}
\newtheorem{corollary}[theorem]{Corollary}
\newtheorem*{proposition*}{Proposition}
\theoremstyle{definition}
\newtheorem{definition}[theorem]{Definition}
\theoremstyle{remark}
\newtheorem{rem}[theorem]{Remark}
\DeclareMathOperator{\cov}{cov}
\DeclareMathOperator{\lca}{lca}
\DeclareMathOperator{\var}{var}
\newcommand{\QED}{\ifhmode\unskip\nobreak\fi\quad {\rm Q.E.D.}} 
\newcommand{\E}{\mathbb{E}}
\newcommand{\R}{\mathbb{R}}
\renewcommand{\S}{\mathbb{S}}
\newcommand{\bs}{\boldsymbol}
\newcommand{\de}{{\rm de}}
\newcommand{\tr}{\mathrm{tr}}
\newcommand{\argmin}{\mathop{\mathrm{argmin}}}
\newcommand{\argmax}{\mathop{\mathrm{argmax}}}
\newcommand{\llangle}{\langle\!\langle}
\newcommand{\rrangle}{\rangle\!\rangle}
\newcommand{\1}{\mathds{1}}
\title[Maximum likelihood estimation for Brownian motion tree models]{Maximum likelihood estimation for Brownian motion tree models based on one sample}
\author[]{Michael Truell$^1$}
\author[]{Jan-Christian H\"{u}tter$^2$}
\author[]{Chandler Squires$^1$}
\author[]{Piotr Zwiernik$^3$}
\author[]{Caroline Uhler$^1$}
\thanks{$^1$Laboratory for Information and Decision Systems, and Institute for Data, Systems, and Society, Massachusetts Institute of Technology, Cambridge, MA, USA. Emails: \{truellm, csquires, cuhler\}@mit.edu\\%
\indent $^2$Broad Institute, Cambridge, MA, USA. Email: jchuetter.web@gmail.com\\%
\indent $^3$Department of Statistical Sciences, University of Toronto, ON, Canada. Email: piotr.zwiernik@utoronto.edu\\%
\indent Note: PZ was supported by the Spanish Ministry of Economy and Competitiveness, Grant PGC2018-101643-B-I00, and the R\'{a}mon y Cajal fellowship (RYC-2017-22544).}
\begin{document}
\begin{abstract}
We study the problem of maximum likelihood estimation given one data sample ($n=1$) over Brownian Motion Tree Models (BMTMs), a class of Gaussian models on trees. BMTMs are often used as a null model in phylogenetics, where the one-sample regime is common. Specifically, we show that, almost surely, the one-sample BMTM maximum likelihood estimator (MLE) exists, is unique, and corresponds to a fully observed tree. Moreover, we provide a polynomial time algorithm for its exact computation. We also consider the MLE over all possible BMTM tree structures in the one-sample case and show that it exists almost surely, that it coincides with the MLE over diagonally dominant M-matrices, and that it admits a unique closed-form solution that corresponds to a path graph. Finally, we explore statistical properties of the one-sample BMTM MLE through numerical experiments.

\end{abstract}

\maketitle





\section{Introduction}

First introduced by Felsenstein \cite{felsenstein_maximum-likelihood_1973}, a Brownian Motion Tree Model (BMTM) is a statistical model for the evolution of continuous traits. Beginning with trees over anatomical characteristics \cite{Freckleton_2006, cooper2010body}, such as the size of a tusk or a skull, BMTMs have long been used to test for selective pressure and often serve as a null model for evolution under genetic drift \cite{Schraiber_2013}. Recently, BMTMs have been used to represent continuous molecular traits, such as gene expression profiles \cite{brawand2011evolution}. They have also seen application outside of biology, for example in Internet network tomography \cite{Nowak2010, Nowak2004}.

Given a tree structure, a BMTM defines a set of mean-zero Gaussian distributions over the leaf nodes of the tree. Distributions in this set are parameterized by a set of non-negative edge lengths over the tree:

\begin{definition}\label{def:bmtm}
Consider a rooted tree $T = (V, E)$ with vertices $V$ and directed edges $E$ pointing from the root towards the leaves, where $0\in V$ is a degree-1 root, all leaf nodes are of degree 1, and all other nodes are of degree 3 or greater. We construct the following linear structural equation model $\mathcal{F}(T)$ over random variables \( W_i \) and parameters $\theta_i \geq 0$ for $i \in V$:
\begin{align*}
    W_i = \left\{
    \begin{alignedat}{2}
    0&, & \quad & \text{if } i = 0,\\
    W_{\pi(i)} + \varepsilon_i&, &\quad \varepsilon_i \overset{\text{i.i.d.}}{\sim} N(0, \theta_i), \quad & \text{if } i \in V \setminus \{ 0 \},
    \end{alignedat}
    \right.
\end{align*}
where \( {\pi(i)} \) denotes the (unique) parent of $i$ in $T$, and $\varepsilon_i \sim N(0, \theta_i)$ is a mean-zero Gaussian with ${\rm var}(\varepsilon_i)=\theta_i$ (independent of $W_{\pi(i)}$).
The \textbf{Brownian Motion Tree Model} (\textbf{BMTM}) $\mathcal{B}(T)$ is defined as the set of marginal distributions over the $d$ leaf nodes in $\mathcal{F}(T)$.
\end{definition}

Owing to their origin in phylogenetics, in BMTMs, nodes are often interpreted as populations of species and edge lengths as time values, respectively. For instance, a BMTM may be used to model the evolution of the size of several feline species. The root node in the tree would correspond to the size of a common, now possibly extinct cat ancestor, while the leaves would correspond to the sizes of extant descendent feline species. Assuming a constant evolutionary rate, any edge length then represents a period of coevolution for all species below that edge.

As discussed in \cite{zwiernik2016maximum}, BMTMs are linear Gaussian covariance models, and the structure of their covariance matrices is well-known: For a subset $U\subseteq \{1,\ldots,d\} = [d]$ denote by $e_U$ the vector in $\R^d$ such that $(e_U)_i=1$ if $i\in U$ and $(e_U)_i=0$ if $i\notin U$. Then, the covariance matrices of $\mathcal{B}(T)$ are all $d\times d$ matrices of the form
\begin{align}\label{covfromedge}
    \Sigma_{\theta} \;:=\;  \sum_{i \in V} \theta_{i} \, e_{\de(i)}e_{\de(i)}^\top,
\end{align}
where $\de(i)$ for $i\in V$ denotes the set of leaves of $T$ that are descendants of $i$, include $i$ itself.

\begin{figure}%
    \centering
    \includegraphics[width=.48\linewidth]{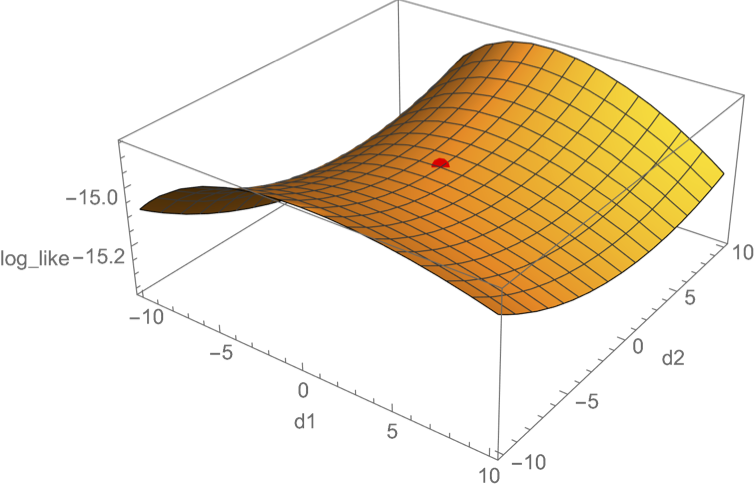}\hfill
    \includegraphics[width=.48\linewidth]{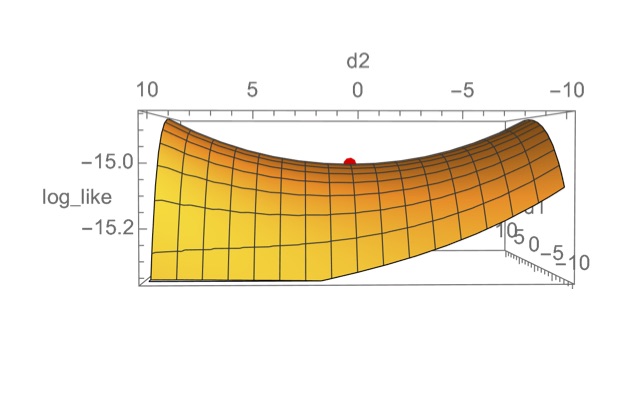}\hfill
    \caption{Shown is part of the likelihood landscape of the BMTM with 3 leaf nodes and 1 latent node over data $x = \{6, -8, 8\}$ from two views (left, right). A stationary point is marked in red. The lower axes are lines cutting through the 4D parameter space of the BMTM edge variances, while the vertical axis is the log likelihood. The $d1$ axis was chosen arbitrarily, and $d2$ axis was calculated as in Lemma~\ref{notlocalmax}. The existence of stationary points impedes the computation of global and local maxima using common search algorithms. To solve this, we present a polynomial time algorithm for computing the MLE in Section~\ref{sec:compute}.}%
    \label{fig:stationary}%
\end{figure}

The focus of this paper is maximum likelihood (ML) estimation of the BMTM parameters $\theta_i$, $i\in V$, given only one sample of data ($n=1$). Such a regime is of practical importance due to the single sample nature of many phylogenetic and biological datasets \cite{kumar2016limitations}. Given their performance in similar high-dimensional regimes \cite{zwiernik2016maximum}, the ML parameters are a natural estimator for this problem. To compute the ML parameters, one optimizes the BMTM's likelihood function. Every distribution in $\mathcal{B}(T)$ is multivariate Gaussian, and the log-likelihood function is, up to a constant, given by:
\begin{align*}
    \ell_{S}(\Sigma_{\theta}^{-1}) &\;=\; \frac{n}{2}\log\det(\Sigma_{\theta}^{-1}) - \frac{n}{2}\tr(S\Sigma_{\theta}^{-1}),
\end{align*}
where $S$ denotes the sample covariance matrix over $n$ samples and $\theta \in \mathcal{B}(T)$ the BMTM parameters.

Note that this log-likelihood function is a concave function of positive definite precision matrices $\Theta_\theta = {\Sigma_\theta}^{-1}$. However, this property is not preserved when viewing it as a function of covariance matrices $\Sigma_\theta$ or when considering the restriction to precision matrices $\Sigma_{\theta}^{-1}$ that are compatible with our BMTM. The latter statement follows from the fact that our BMTM constraints are linear in covariance space, but decidely nonlinear when mapped to the set of precision matrices.  
Indeed, with these points in mind, the likelihood landscape may contain spurious stationary points that do not correspond to global maxima; see Figure \ref{fig:stationary} for an example.
This renders the problem of computing the global maximum challenging as commonly used local search algorithms might get stuck in these stationary points.

For linear Gaussian covariance models, the maximum likelihood estimate (MLE) is known to exist when $n\geq d$ and the maximization problem is concave with high probability as long as $n$ is sufficiently larger than $d$ \cite{zwiernik2016maximum}. For certain linear covariance models the MLE may exist for much smaller sample sizes. In fact, it was previously shown that the MLE exists for BMTMs when $n\geq 2$ since they obey the so-called MTP$_2$ restriction studied in~\cite{lauritzen2019maximum}. In this paper, we prove that the BMTM MLE exists and is unique in the $n=1$ case. 

In addition to existence, we show that the MLE has special structure in the $n=1$ case. Note that the definition of BMTMs allows zero-valued edges. Specifically, if $\theta_i = 0$ for some $i\in V$, then $W_i$ is deterministically equal to the value of $W_{\pi(i)}$. Informally, placing a zero along an edge ``removes'' a latent node from our tree by contracting an edge. We show that the BMTM MLE in the $n=1$ case effectively has no latent nodes. Trees of this form are called ``fully observed.''

\begin{definition}\label{fully-observed-def}
Given a tree $T = (V, E)$ and variances $\theta \in \mathcal{B}(T)$, we call a node $i\in V$ \textbf{determined} if it is a leaf node or the root. We call a node $i\in V$ \textbf{observed} if it is connected to a determined node by a path of 0-variance edges. We call $\theta \in \mathcal{B}(T)$ \textbf{fully-observed} if $i$ is observed for all  $ i \in V$ and no two determined nodes are connected by a path of 0-variance edges.
\end{definition}

We are now ready to state the main result of our work. 
\begin{theorem}[BMTM MLE is a fully-observed tree]
Given a tree $T = (V, E)$ as in Definition~\ref{def:bmtm}, then the MLE $\hat{\theta}$ of the BMTM $\mathcal{B}(T)$ exists almost surely for sample size 1, in which case it is unique almost surely, and corresponds to a fully-observed tree.
\end{theorem}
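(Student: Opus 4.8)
The plan is to work directly with the log-likelihood $\ell(\theta) = -\tfrac12\log\det\Sigma_\theta - \tfrac12\, x^\top\Sigma_\theta^{-1}x$ as a function of the edge variances on the cone $\Theta = \{\theta\ge 0 : \Sigma_\theta\succ 0\}$, where $x\in\R^d$ is the single sample and $\Sigma_\theta$ is given by \eqref{covfromedge}, and to prove existence, the fully-observed structure, and uniqueness in that order.

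First I would establish existence by a coercivity-plus-barrier argument, and this is where the ``almost surely'' enters. Because the leaf edges satisfy $e_{\de(j)}=e_j$, the form $\Sigma_\theta$ is positive definite as soon as every root-to-leaf path has positive total variance, and $\det\Sigma_\theta\to\infty$ whenever some $\theta_i\to\infty$; together with $x^\top\Sigma_\theta^{-1}x\ge 0$ this makes $\ell\to-\infty$ as $\|\theta\|\to\infty$. The only other escape is toward the boundary where $\Sigma_\theta$ degenerates. Along any approach $\Sigma_\theta\to\Sigma_0$ with $\Sigma_0$ singular, $-\tfrac12\log\det\Sigma_\theta\to+\infty$ while $x^\top\Sigma_\theta^{-1}x$ stays bounded precisely when $x\in\mathrm{range}(\Sigma_0)$; hence $\ell$ is unbounded above exactly when $x\in\mathrm{range}(\Sigma_0)=\mathrm{span}\{e_{\de(i)}:\theta_i^0>0\}$ for some reachable singular boundary matrix $\Sigma_0$. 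Each such range is the span of the clade-vectors of an edge-set that fails to span $\R^d$, so it is a proper subspace, and there are only finitely many; their union is Lebesgue-null. Thus for almost every $x$ the relevant superlevel sets of $\ell$ are compact, so a maximizer exists and lies where $\Sigma_\theta\succ0$.

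Next I would pin down the support of the maximizer. The first-order conditions read
\[
\partial_{\theta_i}\ell(\theta) \;=\; \tfrac12\big[(e_{\de(i)}^\top y)^2 - e_{\de(i)}^\top\Sigma_\theta^{-1}e_{\de(i)}\big], \qquad y=\Sigma_\theta^{-1}x,
\]
so active edges satisfy $(e_{\de(i)}^\top y)^2 = e_{\de(i)}^\top\Sigma_\theta^{-1}e_{\de(i)}$ and inactive ones the corresponding inequality. Positive definiteness at the maximizer already forbids an all-zero path between two determined nodes, since such a path would place $e_j$ or $e_j-e_k$ in $\ker\Sigma_\theta$; this gives the second clause of Definition~\ref{fully-observed-def} for free. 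Consequently any feasible configuration that is not fully-observed must fail only through having a genuine latent node, i.e.\ an internal node separated from every leaf and the root by positive-variance edges. The point is that the stationarity conditions can hold at such a configuration while it is still not a local maximum: I would invoke Lemma~\ref{notlocalmax}, which produces an explicit second-order ascent direction (the $d2$ direction of Figure~\ref{fig:stationary}) along which $\ell$ strictly increases. Since a global maximizer exists and no non-fully-observed point is even a local maximum, the maximizer must be fully-observed, which one checks has exactly $d$ positive edges.

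Finally, for uniqueness I would restrict to a fixed fully-observed contraction of $T$. There the $d$ active clade-indicator vectors are linearly independent, the model becomes an ordinary latent-free Gaussian tree model --- a linear exponential family whose log-likelihood is strictly concave in the precision matrix --- so the restricted MLE is unique and available in closed form from $x$. As there are only finitely many fully-observed contractions of $T$, it only remains to exclude ties between their optimal values, which occur on a Lebesgue-null set of $x$; off this set the global maximizer is unique. The main obstacle is the middle step: $\ell$ is concave in the precision matrix but the BMTM constraints are nonlinear there, so genuine spurious stationary points exist (Figure~\ref{fig:stationary}), and the crux is to exhibit, at every configuration retaining a latent node, a curve along which $\ell$ strictly increases to second order --- precisely the ascent lemma that also drives the exact algorithm of Section~\ref{sec:compute}.
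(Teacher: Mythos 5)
Your overall architecture (existence via compact superlevel sets, full observability via a second-order ascent argument, uniqueness via finitely many fully-observed contractions and a measure-zero tie locus) mirrors the paper's, but the middle step has a genuine gap. Lemma~\ref{notlocalmax} applies only to points with \emph{all} $\theta_i>0$: it produces an ascent direction at strictly interior points. A configuration that fails to be fully observed in the sense of Definition~\ref{fully-observed-def} will in general already sit on the boundary with several zero edges --- for instance, in a four-leaf balanced tree one leaf edge may vanish (so that leaf's parent is observed) while a different internal node keeps all incident edges positive and has no zero path to any determined node --- and at such points Lemma~\ref{notlocalmax} says nothing. Your sentence ``no non-fully-observed point is even a local maximum'' is therefore exactly the statement that still needs proof, and the reduction from ``some node is unobserved'' to ``a strictly positive interior point of some smaller BMTM'' is the hard part of the theorem. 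The paper carries it out via Lemma~\ref{contractlatent} (contracting zero internal edges), Corollary~\ref{zeronearleaforroot}, and the conditional-independence factorization in the proof of Theorem~\ref{thm:bmtm-fully-observed}, which isolates the maximal latent component $L$ together with its observed boundary $L_C$ and shows that the restricted parameters are themselves an MLE of a sub-BMTM with no zero edge adjacent to its root or leaves, contradicting Corollary~\ref{zeronearleaforroot}. None of that machinery is supplied by your proposal.

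Two smaller issues. For existence, your coercivity sketch does not cover sequences that simultaneously send some coordinates to infinity and degenerate $\Sigma_\theta$ (the claim that $\det\Sigma_\theta\to\infty$ whenever some $\theta_i\to\infty$ fails if other coordinates vanish), so the dichotomy ``either $\|\theta\|\to\infty$ or $\Sigma_\theta\to\Sigma_0$ singular'' does not exhaust the escapes; the paper avoids this by bounding $\ell_x$ over the convex relaxation $\mathbb{D}^d$, where the explicit maximizer of Lemma~\ref{diagonal} furnishes a compact superlevel set. Your route is probably repairable and would yield a different (still null) exceptional set, but as written it is not airtight. For uniqueness, asserting that ties between distinct fully-observed contractions occur on a Lebesgue-null set presupposes that the two polynomials $\prod_{(i,j)\in E/S}(x_i-x_j)^2$ and $\prod_{(i,j)\in E/S'}(x_i-x_j)^2$ are not identical for $S\neq S'$; this is Lemma~\ref{contracteddifferent}, it relies on the degree-at-least-three condition on internal nodes, and it should not be taken for granted.
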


Often, researchers are interested not only in determining the maximum likelihood estimator of edge lengths given a known tree structure, but also in finding the best possible BMTM tree structure given some data. In general, this problem is known to be NP hard \cite{Roch_2006}. One common workaround is to leverage Felsenstein's tree pruning algorithm \cite{felsenstein_maximum-likelihood_1973, felsenstein_evolutionary_1981}, a dynamic programming procedure which estimates the MLE for specific BMTMs while traversing through tree space. With this in mind, another key result of our work is to show that the 1-sample MLE over the union of all BMTM tree structures exists, has the form of a path graph, and admits a closed-form solution.

The paper is organized as follows. In Section~\ref{sec:ddm}, we derive the existence, uniqueness, and structure of the one-sample MLE of Diagonally Dominant Gaussian Models (DDM). We then show that the DDM MLE is equivalent to the MLE of the union of all BMTMs over a fixed number of leaf nodes. In Section~\ref{sec:exist}, we leverage properties of the likelihood function to conclude the existence of the MLE for a fixed BMTM when $n = 1$ with probability 1. The central difficulty in proving our main theorem is the characterization of the BMTM MLE. In Section~\ref{sec:fo}, we prove that, when it exists, the BMTM MLE for $n = 1$ is fully observed. In Section~\ref{sec:unique}, we show that, when it exists, the BMTM MLE is unique almost surely. In Section~\ref{sec:compute}, we show that, when restricted to a single fully observed tree structure, the MLE of a BMTM has a simple closed form. We then present a dynamic programming algorithm for exactly computing the one-sample MLE of a BMTM.
In Section~\ref{sec:experiments}, we compare the empirical performance of the BMTM MLE to other covariance estimators and tree reconstruction methods.
In Appendix~\ref{othermodels}, we discuss two related classes of Gaussian models, showing that our results imply existence of the MLE for sample size 1 in contrast Brownian motion models and inexistence in the larger class of positive latent Gaussian trees for $d \geq 3$. Appendix~\ref{one-to-one-ddm-lgmrf-proof}, contains an auxiliary proof.

\subsection{Notation}

By convention, $n$ refers to the number of data samples and $d$ refers to the dimension of our model. We denote the set of $d\times d$ symmetric matrices as $\S^{d}$, the set of $d\times d$ symmetric matrices with zeros on the diagonal as $\S^{d}_0$, and the set of $d\times d$ positive definite matrices as $\S^{d}_{\succ 0}$. We write $\{1, ..., d\}$ as $[d]$ and $\{0, ..., d\}$ as $[d]^{0}$. When considering \( (d+1) \)-dimensional objects, we index the coordinates starting from zero, so that, for example, \( \mathbb{R}^{d + 1} \cong \mathbb{R}^{[d]^0} \).
We denote the extended real line \( \mathbb{R} \cup \{-\infty, +\infty\} \) by \( \overline{\mathbb{R}} \).

Given a tree $T = (V, E)$ with vertices $V$, edges $E$, and $d$ leaf nodes, $\mathcal{B}(T) = \mathbb{R}_{\geq 0}^{E}$ is the set of non-negative edge weight vectors of the tree. Each $\theta \in \mathcal{B}(T)$ is indexed by members of $V\setminus\{0\}$ and identifies a covariance matrix $\Sigma_\theta \in \S^{d}_{\succ 0}$ according to the construction in  \eqref{covfromedge}. Given that we restrict ourselves to the $n=1$ case, we write the log-likelihood function up to a constant in terms of the data vector $x=(x_1,\ldots,x_d)$:
\begin{align*}
    \ell_{x}(\Sigma_{\theta}^{-1}) &\;=\; \frac{1}{2}\log\det(\Sigma_{\theta}^{-1}) - \frac{1}{2}x^\top \Sigma_{\theta}^{-1} x
\end{align*}

\section{Diagonally dominant M-matrices}\label{sec:ddm}

The lack of concavity of $ \ell_x(\Sigma_\theta^{-1}) $ on $ \mathcal{B}(T) $ suggests considering relaxations of the constraint set, i.e., a set that includes all precision matrices of $\mathcal{B}(T)$ that is more amenable to efficient computation and mathematical reasoning.
In this section, we discuss diagonally dominant M-matrices (DDMs), a convex relaxation of BMTMs. We show that the one-sample MLE for DDMs exists and is a particular BMTM. In Section~\ref{sec:exist}, we use this  result to show the existence of the one-sample MLE for BMTMs. 

\begin{definition}
We define the space of $d\times d$ \textbf{Diagonally Dominant M-matrices} (\textbf{DDMs}) as 
\[
\mathbb{D}^{d} = \{K \in \mathbb{S}_{\succ 0}^d \, | \, K_{ij} \leq 0, \, \forall i\neq j; \sum_{j=1}^{d} K_{ij} \geq 0, \forall i \in [d] \}.\] 

We label the space of mean-zero Gaussian distributions $N(0, K)$ with $K\in \mathbb{D}^{d}$ as a \textbf{Diagonally Dominant Gaussian Model (DDGM)}.

\end{definition}

Formally, the link between DDMs and the covariance matrices that arise in BMTMs has been described in detail in \cite{ultrametric}. Directly from \cite{sturmfels2019brownian}, we conclude the following result.
\begin{prop}[Theorem~2.6, \cite{sturmfels2019brownian}]\label{prop:DDBMT}
If $\Sigma_\theta\in \mathbb{S}^d_{\succ 0}$ is a covariance matrix in a Brownian motion tree model, then $\Sigma_\theta^{-1}$ is a diagonally dominant M-matrix. 
\end{prop}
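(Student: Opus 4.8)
The plan is to realize $\Sigma_\theta$ as the covariance of the leaf coordinates of the Gaussian vector $(W_i)_{i\in V}$ defined by the structural equations in Definition~\ref{def:bmtm}, and then to read off $\Sigma_\theta^{-1}$ as a Schur complement of the explicitly computable precision matrix of the full vector. Since we are given $\Sigma_\theta\in\S^d_{\succ 0}$, its inverse is automatically positive definite, so it remains only to verify the two defining M-matrix conditions: that $\Sigma_\theta^{-1}$ has nonpositive off-diagonal entries and nonnegative row sums.

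First I would compute the joint precision of the non-root nodes $V'=V\setminus\{0\}$. Writing the structural equations in matrix form as $W=BW+\varepsilon$, where $B_{i,\pi(i)}=1$ records the parent relation and $\varepsilon\sim N(0,D)$ with $D=\mathrm{diag}(\theta_i)_{i\in V'}$, the precision of $(W_i)_{i\in V'}$ is $K=(I-B)^\top D^{-1}(I-B)$. A direct entrywise computation shows that $K$ is the generalized graph Laplacian of the tree with weight $1/\theta_j$ on the edge $\{j,\pi(j)\}$, grounded at the root: the diagonal entry $K_{jj}=1/\theta_j+\sum_{c:\pi(c)=j}1/\theta_c$ is positive, the off-diagonal entry $K_{jk}$ equals $-1/\theta_j$ when $k=\pi(j)$ (and symmetrically) and $0$ otherwise, and every row sum vanishes except at the unique child of the root, where it equals $1/\theta_j>0$. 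Hence $K$ is itself a diagonally dominant M-matrix on $V'$.

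Next, since the leaves are exactly the observed coordinates $O$ and the internal nodes $H\subseteq V'$ are marginalized out, the leaf-marginal precision is the Schur complement $\Sigma_\theta^{-1}=K_{OO}-K_{OH}K_{HH}^{-1}K_{HO}$. The core of the argument is to show this Schur complement again satisfies the two M-matrix conditions, which I would carry out by eliminating internal nodes one at a time (Gaussian elimination, or Kron reduction): removing a node $h$ replaces $K_{ij}$ by $K_{ij}-K_{ih}K_{hj}/K_{hh}$. Because every off-diagonal entry is $\le 0$ and every diagonal entry is $>0$ (each intermediate matrix is a Schur complement of a positive definite matrix, hence positive definite), the subtracted term $K_{ih}K_{hj}/K_{hh}$ is $\ge 0$, so off-diagonal nonpositivity is preserved. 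For the row sums, a short computation gives that the new row sum of $i$ equals $(\text{old row sum of }i)-(K_{ih}/K_{hh})(\text{row sum of }h)$; since $K_{ih}\le 0$, $K_{hh}>0$, and the row sum of $h$ is $\ge 0$, the correction is nonnegative and the row sums stay nonnegative. Iterating over all of $H$ yields the claim.

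The main obstacle is precisely this last invariance: verifying that weak diagonal dominance (nonnegative row sums), and not merely the sign pattern, survives the Schur complement. The one-node-at-a-time row-sum identity above is what makes it go through cleanly, and it is essentially the statement that Kron reduction sends grounded graph Laplacians to grounded graph Laplacians. As an alternative, entirely matrix-analytic route, I would observe that $(\Sigma_\theta)_{ab}=\sum_{i:a,b\in\de(i)}\theta_i$ is the depth of the most recent common ancestor of the leaves $a,b$, so $\Sigma_\theta$ is a (strictly, by positive definiteness) ultrametric matrix; the classical theorem that the inverse of a strictly ultrametric matrix is a diagonally dominant M-matrix then applies directly.
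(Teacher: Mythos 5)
Your proposal is correct, but note that the paper does not actually prove this proposition: it imports it verbatim as Theorem~2.6 of \cite{sturmfels2019brownian}, so you have supplied a genuine self-contained argument where the authors rely on a citation. Your route is the natural one and all the key computations check out: $K=(I-B)^\top D^{-1}(I-B)$ is indeed the grounded Laplacian of the tree with edge weights $1/\theta_j$ (diagonal $1/\theta_j+\sum_{c:\pi(c)=j}1/\theta_c$, off-diagonal $-1/\theta_j$ on the edge to the parent, row sums zero except $1/\theta_j>0$ at the unique child of the root), the leaf precision is the Schur complement onto the leaves, and your one-node-at-a-time row-sum identity, namely that eliminating $h$ changes the row sum of $i$ by $-(K_{ih}/K_{hh})\cdot(\text{row sum of }h)\geq 0$, is exactly the invariant that makes weak diagonal dominance survive Kron reduction; this is the step most blind attempts fumble. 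Your alternative via ultrametric matrices is essentially the classical route (Mart{\'i}nez--Michon--San Mart{\'i}n and Nabben--Varga) and is close in spirit to what the cited reference builds on, since $(\Sigma_\theta)_{ab}=\sum_{i:\,a,b\in\de(i)}\theta_i$ is the weighted depth of the least common ancestor.

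One small point to tighten: your main argument tacitly assumes every $\theta_i>0$, since otherwise $D^{-1}$ does not exist and the joint precision $K$ is undefined; yet the hypothesis only requires $\Sigma_\theta\succ 0$, which is compatible with some $\theta_i=0$ (indeed the fully observed MLEs of this paper live on that boundary). This is repaired in one line: approximate $\theta$ by $\theta^{(n)}>0$ with $\theta^{(n)}\to\theta$, note $\Sigma_{\theta^{(n)}}^{-1}\to\Sigma_\theta^{-1}$ by continuity of inversion at the positive definite matrix $\Sigma_\theta$, and observe that the sign and row-sum conditions defining $\mathbb{D}^d$ (other than positive definiteness, which holds by hypothesis) are closed. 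The same caveat applies to the ultrametric route, where zero edge lengths can destroy \emph{strict} ultrametricity and force you to invoke the weaker ``nonsingular generalized ultrametric'' version of the theorem.
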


Thus, $\mathbb{D}^{d}$ includes the precision matrices of all BMTMs over $d$ leaves, and $\mathbb{B}^d \subset \mathbb{D}^{d}$, where $\mathbb{B}^d$ is the union of all BMTM precision matrices over trees with $d$ leaves. Note that $\mathbb{D}^{d}$ includes other precision matrices that are not supported on a tree. Since $\mathbb{D}^{d}$ is a convex set, the maximum likelihood problem over $\mathbb{D}^{d}$ is concave, so $\mathbb{D}^{d}$ is a natural relaxation of any $\mathcal{B}(T)$ for $T$ with $d$ leaves, or for $\mathbb{B}^d$.


\subsection{Connection to Laplacians and Squared Distance Matrices}\label{sec:connect-laplacian-squared}

In this section, we briefly explore the connection between DDGMs and other related classes of distributions. Namely, we show that all members of a DDGM can be conveniently reformulated into Laplacian-structured Gaussian Markov Random Fields (L-GMRFs), a class of Gaussian distributions with Laplacian constrained precision matrices. Leveraging existing work on the one-sample L-GMRF MLE, this connection immediately gives us the existence of the one-sample DDGM MLE. We also relate DDGMs to a new exponential family over squared distance matrices. This reparametrization will prove useful in our proof of the closed form of the DDGM MLE.

\begin{definition}[Weighted Laplacian]
\label{weightedlaplaciandef}
Given an undirected weighted graph with node set $V$, $|V| = d + 1 $, edges $E$, and a zero-indexed weight matrix $P\in \S^{d+1}_0$, the \textbf{weighted Laplacian} $L$ is defined as a $(d+1)\times (d+1)$ symmetric matrix:
\begin{align*}
    L = \operatorname{diag}(P \1) - P,
    \quad \text{that is, } \quad 
    L_{ij} = \left\{
    \begin{aligned}
        \sum_{k=0}^d P_{ik}, \quad & \text{if }
        i = j,\\
        -P_{ij}, \quad & \text{otherwise,}
    \end{aligned}\right.
\end{align*}
where $\1$ is the vector of ones.
\end{definition}

It is well-known that the weighted Laplacians for connected graphs correspond to the set
\begin{align*}
    \mathbb{L}^{d + 1} = \{ L \in \S^{d + 1} : L_{ij} \leq 0 \text{ for all } i \neq j, \, L \1 = 0, \, \operatorname{rank}(L) = d \}.
\end{align*}
This set of matrices gives rise to a class of constrained Gaussian distributions known as Laplacian-structured Gaussian Markov Random Fields \cite{ying2020does}.

\begin{definition}[\cite{ying2020does}]\label{def:ying}
Let $U_1 = \{x \in \R^{d + 1} : \1^\top x = 0 \} = (\operatorname{span}{\mathds{1}})^\perp$ be the subspace of vectors that sum to zero.
A \textbf{Laplacian-structured Gaussian Markov Random Field} (\textbf{L-GMRF}) is a random vector with parameters $(0, L)$ where $L \in \mathbb{L}^{d + 1}$ and with density $f_L : U_1 \to \R$ such that
\begin{equation*}
    f_L(x) = (2 \pi)^{-d/2} \, \operatorname{det}^\ast(L)^{1/2} \, \exp(-\tfrac{1}{2} x^\top L x),
\end{equation*}
where $\operatorname{det}^\ast(L)$ denotes the pseudo determinant defined as the product of nonzero eigenvalues of $L$.
\end{definition}

We state the following result on the relation between L-GMFRs and all members of a DDGM. The proof of this lemma is deferred to Appendix~\ref{one-to-one-ddm-lgmrf-proof}.

\begin{lemma}
\label{one-to-one-ddm-lgmrf}

The mapping $L: \mathbb{D}^{d} \to \mathbb{L}^{d+1}$ defined as

\begin{equation*}
    \label{eq:ldefinition}
    (L(K))_{ij} =
    \left\{
    \begin{array}{llll}
        &K_{ij},& \quad &i \neq 0 \text{ and } j \neq 0,\\
        &-\sum_{\ell = 1}^d K_{i\ell},& \quad &j = 0 \text{ and } i \neq 0,\\
        &-\sum_{k = 1}^d K_{kj},& \quad &i = 0 \text{ and } j \neq 0,\\
        &\sum_{k, \ell = 1}^d K_{k\ell},& \quad &j = 0 \text{ and } i = 0.\\
    \end{array}
    \right.
\end{equation*}
is a bijection between the precision matrices of Diagonally Dominant Gaussian Models $\mathbb{D}^{d}$ and those of mean-zero L-GMRFs $\mathbb{L}^{d+1}$.
\end{lemma}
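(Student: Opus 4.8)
The plan is to exhibit an explicit inverse and verify that both maps respect the defining constraints, the only substantive point being how the rank condition transfers. The candidate inverse $M\colon \mathbb{L}^{d+1}\to\mathbb{D}^{d}$ simply deletes the zeroth row and column: $(M(L))_{ij}=L_{ij}$ for $i,j\in[d]$. The first task is to check that $L(K)\in\mathbb{L}^{d+1}$ whenever $K\in\mathbb{D}^{d}$. Symmetry is immediate from symmetry of $K$ (for the border entries one uses $\sum_\ell K_{i\ell}=\sum_k K_{ki}$, so that $(L(K))_{i0}=(L(K))_{0i}$). The row sums vanish by construction: for $i\neq 0$ one has $(L(K))_{i0}+\sum_{j=1}^d (L(K))_{ij}=-\sum_\ell K_{i\ell}+\sum_j K_{ij}=0$, and the $i=0$ row sum telescopes to $0$ as well, giving $L(K)\1=0$. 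For the off-diagonal sign constraint, the entries with $i,j\neq 0$ equal $K_{ij}\le 0$, while the border entries $(L(K))_{i0}=-\sum_{\ell=1}^d K_{i\ell}$ are $\le 0$ precisely because of the diagonal-dominance inequality $\sum_\ell K_{i\ell}\ge 0$ defining $\mathbb{D}^{d}$; this is exactly the point at which the two constraint sets are matched up.

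It remains to show $\operatorname{rank}(L(K))=d$, equivalently $\ker L(K)=\operatorname{span}(\1)$. The key observation is a gauge reduction: since $L(K)\1=0$, the quadratic form $v^\top L(K)v$ depends only on $v$ modulo $\operatorname{span}(\1)$, so I may take the representative with $v_0=0$, say $v=(0,w)$ with $w\in\R^d$. For such $v$ all terms involving index $0$ drop out and $v^\top L(K)v=\sum_{i,j\geq 1} w_i K_{ij} w_j=w^\top K w$. Now if $L(K)v=0$, writing $v=v_0\1+v'$ with $v'_0=0$ gives $L(K)v'=0$, hence $0=v'^\top L(K)v'=w^\top K w$; since $K\succ 0$ this forces $w=0$ and thus $v\in\operatorname{span}(\1)$. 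Therefore the kernel is one-dimensional, the rank is $d$, and $L(K)\in\mathbb{L}^{d+1}$.

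For the reverse direction, given $L\in\mathbb{L}^{d+1}$ set $K=M(L)$. The constraints $K_{ij}\le 0$ ($i\neq j$) are inherited directly, and diagonal dominance follows from $\sum_{j=1}^d K_{ij}=\sum_{j=1}^d L_{ij}=-L_{i0}\ge 0$, using $L\1=0$ and $L_{i0}\le 0$. Positive definiteness of $K$ mirrors the argument above: a symmetric matrix with nonpositive off-diagonals and zero row sums is a weighted graph Laplacian, hence $L\succeq 0$; since $\operatorname{rank}(L)=d$ one has $\ker L=\operatorname{span}(\1)$, and for any $w\neq 0$ the vector $v=(0,w)\notin\operatorname{span}(\1)$ satisfies $w^\top K w=v^\top L v>0$, so $K\succ 0$ and $K\in\mathbb{D}^{d}$. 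Finally, the two maps are mutually inverse: $M(L(K))=K$ holds by construction, and for the other composite the border entries are recovered from the interior via the zero-row-sum identities (e.g.\ $(L(M(L)))_{i0}=-\sum_\ell L_{i\ell}=L_{i0}$), giving $L(M(L))=L$. The main obstacle is not any single hard step but ensuring the rank-$d$ constraint transfers correctly in both directions; the gauge reduction $v^\top L(K)v=w^\top K w$ is what makes this transparent, converting the rank condition into the positive-definiteness of $K$ and vice versa.
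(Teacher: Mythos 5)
Your proof is correct, and it takes a more elementary route than the paper's. The paper establishes the correspondence by tracking transformations of Gaussian random vectors: it embeds a sample $x\sim N(0,K^{-1})$ into $\R^{d+1}$ via $\Pi^\top x$, projects onto $(\operatorname{span}\1)^\perp$, and reads off $L(K)=\bs P_0^\top \Pi^\top K \Pi\, \bs P_0$ as the precision matrix of the resulting degenerate Gaussian, with the membership $L(K)\in\mathbb{L}^{d+1}$ and the inverse (deleting the zeroth row and column) asserted to follow from this factorization. That route has the side benefit of exhibiting the explicit sample-level correspondence $X\mapsto \bar X-\tfrac{\sum_i X_i}{d+1}\1$ that the main text uses when relating DDGM and L-GMRF distributions. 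Your argument instead verifies everything entrywise and handles the one genuinely nontrivial transfer --- rank $d$ of $L(K)$ versus positive definiteness of $K$ --- via the identity $v^\top L(K)v=w^\top Kw$ for $v=(0,w)$ together with $L(K)\1=0$; this is exactly the paper's factorization evaluated on the gauge slice $v_0=0$, but made explicit. Your version is more complete on the points the paper leaves implicit (the sign constraints, the observation that diagonal dominance of $K$ is precisely nonpositivity of the border entries of $L(K)$, and the two compositions being the identity), at the cost of not producing the distributional correspondence. Both are valid; yours would serve as a self-contained replacement for the appendix proof.
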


Another object of interest are \textbf{squared distance matrices} $D(x) \in \S_0^{d + 1}$.
For $x \in \mathbb{R}^d$, these matrices are defined as 
\begin{equation*}
    D(x)_{ij} = (x_i - x_j)^2, \quad 0 \leq i, j \leq d, \quad \text{where } x_0 = 0.
\end{equation*}
Squared distance matrices correspond to considering a different set of sufficient statistics for either L-GMRF or DDM-constrained Gaussian distributions, where, in the latter case, we set $X_0 = 0$.
In fact, these two distributions only differ by shifting their samples by a multiple of the all-ones vector; as specified in Appendix~\ref{one-to-one-ddm-lgmrf-proof}, a $d$-length DDGM sample $X$ is mapped to a $d+1$-length L-GMRF sample $\bar{X}-\frac{\sum_{i}X_i}{d+1}\1$ where $\bar{X}$ is an extension of $X$ with $X_0 = 0$. Thus, ignoring the first L-GMRF dimension, both $K$ and $L(K)$ give rise to the same distribution over squared distance matrices. Put differently, given some DDM $K$, a DDGM sample $Y \sim N(0, K)$, and a L-GMRF sample $\bar{Z} \sim LGMRF(0, L(K))$, we have that the distributions $D(Y)$ and $D(Z)$ are the same, where $Z$ is obtained by removing the first row and column of $\bar{Z}$.

More precisely, given some precision matrix $K$, the Gaussian distribution $N(0, K^{-1})$ is an instance of an exponential family \cite{brownexponential}. Its sufficient statistics are $-S_{ij}(x)/2 = -x_i x_j/2$, and its canonical parameter is $K$, which can be seen from writing down the log-likelihood as
\begin{equation*}
    \ell_x(K) = -\tfrac{d}{2}\log(2\pi)+\left\langle K, -\tfrac{1}{2} S(x) \right\rangle - \left(-\tfrac{1}{2} \log\det(K)\right).
\end{equation*}
Defining a Gaussian distribution in terms of the squared distance matrix $D(x)$ and introducing the inner product
\begin{equation*}
    \llangle P, D \rrangle = \frac{1}{2}\sum_{i, j = 0}^{d} P_{ij} D_{ij}, \quad P, D \in \S_0^{d + 1}
\end{equation*}
gives rise to different canonical parameters $P = P(K)$ such that the associated likelihood is preserved, namely
\begin{equation} \label{sqdist}
    \ell_x(K) = \frac{1}{(2 \pi)^{d/2}} 
    \exp\Big(\llangle P(K), -D \rrangle - A(P(K)) \Big),
\end{equation}
and a log-partition function $A(P)$ with $A(P(K)) = -\log\det(K)/2$.
In particular, $P(K)$ is a linear transformation of the associated precision matrix given by the Fiedler transform, defined as follows.

\begin{definition}
\label{zw}
Given a $d \times d$ diagonally dominant M-matrix $K$, the \textbf{Fiedler transform} of~$K$ (e.g. \cite{sturmfels2019brownian}) is the matrix $P\in \mathbb{S}^{d+1}_0$ defined for each $0\leq i\leq j\leq d$ by:
\begin{align*}
    (P(K))_{ij} = \left\{\begin{array}{ll}
        \sum_{k=1}^{d} K_{kj}, & \text{if } 0=i  < j\leq d,\\
        -K_{ij}, & \text{if } 0<i<j\leq d,\\
        0 & \text{if } 0\leq i = j\leq d.
        \end{array}\right.
\end{align*}
The inverse of the Fiedler transform is:
$$
K_{ij}=\begin{cases}\sum_{k=0}^d P_{ik} & \mbox{if } 1\leq i=j\leq d,\\
-P_{ij} & \mbox{if } 1\leq i<j\leq d.
\end{cases}
$$
\end{definition}


The reparametrization \eqref{sqdist} in terms of $P$ and the connection to Laplacian matrices give a useful reformulation of the determinant $\det(K)$, which appears in the definition of the likelihood function and corresponds to the log-partition function of the associated Gaussian distribution. To see this, we first restate a well-known result on weighted Laplacians.

\begin{theorem}[Weighted Matrix-Tree Theorem, \cite{duval2009simplicial}]
\label{weightedtreethm}
For $1 \leq i \leq n$, let $L_{i}$ be the reduced weighted Laplacian obtained from a $d\times d$ weighted Laplacian $L$ with weights $P$ by deleting the $i$-th row and $i$-th column of $L$. Then,

\begin{align*}
    \det L_i = \sum_{T \in \mathscr{S}} \prod_{(i, j) \in T} P_{ij},
\end{align*}

\noindent where $\mathscr{S}$ is the set of all spanning trees of the $d$ node complete graph.
\end{theorem}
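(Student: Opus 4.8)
The plan is to prove this the standard way, via the factorization of the weighted Laplacian through a signed incidence matrix combined with the Cauchy--Binet formula; this is the classical route to the matrix-tree theorem and adapts verbatim to arbitrary edge weights $P_{ij}$.

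First I would fix an arbitrary orientation of each edge of the complete graph on the $d$ nodes and form the signed incidence matrix $B \in \R^{d \times m}$, where $m = \binom{d}{2}$ is the number of edges: the column of $B$ indexed by an edge $e = (a,b)$ oriented from $a$ to $b$ has entry $+1$ in row $a$, entry $-1$ in row $b$, and zeros elsewhere. Let $W = \operatorname{diag}(P_e)_e$ be the diagonal matrix whose entries are the edge weights. A direct computation of $B W B^\top$ shows that its $(a,a)$ entry equals $\sum_{j} P_{aj}$ and its $(a,b)$ entry for $a\neq b$ equals $-P_{ab}$, so $L = B W B^\top$ reproduces exactly the weighted Laplacian of Definition~\ref{weightedlaplaciandef}. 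Deleting the $i$-th row of $B$ yields $\tilde B \in \R^{(d-1)\times m}$ with $L_i = \tilde B W \tilde B^\top$.

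Next I would apply Cauchy--Binet to $\det(\tilde B W \tilde B^\top)$. Since $W$ is diagonal, the weights factor out of each term, giving
\[
\det L_i \;=\; \sum_{S}\Big(\prod_{e\in S}P_e\Big)\,\big(\det \tilde B_S\big)^2 ,
\]
where $S$ ranges over the $(d-1)$-element subsets of the edge set and $\tilde B_S$ is the square submatrix of $\tilde B$ on the columns indexed by $S$. The crux of the argument --- and the step I expect to be the main obstacle --- is the combinatorial lemma that $(\det \tilde B_S)^2$ equals $1$ when the edges in $S$ form a spanning tree of the complete graph and equals $0$ otherwise. I would establish this by the usual two-part argument: if $S$ contains a cycle then the corresponding columns of $B$ are linearly dependent (the signed sum of columns around the cycle vanishes), so $\det\tilde B_S = 0$; and if $S$ is acyclic --- hence, having $d-1$ edges on $d$ vertices, a spanning tree --- then peeling off leaves one at a time orders the edges so that $\tilde B_S$ becomes, after suitable row and column permutations, triangular with $\pm 1$ on the diagonal, forcing $\det \tilde B_S = \pm 1$. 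This is exactly where the total unimodularity of the incidence matrix is used.

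Combining these facts, only the spanning-tree subsets $S$ survive in the Cauchy--Binet expansion, and each contributes precisely $\prod_{e\in S}P_e$, which yields
\[
\det L_i \;=\; \sum_{T\in\mathscr S}\prod_{(i,j)\in T}P_{ij},
\]
as claimed. Finally it is worth recording that the right-hand side does not depend on the deleted index $i$, so every diagonal minor of $L$ gives the same value; this is immediate from the formula, and also follows directly from the fact that all cofactors of a Laplacian coincide.
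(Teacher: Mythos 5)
The paper does not actually prove this statement: it is imported as a known result with a citation to \cite{duval2009simplicial}, so there is no in-paper argument to compare against. Your proof is the classical one --- factor the weighted Laplacian as $L = BWB^\top$ through a signed incidence matrix, delete a row, apply Cauchy--Binet, and use total unimodularity to show that $(\det \tilde B_S)^2$ is the indicator of $S$ being a spanning tree --- and it is correct and complete. The only step worth spelling out slightly more carefully is the leaf-peeling argument for a spanning tree $S$: since the row of the deleted vertex $i$ is absent from $\tilde B_S$, you must at each stage peel a leaf \emph{different from} $i$ (equivalently, root the tree at $i$ and order the non-root vertices so that each appears before its parent); this is always possible because a tree on at least two vertices has at least two leaves, and it is exactly what makes each successive row of the permuted $\tilde B_S$ have a single new $\pm 1$ pivot. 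With that detail noted, the argument is airtight, and your closing observation that the right-hand side is independent of the deleted index $i$ matches the way the paper later uses the theorem (deleting the row and column indexed by the root $0$).
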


Now, note that for any given $K \in \mathbb{D}^d$, $L(K)$ corresponds to the weighted Laplacian on a complete graph with weights given by the off-diagonal elements of $P(K)$. Furthermore, by construction, $K$ is a principal submatrix of $L(K)$, and so, $L(K)_1 = K$. Thus, Theorem~\ref{weightedtreethm} allows us to write the log-partition function in terms of $P$ as
\begin{equation*}
    A(P(K)) = -\frac{1}{2}\log\det(K) = \log \left( \sum_{T \in \mathscr{S}} \prod_{(k, j) \in T} P(K)_{kj} \right).
\end{equation*}

\subsection{Structure of the DDM MLE}

In this section, we show that the MLE over diagonally dominant M-matrices exists almost surely, in which case it takes the form of a particular BMTM. Since $\mathbb{D}^d$ contains all $d$-dimensional BMTM precision matrices $\mathbb{B}^d$, this leads us to conclude a key result about the MLE over the union of all BMTMs for a fixed data size.

\begin{theorem}\label{thm:union-bmtm}
The MLE over all BMTMs $\mathcal{B}(T)$ for trees $T$ with $d$ leaf nodes exists almost surely for sample size 1, in which case it is unique and given by the path graph over the observed nodes sorted by data value. 
\end{theorem}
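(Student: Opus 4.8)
The plan is to deduce Theorem~\ref{thm:union-bmtm} from an analysis of the MLE over the convex relaxation $\mathbb{D}^d$. Since $\mathbb{B}^d\subseteq\mathbb{D}^d$ by Proposition~\ref{prop:DDBMT}, any maximizer of the one-sample likelihood over $\mathbb{D}^d$ that happens to lie in $\mathbb{B}^d$ is automatically a maximizer over $\mathbb{B}^d=\bigcup_{T}\mathcal{B}(T)$. Thus it suffices to exhibit a matrix $\hat K\in\mathbb{B}^d$ of path type that maximizes $\ell_x$ over all of $\mathbb{D}^d$. Existence and uniqueness then come for free: $\ell_x(K)=\tfrac12\log\det K-\tfrac12 x^\top K x$ is \emph{strictly} concave in $K$ (strict concavity of $\log\det$ on the positive definite cone, plus a linear term) on the convex set $\mathbb{D}^d$, so a maximizer, once shown to exist, is unique; and existence will be established constructively by the candidate below (it can also be read off from Lemma~\ref{one-to-one-ddm-lgmrf} together with known one-sample existence results for L-GMRFs).

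Next I would pass to the squared-distance parametrization to obtain usable stationarity conditions. Writing $P=P(K)$ for the Fiedler transform and $D=D(x)$ for the squared-distance matrix of $\bar x=(0,x_1,\dots,x_d)$, the Laplacian quadratic-form identity gives $x^\top K x=\llangle P,D\rrangle=\sum_{i<j}P_{ij}D_{ij}$, while the Weighted Matrix-Tree Theorem (Theorem~\ref{weightedtreethm}) gives $\det K=\sum_{T\in\mathscr{S}}\prod_{e\in T}P_e$. Hence, up to an additive constant,
\[
\ell_x(K)=\tfrac12\log\!\Big(\sum_{T\in\mathscr{S}}\prod_{e\in T}P_e\Big)-\tfrac12\sum_{i<j}P_{ij}D_{ij},
\]
a concave function of the nonnegative edge weights $P_{ij}\ge 0$ on the complete graph over $\{0,1,\dots,d\}$ (with connected support). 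The derivative in $P_{ij}$ is $\tfrac12(R_{ij}-D_{ij})$, where $R_{ij}=\partial\log\det K/\partial P_{ij}$ is the effective resistance between $i$ and $j$ in the electrical network with conductances $P$ — the standard identity that the weighted-spanning-tree edge-inclusion probability equals $P_{ij}R_{ij}$. The KKT conditions for maximizing over the orthant $\{P\ge 0\}$ therefore read: $R_{ij}=D_{ij}$ whenever $P_{ij}>0$, and $R_{ij}\le D_{ij}$ whenever $P_{ij}=0$.

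Finally I would exhibit the optimizer explicitly and verify these conditions. Almost surely the $d+1$ values $0,x_1,\dots,x_d$ are distinct; sort them as $y_0<y_1<\dots<y_d$ and let $\hat K$ be the reduction $L_1$ of the Laplacian of the path joining consecutive $y$'s, with the edge between $y_{k-1}$ and $y_k$ given resistance $r_k=(y_k-y_{k-1})^2=D_{y_{k-1},y_k}$. This graph is connected, so $\hat K\in\mathbb{D}^d$, and being the reduced Laplacian of a tree it is a BMTM precision matrix (cf.~\cite{ultrametric,sturmfels2019brownian}), i.e.\ the precision of the fully-observed path BMTM on the sorted nodes, so $\hat K\in\mathbb{B}^d$. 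For a path edge the effective resistance equals $r_k=D_{y_{k-1},y_k}$, giving the equality condition; for a non-edge $(y_a,y_b)$ with $a<b$ the effective resistance is the series sum $\sum_{k=a+1}^b r_k=\sum_{k=a+1}^b (y_k-y_{k-1})^2$, which is strictly less than $\big(\sum_{k=a+1}^b (y_k-y_{k-1})\big)^2=(y_b-y_a)^2=D_{y_a,y_b}$ because all gaps are positive (super-additivity of squares of positive reals). Hence the inequality condition holds strictly, KKT is satisfied, and by concavity $\hat K$ is the global maximizer over $\mathbb{D}^d$; it lies in $\mathbb{B}^d$, is the path over the nodes sorted by data value, and is unique by strict concavity. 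The main obstacle is the second paragraph: correctly identifying the gradient of the log-partition function with the effective-resistance metric via the Matrix-Tree Theorem, since this is exactly what turns the optimality conditions into the transparent series-resistance comparison that the super-additivity of squares then settles.
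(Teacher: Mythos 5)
Your proposal is correct and follows essentially the same route as the paper: relax to $\mathbb{D}^d$, pass to the Fiedler/squared-distance parametrization, and verify the KKT conditions for the path candidate supported on the sorted values (Lemma~\ref{diagonal}), with your effective-resistance identity being exactly the paper's spanning-tree-ratio gradient and your series-resistance/super-additivity comparison being the paper's Cauchy--Schwarz step for non-edges. The only cosmetic difference is that the paper separately verifies that $\widehat K^{-1}$ lies in a BMTM by constructing the tree $G^\star$ with zero-weight pendant edges, a detail you cite rather than prove.
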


While in the unconstrained case, the MLE for a Gaussian covariance matrix only exists if $n \geq d$, specific structural assumptions can lead to existence results for fewer observations.
For L-GMRF matrices in Definition~\ref{def:ying}, it was shown in \cite{ying2021minimax} that the MLE exists even if $n = 1$ despite the model having the full dimension. Thus, given the correspondence of DDM-constrained Gaussian distributions and L-GMRFs shown in Lemma \ref{one-to-one-ddm-lgmrf}, we immediately obtain the existence of the MLE for DDM-constrained Gaussian distributions.

In the following, we extend this result by giving an explicit construction of the MLE for DDM-constrained Gaussian distributions in the $n = 1$ case. Let $x\in \R^d$ be a vector whose coordinates are all distinct and non-zero. Rewrite $x$ as a $0$-indexed, $(d+1)$ dimensional vector with $x_0 = 0$. Define $i_0$ through $i_d$ as the indices that sort the data in increasing order, i.e.,  $x_{i_0} < x_{i_1} < ... < x_{i_d}$. Moreover, define an undirected graph $T^{\star} = (V = [d]^0, E = \{(i_{k-1}, i_{k}) \, | \, k \in [d]\})$ as the path graph serially connecting $i_0$ through $i_d$ (shown in Figure \ref{fig:tstar}). We define the point $\widehat P\in \S^{d+1}_0$ as:
\begin{align}\label{eq:Phat}
    \widehat{P}_{ij} = \left\{
    \begin{aligned}
        &\frac{1}{(x_i-x_j)^2} &&\text{ for } (i, j) \in T^{\star},\\
        & 0 &&\text{  otherwise}.
    \end{aligned}\right.
\end{align}
We define $\widehat{K}$ as the inverse Fiedler transform of $\widehat{P}$. The importance of this special construction will become clear in the next two lemmas.

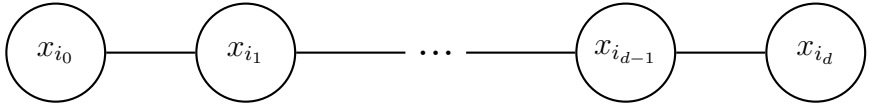
\begin{figure}
    \centering
\begin{tikzpicture}[auto, node distance=2.5cm, every loop/.style={},
                    thick,main node/.style={circle,draw,font=\sffamily\bfseries,minimum size=1.3cm}]

  \node[main node] (1) {$x_{i_0}$};
  \node[main node] (2) [right of=1] {$x_{i_1}$};
  \node (3) [right of=2] {\Large $...$};
  \node[main node] (4) [right of=3] {$x_{i_{d-1}}$};
  \node[main node] (5) [right of=4] {$x_{i_d}$};
    \path[every node/.style={font=\sffamily\small}]
    (1) edge[-] node {} (2)
    (2) edge[-] node {} (3)
    (3) edge[-] node {} (4)
    (4) edge[-] node {} (5)
    ;
\end{tikzpicture}
    \caption{A visual depiction of $T^\star$, an undirected line graph over the sorted elements of our data vector $x$. As discussed in Lemma~\ref{diagonal}, the Fiedler transform of the precision matrix of the one-sample DDM MLE is supported on~$T^\star$.}
    \label{fig:tstar}
\end{figure}

\begin{lemma}
\label{diagonal}
Given a data vector of unique, non-zero values $x$, the MLE for the $d$-node zero-mean Gaussian model with precision matrix restricted to $\mathbb{D}^{d}$ exists and is exactly $\widehat{K}$.
\end{lemma}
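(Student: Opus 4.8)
The plan is to pass to the Fiedler coordinates $P = P(K)$, where the DDM constraint becomes a simple non-negativity constraint and the likelihood becomes concave with an easily computable gradient, and then to verify that $\widehat P$ satisfies the first-order optimality conditions. First I would translate the feasible set: under the (linear) inverse Fiedler transform the defining conditions of $\mathbb{D}^d$, namely $K_{ij}\le 0$ off the diagonal and non-negative row sums, become exactly $P_{ij}\ge 0$ for every off-diagonal entry (a short computation shows $\sum_{j=1}^d K_{ij}=P_{0i}$), while $K\succ 0$ is equivalent to connectivity of the weighted graph with edge weights $P_{ij}$. Using $x^\top K x = \llangle P, D\rrangle$ (which holds because $L(K)$ has $K$ as its reduced Laplacian and $\bar x_0 = 0$) together with the Weighted Matrix-Tree Theorem (Theorem~\ref{weightedtreethm}), the one-sample log-likelihood becomes
\[
\ell_x(K) \;=\; \tfrac12\log\Big(\sum_{T\in\mathscr S}\prod_{e\in T} P_e\Big) - \tfrac12\llangle P, D\rrangle .
\]
Since $\log\det K$ is concave in $K$ and $K$ is affine in $P$, this is concave on the convex set $\{P\ge 0 : K(P)\succ 0\}$, so the KKT conditions for the constraints $P_e\ge 0$ are sufficient for global optimality.

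Next I would compute the gradient. Writing $\Sigma=K^{-1}$ and using that $\partial K/\partial P_{ij}$ is the rank-one elementary Laplacian $(e_i-e_j)(e_i-e_j)^\top$ restricted to coordinates $1,\dots,d$ (which reduces to $e_je_j^\top$ when one endpoint is $0$), one gets
\[
\frac{\partial \ell_x}{\partial P_e} \;=\; \tfrac12\big(R_e - D_e\big),
\]
where $R_e$ is the effective resistance between the endpoints of $e$ in the weighted graph with conductances $P$, and $D_e=(x_i-x_j)^2$. The KKT conditions then read: $R_e = D_e$ on every edge with $P_e>0$, and $R_e\le D_e$ on every edge with $P_e=0$.

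I would then verify these at $\widehat P$. Its support is the sorted path $T^\star$, which is a tree, with edge resistances $r_e = 1/\widehat P_e = (x_i-x_j)^2 = D_e$. For $e\in T^\star$ the unique path joining its endpoints is $e$ itself, so $R_e = r_e = D_e$ and stationarity holds. For an edge $e=(i_a,i_b)$ with $a<b$ not in $T^\star$, writing $\delta_k = x_{i_k}-x_{i_{k-1}}>0$, the series formula gives $R_e = \sum_{k=a+1}^b \delta_k^2$, while $D_e = (\sum_{k=a+1}^b \delta_k)^2 \ge \sum_{k=a+1}^b \delta_k^2 = R_e$, strictly, since at least two positive terms occur. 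Hence $R_e\le D_e$ off $T^\star$ and all KKT conditions hold. Because $\widehat P\ge 0$ and $T^\star$ is connected (so $\widehat K\succ 0$), the point $\widehat K$ is feasible, and by concavity it is the global maximizer; this simultaneously yields existence and the explicit form $\widehat K$.

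The main obstacle is the gradient computation, i.e.\ recognizing $\partial_{P_e}\log\det K$ as the effective resistance $R_e$, which converts the optimality conditions into a transparent statement about resistances. Once this identity is available, the sorted-path structure is forced by the elementary inequality $\sum_k \delta_k^2 \le (\sum_k \delta_k)^2$: it is precisely what makes the resistances equal $D_e$ along the path and fall strictly below $D_e$ on every chord.
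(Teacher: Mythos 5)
Your proposal is correct and follows essentially the same route as the paper: pass to the Fiedler coordinates $P$, use the Weighted Matrix-Tree Theorem to express $\log\det K$ as the log of the spanning-tree polynomial, invoke concavity so that the KKT conditions for $P\ge 0$ are sufficient, and verify them at $\widehat P$ using the inequality $\sum_k\delta_k^2<\bigl(\sum_k\delta_k\bigr)^2$ for the chords of the sorted path. The only difference is cosmetic: you identify the gradient of the log-partition term as an effective resistance via $\partial_{P_e}\log\det K=(e_i-e_j)^\top K^{-1}(e_i-e_j)$, whereas the paper computes the same quantity as a ratio of spanning-tree sums and simplifies it to $\sum_{(i,j)\in\overline{kl}}1/\widehat P_{ij}$ — these are the same number.
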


\begin{proof}
Consider an arbitrary $K \in \mathbb{D}^d$. We begin with a reparameterization of our problem. Define $P\in \S^{d+1}_0$ as the Fiedler transform of $K$ (Definition \ref{zw}), and $L$ as the weighted Laplacian matrix for $P$. Recall that $K$ is a principal submatrix of $L$ obtained by deleting the first row and column of $L$. From Theorem~\ref{weightedtreethm}, we then have that $\det K = \sum_{T \in \mathscr{S}} \prod_{ij \in T} P_{ij}$ where $\mathscr{S}$ is the set of spanning trees for $C_{d+1}$, the complete graph over $[d]^0$. As before, we rewrite $x$ as a $0$-indexed, $(d+1)$ dimensional vector with $x_0 = 0$. Taking $D = D(x)$ and $P = P(K)$, as in Section~\ref{sec:connect-laplacian-squared}, we can then rewrite our log-likelihood as:
\begin{align*}
    \ell_{x}(K) &= \frac{1}{2} \widetilde{\ell}_{x}(P)\\
    \text{where } \widetilde{\ell}_{x}(P) &= \log\left(\sum_{T \in   \mathscr{S}} \prod_{ij \in T} P_{ij}\right) - \sum_{0\leq i < j \leq d} P_{ij}D_{ij}.
\end{align*}
Here, we used the fact that by definition, $x_0=0$. 
The optimization problem
\begin{align}
    \label{originalopt}
    \argmax_{K} \ell_{x}(K) \quad
    \text{s.t. } K \in \mathbb{D}^{d}
    \end{align}
can now equivalently be written as
\begin{align}
    \argmax_{P} \: &\widetilde{\ell}_{x}(P)\label{diagonalopt}\\
\nonumber \text{subject to  }
    &P \in \mathbb{S}^{d+1}_0 \\
\nonumber     &P \geq 0.
\end{align}


We note that \eqref{originalopt} is a convex problem because $\log\det K$ is a concave function for all positive definite $K$. Since we are performing a linear reparametrization, the problem in \eqref{diagonalopt} is also a convex problem. Thus, satisfying first order conditions is sufficient for optimality, i.e., $\widehat{P}\in \mathbb{S}^{d+1}_0$ is optimal for \eqref{diagonalopt} if and only if:
\begin{align}
    \widehat{P} &\geq 0,\label{one}\\
    \frac{\partial \widetilde{\ell}_{x}}{\partial P_{ij}}(\widehat{P}) &\leq 0, \qquad\forall 0\leq i<j\leq d\label{two}\\
    \frac{\partial \widetilde{\ell}_{x}}{\partial P_{ij}}(\widehat{P}) \cdot \widehat{P}_{ij} &= 0, \qquad\forall 0\leq i<j\leq d.\label{three}
\end{align}

We proceed by showing that $\widehat P$ defined in \eqref{eq:Phat} satisfies conditions \eqref{one}––\eqref{three}.

First, we note that condition \eqref{one} is satisfied. For $(i, j)\in T^\star$, we have that ${\widehat{P}}_{ij} = \frac{1}{D_{ij}} = \frac{1}{(x_i - x_j)^2} > 0$ since all entries of $x$ are unique. To check the remainder of the optimality conditions, we inspect the gradient
\begin{align*}
    \frac{\partial \widetilde{\ell}_{x}}{\partial P_{kl}} = \frac{\sum_{T \in   \mathscr{S}_{kl}} \prod_{ij \in T, ij\neq kl} P_{ij}}{\sum_{T \in   \mathscr{S}} \prod_{ij \in T} P_{ij}} - D_{kl},
\end{align*}
\noindent where $\mathscr{S}_{kl}$ is the set of spanning trees of $C_{d+1}$ that include the edge $(k, \ell)$. Evaluating the gradient at $\widehat{P}$, we get
\begin{align*}
    \frac{\partial \widetilde{\ell}_{x}}{\partial P_{kl}}(\widehat{P}) &= \frac{\sum_{T\in \mathscr{S}_{kl}^{\star} } \prod_{ij \in T, ij\neq kl} \widehat{P}_{ij}}{\prod_{ij \in T^{\star}} \widehat{P}_{ij}} - D_{kl}.\\
    \intertext{Here, $\mathscr{S}_{kl}^{\star} = \{T \in \mathscr{S}_{kl} | T \subset T^\star \cup \{(k, \ell)\}\}$. Simplifying further, we have}
    \frac{\partial \widetilde{\ell}_{x}}{\partial P_{kl}}(\widehat{P}) &= \sum_{(i,j) \in \overline{kl}}\frac{1}{\widehat{P}_{ij}}- D_{kl},
\end{align*}
\noindent where $\overline{kl}$ is the path in $T^\star$ connecting $k$ and $\ell$. 
We can now show that condition \eqref{two} is satisfied. For $(k, \ell) \notin T^\star$, we have that
\begin{equation}
\frac{\partial \widetilde{\ell}_{x}}{\partial P_{kl}}(\widehat{P})  = \sum_{(i, j) \in \overline{kl}} (x_i -x_j)^{2} - (x_k -x_\ell)^{2} = \sum_{(i, j) \in \overline{kl}} (x_i -x_j)^{2} - \left(\sum_{(i, j) \in \overline{kl}} (x_i -x_j)\right)^{2}.
\end{equation}
Since $x_i < x_j, \forall (i, j) \in \overline{kl}$, we get that all $x_i - x_j>0$. Thus, by Cauchy-Schwarz on the vector $\1$ and the $|\overline{kl}|$-length vector of differences $x_i - x_j$, we have that
$$
\sum_{(i, j) \in \overline{kl}} (x_i -x_j)^{2} - \left(\sum_{(i, j) \in \overline{kl}} (x_i -x_j)\right)^{2} < 0.
$$
On the other hand, for $(k, \ell) \in T^\star$, we obtain that $\frac{\partial \widetilde{\ell}_{x}}{\partial P_{kl}}(\widehat{P})  = \frac{1}{\widehat{P}_{kl}} - D_{kl} = 0$.

Finally, we have that condition \eqref{three} is satisfied, since for $(k, \ell) \notin T^\star$, $\widehat{P}_{kl} = 0$ and for $(k, \ell) \in T^\star$, $\frac{\partial \widetilde{\ell}_{x}}{\partial P_{kl}}(\widehat{P}) = 0$. Thus, $\widehat{P}$ is the optimum for \eqref{diagonalopt}. Taking the inverse Fiedler transform of $\widehat{P}$ gives us a matrix $\widehat{K}$ that is the optimum for \eqref{originalopt}. 
\end{proof}

\begin{figure}
\centering
\begin{minipage}{0.45\textwidth}
\centering
    \resizebox{0.8\textwidth}{!}{\begin{tikzpicture}[auto, node distance=3cm, every loop/.style={},
                    thick,main node/.style={font=\sffamily\Large\bfseries},
                    zero node/.style={font=\sffamily\Large\bfseries},
                    solid node/.style={circle,draw,inner sep=1.5,fill=black,minimum size=0.5cm}]
]

  \node[zero node] (r) {0};
  \node[solid node] (1) [below=0.75cm of r] {};
  \node[solid node] (11) [below left of=1] {};
  \node[solid node] (12) [below right of=1] {};
  \node[main node] (111) [below left=1cm and 0.5cm of 11] {-5};
  \node[main node] (112) [below right=1cm and 0.5cm of 11] {-2};
  \node[main node] (121) [below left=1cm and 0.5cm of 12] {4};
  \node[main node] (122) [below right=1cm and 0.5cm of 12] {8};
    \path[every node/.style={font=\sffamily\small}]
    (r) edge[->] node {$0$} (1)
    (1) edge[->] node[pos=.3, left] {$4$} (11)
    (1) edge[->] node[pos=.3, right] {$16$} (12)
    (11) edge[->] node[pos=.3, left] {$9$} (111)
    (11) edge[->] node[pos=.3, right] {$0$} (112)
    (12) edge[->] node[pos=.3, left] {$0$} (121)
    (12) edge[->] node[pos=.3, right] {$16$} (122)
    ;
\end{tikzpicture}}
\end{minipage}\hfill\Large{$\to$}\kern -0.5em\hfill\begin{minipage}{0.45\textwidth}
    \centering
    \resizebox{0.8\textwidth}{!}{\begin{tikzpicture}[auto, node distance=3cm, every loop/.style={},
                    thick,main node/.style={font=\sffamily\Large\bfseries},
                    zero node/.style={font=\sffamily\Large\bfseries},
                    solid node/.style={circle,draw,inner sep=1.5,fill=black,minimum size=0.5cm}]
]

  \node[main node] (1) {0};
  \node[main node] (11) [below left of=1] {-2};
  \node[main node] (12) [below right of=1] {4};
  \node[main node] (111) [below left=1cm and 0.5cm of 11] {-5};
  \node[main node] (122) [below right=1cm and 0.5cm of 12] {8};
    \path[every node/.style={font=\sffamily\small}]
    (1) edge[->] node[pos=.3, left] {$4$} (11)
    (1) edge[->] node[pos=.3, right] {$16$} (12)
    (11) edge[->] node[pos=.3, left] {$9$} (111)
    (12) edge[->] node[pos=.3, right] {$16$} (122)
    ;
\end{tikzpicture}}
\end{minipage}

\caption{A depiction of the DDM MLE given $x = (-5, -2, 4, 8)$. The MLE is the precision matrix of a fully-observed BMTM, where observed nodes are sorted by data value and arranged in a line. Left: the full BMTM is shown with edges labeled by their edge parameters, which we referred to as $\theta$. Right: the BMTM is shown with the zeroed edges contracted.}\label{fig:mle}
\end{figure}
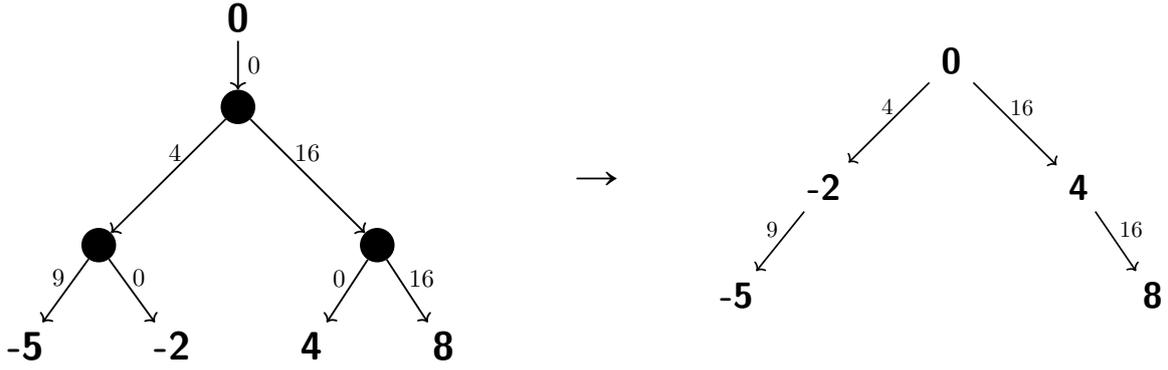

Note that $\widehat{P}$ is the MLE for the convex exponential family defined in \eqref{sqdist} with $K\in \mathbb D^d$. Furthermore, the weighted Laplacian $L(\widehat{K})$ is the closed form of the one-sample L-GMRF MLE.  

Now, we show that $\widehat{K}^{-1}$ lies within some BMTM $\mathcal{B}(T^\star)$. Since $\mathbb{B}^d \subset \mathbb{D}^d$, this will complete the proof of Theorem \ref{thm:union-bmtm}. We wish to construct a BMTM over the nodes of $T^\star$ with non-zero data values. Since BMTMs are defined only over the leaf nodes of a tree, we will construct a related tree $G^\star$ and then zero out some edge parameters:
\begin{itemize}
    \item Initially, set $G^\star$ equal to a copy of $T^\star$ rooted at $0$ with all edges directed away from~$0$
    \item For every $i \in G^\star$ such that $x_i \neq 0$, add a node $i'$ to $G^\star$ and add an edge $(i, i')$ 
\end{itemize}

Now, consider a covariance matrix $\Sigma$ contained within $\mathcal{B}(G^\star)$ parametrized by $\theta_i=(x_i-x_j)^2$ for $j\to i$ in the rooted version of $T^\star$ and $\theta_{i'} = 0$. The next result shows that the DDM MLE is precisely the BMTM given by $\Sigma$. That is, the DDM MLE precision matrix is $\Sigma^{-1}$

\begin{lemma}
Let $\widehat K$ be the inverse Fiedler transform of $\widehat P$ in \eqref{eq:Phat}. The covariance matrix $\widehat\Sigma=\widehat K^{-1}$ corresponds to a BMTM over $G^\star$.
\end{lemma}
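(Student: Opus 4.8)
The plan is to show that $\widehat{\Sigma} = \widehat{K}^{-1}$ has precisely the form \eqref{covfromedge} dictated by the tree $G^\star$ with the prescribed edge parameters, i.e.\ that $\widehat\Sigma = \sum_{i\in V(G^\star)} \theta_i\, e_{\de(i)}e_{\de(i)}^\top$ where $\theta_i = (x_i-x_j)^2$ for the edge $j\to i$ in the rooted $T^\star$ and $\theta_{i'}=0$ on the pendant edges. Since every covariance matrix of this form automatically lies in $\mathcal{B}(G^\star)$ and is positive definite, establishing this identity proves the claim. Because the map $K \mapsto \widehat\Sigma$ passes through the Fiedler transform and a matrix inverse, the cleanest route is to verify the covariance identity directly and invoke uniqueness of inverses, rather than to compute $\widehat{K}$ explicitly and invert it.

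First I would set up the bijection between $T^\star$ and the rooted tree $G^\star$. The path graph $T^\star$ serially connects $i_0, i_1, \dots, i_d$ in sorted data order, with $x_0 = 0$ sitting somewhere along the path (say $x_0 = x_{i_m}$). Rooting at $0 = i_m$ orients the path outward in two directions, so the ``descendant leaves'' $\de(i)$ of each internal node $i$ form contiguous blocks of the sorted sequence lying on one side of the root. The key combinatorial observation is that, under the $\llangle\cdot,\cdot\rrangle$ reparametrization and the inverse Fiedler transform, the entries of $\widehat P$ correspond to reciprocal squared distances along $T^\star$, and these are exactly the data $D_{ij} = (x_i - x_j)^2$ when $(i,j)\in T^\star$. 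I would exploit the known structure of BMTM covariance matrices: for a tree covariance of the form \eqref{covfromedge}, the entry $\Sigma_{k\ell}$ equals the sum of edge parameters $\theta_i$ over all $i$ whose descendant set $\de(i)$ contains both leaves $k$ and $\ell$, i.e.\ the total edge length from the root down to the meet (lowest common ancestor) of $k$ and $\ell$.

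The main computation is then to check that the candidate covariance $\Sigma$ built from $\theta_i = (x_i - x_j)^2$ actually inverts to $\widehat{K}$. The most efficient argument uses the relation already established in the previous lemmas: $\widehat P$ is the Fiedler transform of $\widehat K$, and under the Fiedler/matrix-tree correspondence the covariance $\Sigma_\theta$ of a BMTM is linked to the edge lengths along $T^\star$. Concretely, for leaves $k,\ell$ of $G^\star$ the covariance $\Sigma_{k\ell}$ should equal the sum of $(x_i-x_j)^2$ over the edges of the root-to-meet path, and I would verify that inverting this yields a matrix whose Fiedler transform is supported precisely on $T^\star$ with the values $1/(x_i-x_j)^2$ from \eqref{eq:Phat}. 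Equivalently, using the inverse Fiedler formulas in Definition~\ref{zw}, one checks that $(\widehat K)_{ij} = -\widehat P_{ij}$ off-diagonal and $(\widehat K)_{ii} = \sum_k \widehat P_{ik}$, and confirms that the $\Sigma$ so constructed satisfies $\Sigma\,\widehat K = I$ by a direct telescoping computation along the path.

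The hard part will be organizing the telescoping bookkeeping cleanly: because the root $0$ sits \emph{inside} the path $T^\star$ rather than at an endpoint, the descendant sets split into a ``left'' and ``right'' branch, and one must correctly handle the pendant observed nodes $i'$ (with $\theta_{i'}=0$) that carry the actual leaf data. I expect the genuine obstacle to be verifying the product $\Sigma\widehat K = I$ entry by entry while tracking which meets lie on the same side of the root, rather than any conceptual difficulty. A clean way to sidestep brute force is to appeal to the facts already proved: $\widehat\Sigma = \widehat K^{-1}$ is positive definite and its inverse is a diagonally dominant M-matrix (Proposition~\ref{prop:DDBMT} in reverse, via the explicit DDM structure of $\widehat K$ from Lemma~\ref{diagonal}), so it suffices to exhibit \emph{any} nonnegative edge parameterization on $G^\star$ reproducing $\widehat\Sigma$; the displayed $\theta$ is the natural candidate, and matching it against the ultrametric/tree-metric structure of reciprocal edge weights on a path completes the argument.
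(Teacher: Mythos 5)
Your overall strategy is sound and would succeed, but it takes a genuinely different route from the paper. You propose to build the candidate covariance $\Sigma=\sum_i\theta_i e_{\de(i)}e_{\de(i)}^\top$ from \eqref{covfromedge}, read off $\widehat K$ from $\widehat P$ via the inverse Fiedler formulas, and then verify $\Sigma\widehat K=I$ by a telescoping computation along the two branches of the path; this works because $\widehat K$ is block-tridiagonal with the two blocks given by the two sides of the root, and the entry-by-entry check is exactly the standard Markov-chain-on-a-path identity you anticipate. The paper goes the other way around and avoids the inverse check entirely: it uses the structural-equation factorization $K=(I-\Lambda)^\top\Omega^{-1}(I-\Lambda)$ with $\Lambda$ the (path-supported) parent--child incidence and $\Omega_{ii}=(x_i-x_j)^2$, which yields the entries of the BMTM precision matrix directly (off-diagonals $-1/(x_i-x_j)^2$ on edges of $T^\star$, diagonals the sums over incident edges), and then simply computes the Fiedler transform of that matrix and observes it equals $\widehat P$. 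The paper's route buys a much shorter computation (no matrix product to verify, no case analysis on which side of the root the meet lies), at the cost of invoking the SEM parametrization; yours is more self-contained in covariance space but carries the bookkeeping burden you correctly identify. One caution: your closing suggestion to ``sidestep brute force'' by noting that it suffices to exhibit some nonnegative edge parameterization reproducing $\widehat\Sigma$ is not actually a shortcut --- exhibiting and verifying that parameterization \emph{is} the content of the lemma, so that paragraph is circular and you would still need to carry out either the telescoping check or the paper's Fiedler-transform computation.
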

\begin{proof}
 The linear structural formulation of $\mathcal{B}(T^\star)$ gives that the random vector of the leaf nodes $X=(X_1,\ldots,X_d)$ satisfies $X=\Lambda X+\varepsilon$, where $\Lambda_{ij}=1$ if $j\to i$ in $T^\star$ and $\Lambda_{ij}=0$ otherwise. The vector $\varepsilon$ has diagonal covariance matrix $\Omega$ with $\Omega_{ii}=(x_i-x_j)^2$ for $j\to i$ in $T^\star$. Simple algebra gives that $K=\Sigma^{-1}$ satisfies
$$
K\;=\;(I-\Lambda)^T \Omega^{-1}(I-\Lambda).
$$
Note that the $j$-th column of $\Lambda$ is the unit canonical vector $e_{i}$, where $i$ is the unique child of $j$ in $T^\star$, or it is zero if there is no child. Since $\Omega$ is diagonal, the only way $K_{ij}$ is non-zero is when $i=j$ or when $i$ and $j$ are connected by an edge in $T^\star$. For $1\leq i<j\leq d$, if either $j\to i$ or $i\to j$ in $T^\star$ then
$$
K_{ij}=-\frac{1}{(x_i-x_j)^2}.
$$
Moreover, if $j\to i$ in $T^\star$ and $i$ has no child then
$$
K_{ii}\;=\;\frac{1}{(x_i-x_j)^2}.
$$
If $i$ has a child $k$ then 
$$
K_{ii}\;=\;\frac{1}{(x_i-x_j)^2}+\frac{1}{(x_i-x_k)^2}.
$$
The Fiedler transform $P$ of $K$ satisfies for all $1\leq i<j\leq d$: 
$$P_{ij}=\begin{cases}\tfrac{1}{(x_i-x_j)^2} & \mbox{if }i,j \mbox{ connected in } T^\star\\
0 & \mbox{otherwise}.
\end{cases}.$$
Moreover, to compute $P_{0i}$ we have two cases to consider: $j\to i\to k$ in $T^\star$ with $j\neq 0$, or $0\to i\to k$ in $T^\star$. The two additional cases where $i$ has no children in $T^\star$ are easy to check too. In the first case, when $j\to i\to k$ in $T^\star$ with $j\neq 0$ then
$$
P_{0i}=\sum_{\ell=1}^d K_{il}=K_{ii}+\sum_{\ell\neq i}K_{il}=\frac{1}{(x_i-x_j)^2}+\frac{1}{(x_i-x_k)^2}-\frac{1}{(x_i-x_j)^2}-\frac{1}{(x_i-x_k)^2}=0.
$$
Further, if $0\to i\to k$ in $T^\star$ then
$$
P_{0i}=\sum_{\ell=1}^d K_{il}=K_{ii}+\sum_{\ell\neq i}K_{il}=\frac{1}{(x_i-x_0)^2}+\frac{1}{(x_i-x_k)^2}-\frac{1}{(x_i-x_k)^2}=\frac{1}{(x_i-x_0)^2}.
$$
But this shows that the Fiedler transform of $K$ is precisely the matrix $\widehat{P}$.
\end{proof}
\color{black}

To illustrate the above results, consider the situation in Figure~\ref{fig:mle}. Given a four-leaf tree and data $x=(-5,-2,4,8)$ we order them as $(x_{i_0},x_{i_1},x_{i_2},x_{i_3},x_{i_4})=(-5,-2,0,4,8)$. The DDM MLE lies in the Brownian motion model on the tree $G^\star$ on the left in Figure~\ref{fig:mle}. The corresponding point $\theta$ has three zero entries and after contracting the associated edges we get the chain $T^\star$ (on the right). By construction, the resulting distribution lies in the (fully observed) Gaussian graphical model over $T^\star$. As a consequence, since the $0$ node is observed, $\widehat K$ has a block diagonal structure. More concretely, with row/columns labeled by $\{0,1,2,3,4\}$, we have:
$$
\widehat P=\begin{bmatrix}
0 & 0& \tfrac{1}{4}& \tfrac{1}{16}& 0\\[.1cm]
0  & 0& \tfrac{1}{9}& 0& 0\\[.1cm]
\tfrac{1}{4}  & \tfrac{1}{9}& 0& 0& 0\\[.1cm]
\tfrac{1}{16}  & 0& 0& 0& \tfrac{1}{16}\\[.1cm]
0  & 0& 0& \tfrac{1}{16}& 0
\end{bmatrix}
\qquad \widehat K=\begin{bmatrix}
 \tfrac{1}{9}& -\tfrac{1}{9}& 0& 0\\[.1cm]
 -\tfrac{1}{9}& \tfrac{13}{36}& 0& 0\\[.1cm]
 0& 0& \tfrac{1}{8}& -\tfrac{1}{16}\\[.1cm]
 0& 0& -\tfrac{1}{16}& \tfrac{1}{16}
\end{bmatrix}.$$

\section{Existence of the one-sample BMTM MLE}\label{sec:exist}


We now use the fact that the MLE exists for the model of $d\times d$ diagonally dominant M-matrices (c.f Lemma~\ref{diagonal}) to conclude that it must exist for any BMTM with $d$ leaves, which are a subset of $\mathbb D^d$ by Proposition~\ref{prop:DDBMT}. In particular, we show that optimizing the objective $\ell_x$ over $\mathcal{B}(T)$ is equivalent to optimizing a continuous function over a certain compact set.
To that end, we first list some basic definitions of convex analysis. 
A function $f: \mathbb{R}^{d} \to \overline{\mathbb{R}}$ is called a \emph{proper concave} function if there exists $x_0 \in \mathbb{R}^{d}$ such that $f(x_0) > -\infty$ and if $f(x) < \infty$ for all $x \in \mathbb{R}^{d}$. A concave function $f: \mathbb{R}^{d} \to \overline{\mathbb{R}}$ is called \emph{closed} if $\{x \in \mathbb{R}^{d} | f(x) \geq a\}$ is closed for all $a \in \mathbb{R}$.




\begin{lemma}[Rockafellar 8.7.1, \cite{rockafellar1997convex}]
\label{superlevel}
Let $f$ be a closed proper concave function. If the level set $\{x \in \mathbb{R}^d | f(x) \geq \alpha\}$ is non-empty and bounded for one $\alpha$, it is bounded for every $\alpha\in \mathbb{R}$. 
\end{lemma}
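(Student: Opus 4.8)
The plan is to dualize to the convex setting and then invoke the theory of recession cones. Setting $g = -f$, the hypothesis that $f$ is closed proper concave becomes the statement that $g$ is a closed proper convex function, and the superlevel sets $\{x : f(x) \geq \alpha\}$ of $f$ are precisely the sublevel sets $C_\beta := \{x : g(x) \leq \beta\}$ of $g$ with $\beta = -\alpha$. Hence it suffices to prove the dual claim: if one nonempty sublevel set $C_{\beta_0}$ of $g$ is bounded, then every nonempty sublevel set $C_\beta$ is bounded.

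The key tool is the recession function $g0^+$, defined for any fixed $x_0$ in the (nonempty) effective domain of $g$ by $(g0^+)(y) = \lim_{t \to \infty} t^{-1}\bigl(g(x_0 + t y) - g(x_0)\bigr)$; standard convex analysis guarantees this limit exists in $\overline{\mathbb{R}}$ and is independent of the choice of $x_0$. First I would recall (from the recession-cone results in \cite{rockafellar1997convex}) that for every $\beta$ with $C_\beta$ nonempty, the recession cone of $C_\beta$ equals $\{y : (g0^+)(y) \leq 0\}$. The crucial point is that this description does not involve $\beta$, so all nonempty sublevel sets of $g$ share a common recession cone, which I will call $K$.

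Then I would invoke the boundedness criterion: a nonempty closed convex set is bounded if and only if its recession cone is $\{0\}$. Each $C_\beta$ is convex, and it is closed because $g$ is a closed function. If the given $C_{\beta_0}$ is nonempty and bounded, then its recession cone is $\{0\}$; since this recession cone is exactly the common cone $K$, we conclude $K = \{0\}$. Therefore every nonempty $C_\beta$ has recession cone $\{0\}$ and is bounded. Translating back through $g = -f$ yields the claim for the superlevel sets of $f$, establishing that boundedness of one nonempty level set forces boundedness of all of them.

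The main obstacle is justifying the $\beta$-independent identification of the recession cone, namely $0^+ C_\beta = \{y : (g0^+)(y) \leq 0\}$. A direction $y$ recedes from $C_\beta$ exactly when $g(x + t y) \leq \beta$ for all $t \geq 0$ and all $x \in C_\beta$; since $t \mapsto g(x + t y)$ is a convex function of one variable, this ``bounded above for all $t \geq 0$'' condition is equivalent to the asymptotic slope $(g0^+)(y)$ being nonpositive, a property of $g$ alone rather than of the particular level $\beta$. Pinning down this equivalence carefully --- in particular that a one-dimensional convex function bounded above on $[0,\infty)$ must have nonpositive recession slope, and conversely --- is the technical heart of the argument; once it is in place, the boundedness dichotomy and the reduction to the convex case are routine.
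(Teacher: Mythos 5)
Your argument is correct and is essentially the standard proof of this result: the paper itself offers no proof, simply citing Rockafellar's Corollary~8.7.1, and your recession-cone argument (all nonempty sublevel sets of a closed proper convex function share the recession cone $\{y : (g0^+)(y) \leq 0\}$, and a nonempty closed convex set is bounded iff its recession cone is trivial) is precisely the route taken in that reference via Theorems~8.4 and~8.7. The one-dimensional fact you flag as the technical heart is handled correctly by the monotonicity of difference quotients of a convex function, so there is no gap.
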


\begin{lemma}
\label{existence}
Given a BMTM $\mathcal{B}(T)$ with $d$ leaf nodes and a size $d$ vector of unique, non-zero values $x$, then the likelihood $\ell_{x}(\Sigma_\theta^{-1})$ for $\theta \in \mathcal{B}(T)$ is upper bounded and the maximum likelihood estimate $\hat{\theta}$ exists. 
\end{lemma}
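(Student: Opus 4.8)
The plan is to derive the upper bound for free from the diagonally dominant relaxation, and then to prove attainment by showing that any maximizing sequence of edge weights stays inside a compact set on which $\Sigma_\theta$ remains uniformly positive definite. For the upper bound, recall that by Proposition~\ref{prop:DDBMT} every BMTM precision matrix $\Sigma_\theta^{-1}$ lies in $\mathbb{D}^d$, so
\[
\sup_{\theta\in\mathcal{B}(T)}\ell_x(\Sigma_\theta^{-1})\;\le\;\sup_{K\in\mathbb{D}^d}\ell_x(K)\;=\;\ell_x(\widehat K)\;<\;\infty,
\]
where the last equality is Lemma~\ref{diagonal}. This already gives boundedness above; the work is in showing the supremum over $\mathcal{B}(T)$ is achieved.

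The key structural input is that the super-level sets of $\ell_x$ restricted to $\mathbb{D}^d$ are bounded. To see this, I would view $K\mapsto\ell_x(K)=\tfrac12\log\det K-\tfrac12 x^\top K x$ extended by $-\infty$ off $\mathbb{D}^d$, and check it is a closed proper concave function: it is concave (indeed strictly concave, since $\log\det$ is strictly concave on $\mathbb{S}^d_{\succ0}$ and $-x^\top Kx$ is linear) on the convex set $\mathbb{D}^d$; it is finite at $\widehat K$ and nowhere $+\infty$; and its super-level sets $\{K\in\mathbb{D}^d:\ell_x(K)\ge\alpha\}$ are closed because along any sequence approaching the boundary of the positive definite cone one has $x^\top K x\ge 0$ and $\log\det K\to-\infty$, forcing $\ell_x\to-\infty$, so no such limit point survives in a super-level set. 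Strict concavity makes the maximizer $\widehat K$ from Lemma~\ref{diagonal} unique, so the top super-level set is the singleton $\{\widehat K\}$, which is non-empty and bounded. Lemma~\ref{superlevel} then upgrades this to boundedness of \emph{every} super-level set of $\ell_x$ on $\mathbb{D}^d$.

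With this in hand, I would take a maximizing sequence $\theta^{(n)}\in\mathcal{B}(T)$ with value approaching $L:=\sup_\theta\ell_x(\Sigma_\theta^{-1})$ and set $K^{(n)}=\Sigma_{\theta^{(n)}}^{-1}\in\mathbb{D}^d$. For large $n$ these lie in a fixed bounded super-level set, so $\|K^{(n)}\|$ is bounded above, while $\log\det K^{(n)}\ge 2\alpha_0$ is bounded below along the sequence; together these pin the eigenvalues of $K^{(n)}$ into a compact interval $[c',C']$ with $c'>0$, hence the eigenvalues of $\Sigma_{\theta^{(n)}}$ lie in $[1/C',1/c']=:[c,C]$. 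Since $\Sigma_\theta$ is PSD, its diagonal entries are bounded by its largest eigenvalue, so each leaf variance is at most $C$; and by \eqref{covfromedge} the weight $\theta_i$ of any node $i$ contributes to the variance of any descendant leaf, giving $\theta_i^{(n)}\le C$ for every $i$. Thus $\theta^{(n)}$ lies in the compact box $[0,C]^E$. I extract a convergent subsequence $\theta^{(n)}\to\theta^\star$; by continuity of the linear map in \eqref{covfromedge}, $\Sigma_{\theta^{(n)}}\to\Sigma_{\theta^\star}$, whose eigenvalues inherit the lower bound $c>0$, so $\Sigma_{\theta^\star}\succ0$ and $\ell_x(\Sigma_{\theta^\star}^{-1})=L$ by continuity, establishing existence of $\hat\theta=\theta^\star$.

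The hard part is precisely ruling out both failure modes of a maximizing sequence at once: escape to infinity in the unbounded parameter space $\mathbb{R}_{\ge0}^E$ and degeneration of $\Sigma_\theta$ toward the boundary of the positive definite cone. With only $n=1$ the unrestricted Gaussian log-likelihood is not coercive on $\mathbb{S}^d_{\succ0}$ (one can inflate eigenvalues in directions orthogonal to $x$ without penalty), so boundedness of super-level sets genuinely relies on the M-matrix constraint through Lemma~\ref{diagonal} and Lemma~\ref{superlevel}; the remaining step, translating the eigenvalue control on $K^{(n)}$ into a uniform bound on the individual edge weights $\theta_i$, is where the explicit covariance parametrization \eqref{covfromedge} does the accounting.
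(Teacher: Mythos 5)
Your proposal is correct and follows essentially the same route as the paper: the upper bound via the relaxation to $\mathbb{D}^d$ (Proposition~\ref{prop:DDBMT} and Lemma~\ref{diagonal}), the closed proper concave extension of $\ell_x$, and Lemma~\ref{superlevel} to get boundedness of all super-level sets from the single compact level set at the maximizer. The only difference is the endgame: the paper passes to the closure $\overline{F}$ of the image of $\theta\mapsto\Sigma_\theta^{-1}$ and applies the extreme value theorem on $\overline{F}\cap\overline{\ell}_x^{-1}([\alpha,\infty))$, whereas you run a maximizing-sequence argument that converts the eigenvalue control on $K^{(n)}$ into a uniform bound on each edge weight via \eqref{covfromedge} and extracts a convergent subsequence in the parameter box --- an equivalent and, if anything, slightly more explicit way of handling the same compactness step.
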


\begin{proof}

Let $\overline{\ell}_x:\mathbb{S}^d\to \overline{\mathbb{R}}$ denote the function defined by
$$
\overline{\ell}_x(K)\;=\;\begin{cases}
\frac{1}{2}\log\det(K) - \frac{1}{2}x^\top K x & \mbox{if } K\in \mathbb{D}^d,\\
-\infty & \mbox{otherwise}.
\end{cases}
$$
Put simply, $\overline\ell_x$ is an extension of $\ell_x$ to $\mathbb{D}^d$, since $\ell_x$ is only defined over $K$ such that $K^{-1}$ is in $\mathbb{D}^d$.

Our goal is to show that all the level sets $\overline{\ell}_x^{-1}([\alpha,\infty))=\{K\in \mathbb{S}^d: \overline{\ell}_x(K)\geq \alpha\}$ for $\alpha\in \R$ are compact. The function $\overline{\ell}_x$ is concave. It is a proper function because it is bounded above by $\ell_x(K^\star)$, where $K^\star$ is the optimum in Lemma~\ref{diagonal}. The function $\overline{\ell}_x$ is also closed. Indeed, for every $\alpha\in \mathbb{R}$, the preimage $\overline{\ell}_x^{-1}([\alpha,\infty))$ is a subset of $\mathbb{D}^d$. Since $\overline{\ell}_x$ is continuous on $\mathbb{D}^d$, it follows that $\overline{\ell}_x^{-1}([\alpha,\infty))$ is closed in $\mathbb{D}^d$. 

We now show that $\overline{\ell}_x^{-1}([\alpha,\infty))$ must also be closed in the topological closure $\overline{\mathbb{D}^d}$ of $\mathbb{D}^d$. 
Note that $K_n\to K_0$ for a sequence $(K_n)$ in $\mathbb{S}^d_{\succ 0}$ implies that $\overline{\ell}_x(K_n)\to -\infty$ as $K_0$ is on the boundary of the cone of positive definite matrices, and thus, has at least one zero eigenvalue. Thus, no point $K_0\in \overline{\mathbb{D}^d}\setminus \mathbb{D}^d$ can be a limit of points in $\overline{\ell}_x^{-1}([\alpha,\infty))$, and closure in $\mathbb{D}^d$ then implies closure in $\overline{\mathbb{D}^d}$. Since $\overline{\ell}_x(K) > -\infty$ only if $K\in \mathbb{D}^d$, we have that $\overline{\ell}_x^{-1}([\alpha,\infty))$ is closed in $\mathbb{S}^d$.

We conclude that $\overline{\ell}_x$ is a proper, closed concave function. Thus, by Lemma~\ref{superlevel}, every level set $\overline{\ell}_x^{-1}([\alpha,\infty))$ is a compact subset of $\mathbb{D}^d$ as long as we find at least one such compact level set. One immediately obtains such a compact level set by $\{K\in \mathbb{S}^d:\overline{\ell}_x(K)\geq \overline{\ell}_x(K^\star)\}=\{K^\star\}$.

Maximizing $\ell_x(\Sigma_\theta^{-1})$ for a $\theta \in \mathcal{B}(T)$ is equivalent to optimizing $\ell_x(K)$ over the set of all $K$ that lie in the image of the map
$\theta\mapsto \Sigma_\theta^{-1}$, where we restrict to $\theta \in \mathbb{R}^{d}_{\geq 0}$ for which $\Sigma_\theta$ is invertible. Denote this image by $F$. By Proposition~\ref{prop:DDBMT},  $F\subseteq \mathbb{D}^d$ and so $\overline{F}$ is a closed subset of $\overline{\mathbb{D}^d}$.
Let $\alpha=\ell_x(\Sigma_{\mathds{1}}^{-1})$, where $\mathds{1}$ is the all-ones vector. Without loss of generality, we can restrict our optimization problem to the points in the model that lie in $\overline{\ell}_x^{-1}([\alpha,\infty))$. Note that $\overline{F}\cap \overline{\ell}_x^{-1}([\alpha,\infty))$ is a compact subset of $\mathbb{S}^d$. Moreover, for all the points that we added passing from $F$ to its closure, the function $\overline{\ell}_x$ equals to $-\infty$. It follows that optimizing $\overline{\ell}_x$ over $F$ is equivalent to optimizing over the compact set $\overline{F}\cap \overline{\ell}_x^{-1}([\alpha,\infty))$. Since $\overline{\ell}_x$ is a continuous function, the optimum exists by the extreme value theorem. 
\end{proof}

\section{One-Sample BMTM MLE is a Fully Observed Tree}\label{sec:fo}

Having shown that an MLE exists for a single sample with probability 1, we now focus on a more detailed characterization of its structure. Our main result in Theorem~\ref{thm:bmtm-fully-observed} shows that any MLE has as many zeros as possible under the constraint that $\Sigma_\theta$ be positive definite. In other words, any BMTM MLE must be fully-observed.

To begin, consider all BMTMs with $d=1$ leaf nodes. In this case, full-observability is immediate. We are restricted to a single tree $T$ with only one node, a leaf node descending from 0. $\mathcal{B}(T)$ is the only BMTM such that all of its constituent distributions are fully observed. The likelihood of such a BMTM is merely that of a univariate Gaussian. Its MLE is simply $\hat{\theta} = \{x_1^2\}$ for observed data value $x_1$.

To show full observability when $d\geq 2$, we start by establishing that any MLE has at least one edge whose parameter is set to zero. To do so, we show that the Hessian of the log-likelihood function is never negative semidefinite, when $\theta_i > 0$ $\forall i\in V$. We then present a proof by contradiction of the full observability of any BMTM MLE. The following two lemmas are necessary to characterize the Hessian of the log-likelihood function.

\begin{lemma}
\label{negeig}
Given a $d$-dimensional square matrix $B$ with $d-1$ strictly negative eigenvalues (counting with multiplicities), there exists some $d$-dimensional square matrix $A = c_0\mathds{1}\mathds{1}^{\top} + \sum_{i=1}^{d} c_ie_ie_i^\top$ where $c = (c_0, ... c_d)^\top \in \mathbb{R}^{d+1}$ such that $ABA$ is negative semidefinite with at least one negative eigenvalue.
\end{lemma}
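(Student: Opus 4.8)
The plan is to exploit the identity $v^\top ABA v = (Av)^\top B (Av)$, valid because $B$ is symmetric (here $B$ is a Hessian, so symmetry is automatic, and we are in the regime $d \ge 2$ as in the surrounding discussion). This identity shows that $ABA \preceq 0$ holds precisely when $B$ is negative semidefinite on the column space $\mathcal{R}(A)$, and that $ABA$ has a strictly negative eigenvalue precisely when some $w \in \mathcal{R}(A)$ satisfies $w^\top B w < 0$. My strategy is therefore to choose $c$ so that $\mathcal{R}(A)$ lands inside the subspace $\mathcal{N}$ spanned by the eigenvectors of $B$ with strictly negative eigenvalues, on which $B$ is negative definite; then $ABA \preceq 0$ automatically, and any nonzero $A$ forces a strictly negative direction, yielding the required negative eigenvalue.

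First I would dispose of the trivial case where all $d$ eigenvalues of $B$ are negative, where $A = I$ (i.e.\ $c_0 = 0$, $c_1 = \dots = c_d = 1$) already gives $ABA = B \prec 0$. Otherwise exactly one eigenvalue is nonnegative, so $\dim \mathcal{N} = d-1$ and $\mathcal{N}^\perp = \operatorname{span}(p)$ is one-dimensional, where $p$ is the remaining (nonnegative) eigenvector. Since $A$ is symmetric, the containment $\mathcal{R}(A) \subseteq \mathcal{N} = p^\perp$ is equivalent to the single vector equation $Ap = 0$. The key observation is that the map $c \mapsto A(c)\,p$, where $A(c) = c_0 \mathds{1}\mathds{1}^\top + \sum_{i=1}^d c_i e_i e_i^\top$, is linear from $\mathbb{R}^{d+1}$ into $\mathbb{R}^d$; because the domain has strictly larger dimension than the codomain, its kernel is nontrivial, so I can pick $c \ne 0$ with $A(c)p = 0$. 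It remains to check that this $A$ is nonzero: for $d \ge 2$ every off-diagonal entry of $A(c)$ equals $c_0$ and every diagonal entry equals $c_0 + c_i$, so $A(c) = 0$ would force $c = 0$; hence $c \ne 0$ gives $A \ne 0$. Then $\mathcal{R}(A) \subseteq \mathcal{N}$ yields $ABA \preceq 0$, and choosing $v$ with $Av \ne 0$ (possible as $A \ne 0$) gives $v^\top ABA v = (Av)^\top B(Av) < 0$, so $ABA$ has a strictly negative eigenvalue.

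The conceptually delicate point---and the step I expect to be the real obstacle---is not the final verification but realizing that the rigidly constrained form of $A$ (a diagonal matrix plus a multiple of $\mathds{1}\mathds{1}^\top$, only $d+1$ free parameters) is nonetheless flexible enough to place $\mathcal{R}(A)$ inside an arbitrary hyperplane $p^\perp$. A naive attempt to build such an $A$ explicitly, e.g.\ forcing $A$ to be rank one so that $A = \pm uu^\top$, fails for $d \ge 3$, since then all off-diagonal entries of $uu^\top$ would have to coincide, severely restricting $u$. The parameter-counting argument sidesteps this entirely by only demanding $Ap = 0$ rather than prescribing $\mathcal{R}(A)$ exactly, and the inequality $d+1 > d$ is exactly what guarantees a nonzero solution. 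I would also double-check the boundary subcase where the lone nonnegative eigenvalue equals $0$ (so that $B \preceq 0$ on all of $\mathbb{R}^d$); there $A = I$ again works, and the general construction remains valid since $\mathcal{N}$ is still the strictly negative eigenspace.
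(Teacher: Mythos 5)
Your proof is correct, and it shares the paper's central idea: choose $c$ so that the range of $A$ lies inside a $(d-1)$-dimensional subspace on which $B$ is negative definite, whence $v^\top ABAv=(Av)^\top B(Av)\le 0$, with strict inequality whenever $Av\neq 0$. Where you genuinely differ is in how the existence of such a $c$ is established. The paper constructs $A$ explicitly: it forms the matrix $C$ with columns $u_1,\dots,u_{d-1},\mathds{1}$, splits into a non-generic case where some $e_i$ (or $\mathds{1}$) already lies in the negative eigenspace $U$ (take $A=e_ie_i^\top$) and a generic case, where it inverts $C$, reads the $c_i$ off as reciprocals of the last row of $C^{-1}$, and obtains $A=\mathds{1}\mathds{1}^\top-D_c$ with all columns in $U$. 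You instead note that, since $A$ is symmetric and the negative eigenspace is the hyperplane $p^\perp$, the containment $\mathcal{R}(A)\subseteq p^\perp$ reduces to the single linear equation $A(c)p=0$, which has a nonzero solution because $c\mapsto A(c)p$ maps $\mathbb{R}^{d+1}$ into $\mathbb{R}^{d}$; checking that $c\neq 0$ forces $A(c)\neq 0$ (valid for $d\ge 2$) finishes the argument. Your route is shorter and avoids both the case split and the verification that the last row of $C^{-1}$ has no zero entries; the price is that it leans on the orthogonality $\mathcal{N}=p^{\perp}$ and $\mathcal{R}(A)=\ker(A)^{\perp}$, hence on $B$ (and $A$) being symmetric --- but symmetry of $B$ is already used implicitly by both proofs when concluding that the quadratic form is negative on the span of the negative eigenvectors, and it holds in the application. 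One small correction: the identity $v^\top ABAv=(Av)^\top B(Av)$ follows from the symmetry of $A$, not of $B$.
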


\begin{proof}
Denoting $e_0=\mathds{1}$ write $A=\sum_{i=0}^d c_i e_i e_i^T$. Let $u_1,\ldots,u_{d-1}$ be the eigenvectors of $B$ corresponding to the fixed $d-1$ negative eigenvalues and let $U$ be the $(d-1)$-dimensional linear space spanned by these vectors. We will show that we can choose $c\in \mathbb{R}^{d+1}$ such that the columns $a_1,\ldots,a_d$ of $A$ all lie in $U$. Then it is clear that $ABA$ is negative semidefinite, since $x^TA \in U$ for any $x \in \mathbb{R}^d$. We consider two cases. Case 1 (non-generic): $e_i\in U$ for some $i\in[d]^0$. Case 2 (generic): $e_i\notin U$ for all $i\in[d]^0$. In Case~1, take $c_i=1$ and $c_j=0$ for all $j\neq i$. In Case~2, the matrix $C\in \mathbb{R}^{d\times d}$ with columns $u_1,\ldots,u_{d-1},\mathds{1}$ is invertible since $\mathds{1} \notin U$. Moreover, by the definition of $C^{-1}$, denoting the columns of $C^{-1}$ by $\widetilde{u}_1, \ldots, \widetilde{u}_d$, we have $C \widetilde{u}_i = e_i$ for $i \in [d]$. Since $e_i \notin U$, all entries of the last row of $C^{-1}$ are non-zero. Set $c_0=1$ and let $c_1,\ldots,c_d$ be such that the last row of $C^{-1}$ is $1/c_1,\ldots,1/c_d$. Denote $D_c={\rm diag}(c_1,\ldots,c_d)$ and let
$$
A:=C\cdot (e_d \mathds{1}^T-C^{-1} D_c )\;=\;\mathds{1}\mathds{1}^T-D_c.
$$
By construction, the last row of the matrix $e_d \mathds{1}^T-C^{-1} D_c$ is zero, and so, the columns of $A$  are all linear combinations of only the first $d-1$ columns of $C$ (i.e. $u_1,\ldots,u_{d-1}$). Thus, all columns of $A$ lie in $U$. 
\end{proof}

The proof of the following lemma will make use of an important inequality.



\begin{theorem}[Weyl's inequality \cite{horn1994topics}]
\label{weyl}
Let $A, B \in \mathbb{S}^d $ be symmetric matrices, let $ C = A + B $, and denote their eigenvalues in non-increasing order by $ a_1 \geq \dots \geq a_d $, $b_1 \geq \dots \geq b_d$, and $c_1 \geq \dots \geq c_d$, respectively.
Then, the following inequalities hold:
\begin{equation}
    a_i + b_n \leq c_i \leq a_i + b_1, \quad\text{for } i \in [d].
\end{equation}
\end{theorem}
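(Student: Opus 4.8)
The plan is to derive both inequalities from the Courant--Fischer min-max characterization of eigenvalues, which expresses each $c_i$ as an optimization over subspaces and thereby lets the additive structure $C = A + B$ pass directly into the quadratic form. Recall that for any $M \in \mathbb{S}^d$ with eigenvalues $m_1 \geq \cdots \geq m_d$, one has
$$
m_i = \max_{\substack{S \subseteq \mathbb{R}^d \\ \dim S = i}} \; \min_{\substack{x \in S \\ \|x\| = 1}} x^\top M x.
$$
I would first record this formula (citing the same reference) and note that the entire argument rests on the trivial two-sided bound $b_d \leq x^\top B x \leq b_1$, valid for every unit vector $x$, which is itself the instance of Courant--Fischer at $i = d$ and $i = 1$.

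For the upper bound, apply the min-max formula to $C$ and split the quadratic form: for any subspace $S$ and any unit vector $x \in S$,
$$
x^\top C x = x^\top A x + x^\top B x \leq x^\top A x + b_1.
$$
Taking the minimum over unit $x \in S$ and then the maximum over $i$-dimensional $S$ yields $c_i \leq a_i + b_1$, since the additive constant $b_1$ passes through both optimizations unchanged. For the lower bound, the identical computation with the estimate $x^\top B x \geq b_d$ gives $x^\top C x \geq x^\top A x + b_d$, and the same two optimizations produce $c_i \geq a_i + b_d$. I note in passing that the statement writes $b_n$ where $b_d$ is meant, since the ambient dimension is $d$.

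I do not expect a genuine obstacle here: this is the simplest special case of the general Weyl inequalities $c_{i+j-1} \leq a_i + b_j$, recovered by setting $j = 1$ and $j = d$ respectively, and the only real care needed is bookkeeping with the non-increasing ordering and the correct dimension of the optimizing subspaces. The single point worth stating explicitly is \emph{why} the constant $b_1$ (resp. $b_d$) may be pulled outside both the inner minimization and the outer maximization --- namely that adding a constant to an objective commutes with $\min$ and $\max$ --- which is precisely what reduces the eigenvalue bound to the trivial quadratic-form bound on $B$.
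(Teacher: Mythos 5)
Your proof is correct: the two-sided bound $b_d \leq x^\top B x \leq b_1$ for unit vectors, combined with the Courant--Fischer characterization of $c_i$ and the observation that an additive constant commutes with the inner $\min$ and outer $\max$, is the standard argument for this special case of Weyl's inequalities, and you are right that the statement's $b_n$ should read $b_d$. Note that the paper gives no proof of this result --- it is quoted from \cite{horn1994topics} as a known theorem --- so there is nothing to compare against; your argument is a complete and correct self-contained justification.
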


We now show that the second-directional derivative of the log-likelihood function is always positive for some direction when evaluated at $\Sigma_\theta$ such that $\theta_i > 0, \forall i\in V$. It follows that no such $\theta$ can be a maximum of the log-likelihood function, and so, no such $\theta$ can be an MLE.

\begin{lemma}
\label{notlocalmax}
Given any tree $T = (V, E)$ with $d\geq 2$ leaf nodes and any $\theta \in \mathcal{B}(T)$ such that $\theta_i > 0$ for all $i \in V$, $\Sigma_\theta^{-1}$ is not a local maximum of the log-likelihood function $\ell_x$.
\end{lemma}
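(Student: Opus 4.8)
The plan is to show that the Hessian of $\ell_x$ at $\Sigma_\theta^{-1}$ (with all $\theta_i>0$) fails to be negative semidefinite, which rules out a local maximum. I would work in the covariance parametrization, writing $\Sigma = \Sigma_\theta$ and recalling from \eqref{covfromedge} that $\Sigma$ depends linearly on $\theta$ via $\Sigma = \sum_{i\in V}\theta_i\,e_{\de(i)}e_{\de(i)}^\top$. The first step is to compute the second-directional derivative of $\ell_x$ along a tangent direction $\dot\Sigma = \sum_i \dot\theta_i\,e_{\de(i)}e_{\de(i)}^\top$. Using the standard identities $\partial \log\det\Sigma^{-1} = -\tr(\Sigma^{-1}\dot\Sigma)$ and $\partial(\Sigma^{-1}) = -\Sigma^{-1}\dot\Sigma\,\Sigma^{-1}$, the second derivative of $\ell_x$ in the direction $\dot\Sigma$ works out to a quadratic form of the shape $\tfrac12\tr(\Sigma^{-1}\dot\Sigma\,\Sigma^{-1}\dot\Sigma) - x^\top \Sigma^{-1}\dot\Sigma\,\Sigma^{-1}\dot\Sigma\,\Sigma^{-1} x$. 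To establish the lemma it suffices to exhibit a single admissible direction $\dot\theta$ making this expression strictly positive.

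The key idea is to reduce this to the linear-algebraic statement already isolated in Lemma~\ref{negeig}. I would substitute $M = \Sigma^{-1}\dot\Sigma\,\Sigma^{-1/2}$ (or a similar whitening) so that the quadratic form becomes $\tfrac12\|N\|_F^2 - y^\top N^2 y$ for appropriate $N$ symmetric and $y = \Sigma^{-1/2}x$; the point is that \emph{the same quadratic form arises as a conjugated version} $A B A$ of the Hessian kernel $B$, where $A$ ranges over matrices of the special form $c_0\mathds{1}\mathds{1}^\top + \sum_i c_i e_i e_i^\top$ allowed in Lemma~\ref{negeig}. Crucially, the admissible directions $\dot\Sigma$ coming from the BMTM are exactly spanned by the rank-one pieces $e_{\de(i)}e_{\de(i)}^\top$, and I would argue that these span (or contain) enough of the space $\{c_0\mathds{1}\mathds{1}^\top + \sum_i c_i e_i e_i^\top\}$ to invoke the construction. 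The contrapositive reading of Lemma~\ref{negeig} is what drives the whole argument: if the Hessian \emph{were} negative semidefinite we would get a contradiction with the existence of a direction producing a positive second derivative, so the role of Lemma~\ref{negeig} is to supply that direction via a matrix $A$ with $ABA$ having a positive eigenvalue.

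The main obstacle, and the step I expect to require the most care, is verifying that the second-derivative quadratic form genuinely has a matrix representation $B$ with at least $d-1$ strictly negative eigenvalues, as required by the hypothesis of Lemma~\ref{negeig}. This is where the concavity of $\log\det$ in precision space is working \emph{against} us: the $\log\det$ term contributes a negative-definite piece, while the data term $-x^\top\Sigma^{-1}x$ contributes an indefinite correction, and I must show the combined kernel has exactly the right inertia (at most one nonnegative eigenvalue). I would pin this down by an explicit eigenvalue count, likely using Weyl's inequality (Theorem~\ref{weyl}) to bound the eigenvalues of the sum of the two pieces: the trace term gives a multiple of the identity after whitening, and the rank-one data term $y y^\top$ can raise at most one eigenvalue above zero. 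A secondary technical point is checking that the direction $\dot\theta$ supplied by Lemma~\ref{negeig} is a \emph{feasible} tangent direction for $\mathcal{B}(T)$ at an interior point $\theta_i>0$ — but since all $\theta_i$ are strictly positive the feasible cone is full-dimensional there, so any symmetric-matrix direction in the span of the $e_{\de(i)}e_{\de(i)}^\top$ is admissible, and I would close the argument by confirming the constructed $A$ lies in that span.
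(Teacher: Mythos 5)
Your proposal follows essentially the same route as the paper's proof: reduce to the second-order necessary condition at an interior point, write the second directional derivative as $-\tr\bigl(\Sigma^{-1/2}A\Sigma^{-1}(2xx^{\top}-\Sigma)\Sigma^{-1}A\Sigma^{-1/2}\bigr)$, use Weyl's inequality to show the kernel $2xx^{\top}-\Sigma$ has $d-1$ negative eigenvalues, and invoke Lemma~\ref{negeig} to produce the direction $A=c_0\mathds{1}\mathds{1}^{\top}+\sum_i c_ie_ie_i^{\top}$, which is admissible because $\mathds{1}\mathds{1}^{\top}$ and the $e_ie_i^{\top}$ are exactly the terms $e_{\de(i)}e_{\de(i)}^{\top}$ contributed by the root's child and the leaves. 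The steps you flag as needing care (the inertia count and the admissibility of $A$) are resolved exactly as you anticipate, so the plan is sound.
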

\begin{proof}
We aim to show that the necessary second-order optimality conditions are violated for all such $\theta$.
Since $\theta_i > 0$ for all $i > 0$, $\theta$ does not lie on the boundary of the admissible set. Thus, for $\theta$ to be a local maximum, the Hessian of the function must be negative semidefinite. Since the Brownian motion model is linear in $\Sigma$, it is natural to consider, instead of $\ell_x(K)$, the function $f_x(\Sigma)=\ell_x(\Sigma^{-1})$ restricted to the polyhedral cone of all $\Sigma_\theta$ for $\theta\geq 0$. Hence, the necessary second-order constraints for this problem reduce to
\begin{equation}
    \label{second-order}
    \nabla_A \nabla_A f_x(\Sigma) \leq 0, \quad \text{ for all } A = \sum_{i \in V} c_i \, e_{\de(i)} \, e_{\de(i)}^\top, \, c \in \R^{d+1},
\end{equation}
where \( \nabla_A \) denotes the directional derivative with respect to \( \Sigma \) in the direction \( A \in \mathbb{S}^{d} \). Note that, in line with \eqref{covfromedge}, $A$ is constructed only to include possible directions in which $\Sigma$ may be perturbed. As written in \cite{zwiernik2016maximum}, we have that:
\begin{align*}
\nabla_A\nabla_A f_x(\Sigma_\theta) = -\tr(\Sigma_\theta^{-1/2}A\Sigma_\theta^{-1}(2xx^{\top} - \Sigma_\theta)\Sigma_\theta^{-1}A\Sigma_\theta^{-1/2}),
\end{align*}
where $A = \sum_{i \in V} c_{i} \, e_{\de(i)} \, e_{\de(i)}^\top$ for $c \in \mathbb{R}^{d+1}$. We know that $2xx^\top$ is rank-1 and has one non-zero positive eigenvalue.
On the other hand, given that all $\theta_i$ are positive, $\Sigma_\theta$ is positive definite.
We introduce the notation $\lambda_i(M)$ to refer to the $i$th eigenvalue of matrix $M$.
By Weyl's inequality, Theorem~\ref{weyl}, applied to $2 x x^\top$ and $- \Sigma_\theta$, we know that $\lambda_2(2 x x^{\top} - \Sigma_\theta) \leq \lambda_2(2 x x^{\top}) + \lambda_1(-\Sigma_\theta) = \lambda_1(-\Sigma_\theta) < 0$.
Hence, $2xx^{\top} - \Sigma_\theta$ has $d-1$ negative eigenvalues. Further, $\Sigma_\theta^{-1}(2xx^{\top} - \Sigma_\theta)\Sigma_\theta^{-1}$ has $d-1$ negative eigenvalues, since $\Sigma_\theta^{-1}$ is full rank and symmetric. Note that Lemma~\ref{negeig} constructs $A$ like in \eqref{second-order}, but with the coefficients of all non-root, non-leaf nodes set to zero
Thus, applying Lemma~\ref{negeig}, we know that there exists some $A$ of the form \eqref{second-order} such that $A\Sigma_\theta^{-1}(2xx^{\top} - \Sigma_\theta)\Sigma_\theta^{-1}A$ is negative semidefinite with at least one negative eigenvalue. Since $\Sigma_\theta^{-1/2}$ is full rank and symmetric, $\Sigma_\theta^{-1/2}(A\Sigma_\theta^{-1}(2xx^{\top} - \Sigma_\theta)\Sigma_\theta^{-1}A)\Sigma_\theta^{-1/2}$ is negative semidefinite with at least one negative eigenvalue. Thus, $-\tr(\Sigma^{-1/2}(A\Sigma_\theta^{-1}(2xx^{\top} - \Sigma_\theta)\Sigma_\theta^{-1}A)\Sigma^{-1/2}) > 0$, and \eqref{second-order} is violated, showing that $\theta$ cannot be a local maximum. 
\end{proof}


\begin{corollary}[BMTM MLE must contain a zero edge]
\label{cor:bmtm-zeroes}
Given a tree $T = (V, E)$ with $d\geq 2$ leaf nodes and data vector $x$ with unique, non-zero entries, then any MLE $\hat{\theta}$ of the BMTM $\mathcal{B}(T)$ must have at least one $\hat{\theta}_i$ such that $\hat{\theta}_i = 0$.

\end{corollary}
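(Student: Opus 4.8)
The plan is to deduce this corollary almost immediately from the two preceding results: existence (Lemma~\ref{existence}) and the second-order obstruction (Lemma~\ref{notlocalmax}). I would argue by contradiction. First invoke Lemma~\ref{existence} to guarantee that an MLE $\hat\theta$ exists, and suppose toward a contradiction that $\hat\theta_i > 0$ for every $i \in V$. Then $\hat\theta$ lies in the interior of the admissible cone $\mathbb{R}^{V}_{\geq 0}$ of edge parameters, so it is an interior optimum and not a boundary point.

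Next I would observe that, being the MLE, $\hat\theta$ is by definition a global maximizer of $\ell_x$ over $\mathcal{B}(T)$, hence in particular a local maximizer (equivalently, of $f_x(\Sigma) = \ell_x(\Sigma^{-1})$ restricted to the polyhedral cone $\{\Sigma_\theta : \theta \geq 0\}$). But Lemma~\ref{notlocalmax} asserts exactly that, when $d \geq 2$ and all $\theta_i > 0$, the point $\Sigma_{\hat\theta}^{-1}$ cannot be a local maximum: there is a direction $A = \sum_{i \in V} c_i\, e_{\de(i)} e_{\de(i)}^\top$ along which the second directional derivative $\nabla_A \nabla_A f_x(\Sigma_{\hat\theta})$ is strictly positive. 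Since $\hat\theta$ is interior, this direction is feasible with either sign, so the strictly positive second directional derivative genuinely violates the necessary second-order condition for a local maximum. This contradicts the local maximality of $\hat\theta$. Therefore the assumption that all $\hat\theta_i > 0$ is false, and at least one $\hat\theta_i = 0$.

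The corollary is thus essentially a formal consequence, since the substantive work—the eigenvalue count for $2xx^\top - \Sigma_\theta$ via Weyl's inequality (Theorem~\ref{weyl}) and the explicit construction of the ascent direction $A$ in Lemma~\ref{negeig}—has already been carried out. The only point requiring a word of care, and the closest thing to an obstacle, is confirming that interiority makes the direction $A$ admissible in both signs so that a positive second directional derivative truly contradicts optimality; this is routine because the feasible set is a full-dimensional polyhedral cone near any interior point.
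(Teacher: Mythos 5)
Your proposal is correct and is essentially identical to the paper's proof: both combine Lemma~\ref{existence} (a global maximizer exists) with Lemma~\ref{notlocalmax} (no all-positive $\theta$ can be a local, hence global, maximum) to force a zero coordinate. The extra remark about the feasibility of the direction $A$ at an interior point is a harmless elaboration of what Lemma~\ref{notlocalmax} already establishes.
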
 

\begin{proof}
By Lemma~\ref{notlocalmax}, no $\theta$ where all entries are positive is a local maximum of the log-likelihood function, and thus no such $\theta$ is a global maximum over $\mathcal{B}(T)$, either.
Since by Lemma~\ref{existence}, a global maximum is attained, the corresponding maximizer must be some $\hat{\theta}$ such that $\hat{\theta}_i = 0$ for some $i\in V$.
\end{proof}

Our way to extend the above results is by realizing that a model with a zero entry in $\theta$ can be realized as a model on a tree obtained by contracting one of the edges. 
\begin{lemma}
\label{contractlatent}
    Given a tree $T = (V, E)$ with $d\geq 2$ leaf nodes, define $V^{\mathrm{int}} \subset V$ as the set of nodes that are not leaves and are not the child of the root. Then, given any MLE $\hat{\theta}$ of $\mathcal{B}(T)$ such that $\theta_i = 0$ for some $i \in V^{\mathrm{int}}$, we may rewrite $\hat{\theta}$ like so: 
\begin{align*}
    \hat{\theta}_j = [\argmax_{\theta \in \mathcal{B}(T')} \ell_x(\theta)]_j \text{, if } i \neq j,
\end{align*}
where $T'$ is exactly $T$ but with the edge $(\pi(i), i)$ contracted, removing $i$.

\end{lemma}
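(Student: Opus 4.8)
The plan is to use the fact that imposing $\theta_i=0$ forces $W_i=W_{\pi(i)}$, so that the slice $\{\theta\in\mathcal{B}(T):\theta_i=0\}$ coincides, as a family of covariance matrices, with the whole model $\mathcal{B}(T')$ on the contracted tree. The $\argmax$ claim then drops out of a one-line likelihood comparison.

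First I would check that $T'$ is again a tree of the type in Definition~\ref{def:bmtm}; this is the only place the hypothesis $i\in V^{\mathrm{int}}$ is used. Since $i$ is internal it has at least two children, and since $\pi(i)\neq 0$ its parent is itself an internal node of degree $\geq 3$. Contracting $(\pi(i),i)$ reassigns the children of $i$ to $\pi(i)$, so the merged node has degree $\deg(\pi(i))+\deg(i)-2\geq 4$, the root stays degree $1$, and the $d$ leaves are untouched. Had $i$ been a leaf the contraction would delete a leaf and change the ambient dimension, and had $i$ been the child of the root the merged node would become a root of degree $\geq 2$; both are excluded by the definition of $V^{\mathrm{int}}$.

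Next I would match the two parametrizations. The key observation is that contracting an internal edge changes only the internal adjacency of the tree and never the set of leaves below a given node, so $\de_{T'}(j)=\de_T(j)$ for every $j\in V\setminus\{i\}$. Hence by \eqref{covfromedge}, for any $\theta\in\mathcal{B}(T)$ with $\theta_i=0$ the term $\theta_i\,e_{\de(i)}e_{\de(i)}^\top$ drops out and
$$\Sigma_\theta=\sum_{j\in V\setminus\{i\}}\theta_j\,e_{\de_T(j)}e_{\de_T(j)}^\top=\sum_{j\in V'}\theta_j\,e_{\de_{T'}(j)}e_{\de_{T'}(j)}^\top,$$
where $V'=V\setminus\{i\}$ is the vertex set of $T'$. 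This is exactly the covariance matrix of $T'$ with parameters $(\theta_j)_{j\neq i}$. Thus the map $\Phi:\mathcal{B}(T')\to\{\theta\in\mathcal{B}(T):\theta_i=0\}$ inserting a zero into coordinate $i$ is a bijection with $\ell_x(\Phi(\eta))=\ell_x(\eta)$ for all $\eta\in\mathcal{B}(T')$.

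Finally I would conclude. Setting $\eta^\star=(\hat\theta_j)_{j\neq i}$ we have $\hat\theta=\Phi(\eta^\star)$, and for every $\eta\in\mathcal{B}(T')$, using that $\hat\theta$ is a global maximizer of $\ell_x$ over $\mathcal{B}(T)$,
$$\ell_x(\eta)=\ell_x(\Phi(\eta))\leq\ell_x(\hat\theta)=\ell_x(\Phi(\eta^\star))=\ell_x(\eta^\star),$$
so $\eta^\star$ maximizes $\ell_x$ over $\mathcal{B}(T')$, which is the assertion. The one genuinely delicate point is the pair of bookkeeping facts $\de_{T'}(j)=\de_T(j)$ and the degree count certifying that $T'$ is a legitimate tree; everything after that is a formal consequence of the bijection $\Phi$ preserving likelihoods.
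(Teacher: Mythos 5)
Your proof is correct and follows essentially the same route as the paper's: both arguments rest on the bijection between $\{\theta\in\mathcal{B}(T):\theta_i=0\}$ and $\mathcal{B}(T')$ that preserves the covariance matrix, hence the likelihood, hence the maximizer. You additionally verify that $T'$ remains a tree of the form in Definition~\ref{def:bmtm} and that $\de_{T'}(j)=\de_T(j)$, details the paper leaves implicit; these are worthwhile checks but do not change the argument.
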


\begin{proof}

 Define $\mathcal{B}_{\theta_i = 0}(T) = \{\theta \in \mathcal{B}(T) | \theta_i = 0\}$. Consider the bijection $\phi: \mathcal{B}_{\theta_i = 0}(T) \to \mathcal{B}(T')$ defined as $\phi(\theta)_j = \theta_j, \forall j \neq i$. Now, compare $\Sigma_\theta$ and $\Sigma_{\phi(\theta)}$. Since $\theta_i = 0$, all entries will be the same in each covariance matrix. Since $\mathcal{B}_{\theta_i = 0}(T)$ and $\mathcal{B}(T')$ contain the same covariance matrices, their MLEs must be the same.
\end{proof}

\begin{corollary}[BMTM MLE must have a zero above a leaf or below the root]
\label{zeronearleaforroot}
Given a tree $T = (V, E)$ with $d\geq 2$ leaves, a data vector $x$ with unique, non-zero entries, and an MLE $\hat{\theta}$ of $\mathcal{B}(T)$, there exists $i\in V$ such that $\hat{\theta}_i = 0$ where $i$ is either a leaf node or the child of the root.
\end{corollary}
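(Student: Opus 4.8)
The plan is to argue by induction on the size of the set $V^{\mathrm{int}}$ of internal nodes (those that are neither leaves nor the child of the root), repeatedly invoking the edge-contraction mechanism of Lemma~\ref{contractlatent} to descend to a base case in which every edge is \emph{forced} to be adjacent either to a leaf or to the root. Throughout, the data vector $x$ and the set of leaves are fixed, since contraction of an internal edge changes only the internal structure of the tree.

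First I would dispatch the base case $V^{\mathrm{int}}=\emptyset$. Writing $r$ for the unique child of the root, I would observe that $V=\{0,r\}\cup(\text{leaves})$, and since leaves have no children and the root has only the child $r$, every leaf must be a child of $r$; thus $T$ is the star in which $r$ is the common parent of all $d\geq 2$ leaves. Here every non-root node is either a leaf or equals $r$, so every edge is ``above a leaf'' (indexed by a leaf) or ``below the root'' (indexed by $r$). Since Corollary~\ref{cor:bmtm-zeroes} guarantees that any MLE $\hat\theta$ has some $\hat\theta_i=0$, that zero automatically sits at a leaf or at $r$, and the claim holds.

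For the inductive step I would assume the statement for all admissible trees with strictly fewer internal nodes, take an MLE $\hat\theta$ of $\mathcal{B}(T)$, and apply Corollary~\ref{cor:bmtm-zeroes} to obtain some $i$ with $\hat\theta_i=0$. If $i$ is a leaf or the child of the root, we are done. Otherwise $i\in V^{\mathrm{int}}$, so by Lemma~\ref{contractlatent} contracting the edge $(\pi(i),i)$ produces a tree $T'$ on the same $d\geq 2$ leaves whose MLE $\hat\theta'$ satisfies $\hat\theta'_j=\hat\theta_j$ for all $j\neq i$. I would then check the bookkeeping that, because $i\neq r$ the contracted edge never touches the root, the leaves and the root-child $r$ are unchanged, every surviving node keeps its classification, and hence $V^{\mathrm{int}}(T')=V^{\mathrm{int}}(T)\setminus\{i\}$ has one fewer element. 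The induction hypothesis applied to $\hat\theta'$ yields a node $j$ that is a leaf or the child of the root of $T'$ with $\hat\theta'_j=0$; since $j\neq i$ and these nodes play the identical role in $T$, the equality $\hat\theta_j=\hat\theta'_j=0$ transports the zero back to $T$, completing the step.

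I expect the only genuine obstacle to be this bookkeeping: verifying that contracting an internal edge removes no leaf and does not alter which node is the child of the root, and that the preserved coordinate values let a leaf/root-child zero of $T'$ be read off directly as one for $T$. All the analytic content—that an MLE must have a zero, and that a zero on an internal edge can be contracted without moving the optimum—is already supplied by Corollary~\ref{cor:bmtm-zeroes} and Lemma~\ref{contractlatent}, so the argument reduces to a finite descent on $|V^{\mathrm{int}}|$ terminating in the star.
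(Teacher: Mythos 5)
Your proof is correct and rests on exactly the same two ingredients as the paper's: Corollary~\ref{cor:bmtm-zeroes} to produce a zero and Lemma~\ref{contractlatent} to contract it away when it sits on an internal edge. The paper packages this as a one-shot contradiction (assume all zeros are internal, contract them all, and the resulting MLE would violate Corollary~\ref{cor:bmtm-zeroes}), whereas you organize the same descent as an explicit induction on $|V^{\mathrm{int}}|$ with the star as base case; your version makes the repeated application of Lemma~\ref{contractlatent} and the bookkeeping about leaves and the root-child more explicit, but it is not a genuinely different argument.
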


\begin{proof}
Assume not; then from Corollary~\ref{cor:bmtm-zeroes}, there exists at least one $i\in V$ such that $\hat{\theta}_i = 0$ and $i$ is not a leaf node or the child of the root. Call $S = \{i \in V | \hat{\theta}_i \neq 0 \}$. From Lemma~\ref{contractlatent}, we know that $\theta_S$ can be written as an MLE of a BMTM. Using Corollary~\ref{cor:bmtm-zeroes} again, we know that $\theta_S$ must have one zero, which is a contradiction.
\end{proof}

We are now ready to state and prove our main result.
\begin{theorem}
\label{thm:bmtm-fully-observed}
Given a tree $T = (V, E)$ as in Definition~\ref{def:bmtm} and data vector $x$ with unique, non-zero entries, then any MLE $\hat{\theta}$ of the BMTM $\mathcal{B}(T)$ is fully-observed.
\end{theorem}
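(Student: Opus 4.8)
The plan is to verify the two defining requirements of full-observability (Definition~\ref{fully-observed-def}) separately: (i) no two determined nodes are joined by a path of $0$-variance edges, and (ii) every node is observed. Condition (i) should be immediate from positive definiteness, while (ii) will need an induction on $d$ driven by Corollary~\ref{zeronearleaforroot}, with a likelihood-factorization argument as its engine.

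For (i) I would argue by contradiction via positive definiteness. Since $\hat\theta$ is an MLE, $\ell_x(\Sigma_{\hat\theta}^{-1})$ is finite, which forces $\Sigma_{\hat\theta}\succ 0$. If two leaves were joined by a path of $0$-variance edges, their leaf variables would be almost surely equal and the two corresponding rows of $\Sigma_{\hat\theta}$ would coincide; if a leaf were joined to the root by such a path, that leaf variable would be almost surely $0$, producing a zero row. In either case $\Sigma_{\hat\theta}$ is singular, a contradiction; as there is a single root, this rules out every violation of (i).

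For (ii) I would induct on the number of leaves $d$, the base case $d=1$ being the univariate Gaussian already handled in the text. For $d\ge 2$, Corollary~\ref{zeronearleaforroot} yields an edge with $\hat\theta_i=0$ where $i$ is either a leaf or the child of the root. Let $p=\pi(i)$ in the first case and $p=i$ in the second. The $0$-edge pins the value of the internal node $p$: either $W_p=x_i$ (zero above a leaf) or $W_p=0$ (zero below the root). Because $\{p\}$ $d$-separates the connected components of $T$ obtained by deleting $p$, the Gaussian tree Markov property makes the joint density factorize, so that on the face $\{\theta:\theta_i=0\}$ (to which $\hat\theta$ belongs) $\ell_x(\theta)$ equals, up to an additive constant, the sum of the log-likelihoods of the pieces, with edge parameters decoupling across pieces. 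Concretely, the pieces are the ``upper'' component containing the root (with $p$ re-attached as an observed leaf of value $x_i$; in the root case this component has no leaves and is trivial) together with one ``lower'' component per remaining child of $p$ (each rooted at $p$, with its leaf data recentred by subtracting the pinned value). Each piece is an honest BMTM with strictly fewer than $d$ leaves, and recentring preserves the unique, non-zero data assumption since every $x_j\ne x_i$. Consequently $\hat\theta$ restricted to each piece maximizes that piece's likelihood, so it is an MLE there and, by the induction hypothesis, fully-observed.

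I would then reassemble: any node of $T$ lies in some piece and, by the inductive hypothesis, is connected by $0$-edges to a determined node of that piece. Such a determined node is either a genuine leaf or the root of $T$ (whence the node is observed in $T$) or the splitting node $p$; but $p$ is itself observed in $T$, being joined by a $0$-edge to the leaf $i$ in the first case or to the root in the second, so any node $0$-connected to $p$ is observed in $T$ as well. Hence every node of $T$ is observed, and together with (i) this yields full-observability. The main obstacle is setting up the factorization at $p$ rigorously and checking that each resulting piece is a bona fide smaller BMTM with admissible (unique, non-zero, recentred) data so that the induction hypothesis applies; the positive-definiteness step for (i) and the recombination of observed-ness are comparatively routine.
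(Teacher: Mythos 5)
Your proof is correct, but it is organized quite differently from the paper's. The paper argues by contradiction: it picks a non-observed node $i$ whose parent is observed, isolates the maximal connected non-observed component $L$ containing $i$ together with its frontier of observed children $L_C$, factorizes the likelihood by conditioning on the pinned value $x_{\pi(i)}$, and then applies Corollary~\ref{zeronearleaforroot} to the resulting sub-BMTM on $L\cup L_C$ to produce a zero edge adjacent to $L$, contradicting non-observedness. You instead run an induction on $d$: Corollary~\ref{zeronearleaforroot} hands you a zero edge, you condition on the node $p$ that it pins, split the tree into strictly smaller BMTMs with recentred data, and invoke the inductive hypothesis on each piece before reassembling. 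The two arguments share their essential ingredients --- Corollary~\ref{zeronearleaforroot} plus the conditional-independence factorization at a pinned internal node with recentred, still unique and non-zero, data --- but deploy them in opposite directions (zooming in on a bad component versus dividing and conquering at a known zero). Your route carries a little extra bookkeeping that the write-up should make explicit: each piece must be a bona fide tree in the sense of Definition~\ref{def:bmtm} (this holds because only $p$ loses children, so no internal node drops to degree $2$), and the leaf $i$ itself lies in none of your pieces but is determined, hence trivially observed. In exchange you get two things the paper's proof leaves implicit: an explicit verification of the second clause of Definition~\ref{fully-observed-def} (no two determined nodes joined by $0$-variance edges), which you correctly derive from the invertibility of $\Sigma_{\hat\theta}$, and a constructive recursion at zero edges that foreshadows the dynamic program of Section~\ref{sec:compute}.
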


\begin{proof}

Toward a contradiction, assume that the MLE $\hat{\theta}$ is not fully-observed. We adopt the notation that $x$ is indexed by determined nodes by Definition \ref{fully-observed-def}. Thus, if $i$ is a leaf node and $\theta_i = 0$, then we may write $x_{\pi(i)} = x_i$. If $\theta_{\pi(i)} = 0$ as well, then we may write $x_{\pi(\pi(i))} = x_i$, and so on.

By assumption, there are latent nodes of the resulting MLE tree that are non-observed.
Consider some non-observed node $i$ in the MLE whose parent $\pi(i)$ is observed.
Since $W_0 \equiv 0$ is observed, such a node always exists.
We partition $V$ into four disjoint sets: 
\begin{align*}
    L := {} & \text{largest connected component of non-observed nodes that includes $i$},\\
    L_A := {} & \text{non-descendants of $i$},\\
    L_C := {} & \{ j \in V\setminus L : \pi(j) \in L \}, \, \text{the direct children of nodes in L},\\
    L_D := {} & \text{proper descendants of nodes in $L_C$}.
\end{align*}
Note that by construction, $L$ must be contained in the descendants of $i$ and so $L$, $L_A$, $L_C$, and $L_D$ are all mutually exclusive. Furthermore, $L_C$ consists of entirely observed nodes, and $V = L \cup L_A \cup L_C \cup L_D$. 

\begin{figure}
    \centering
    \includegraphics{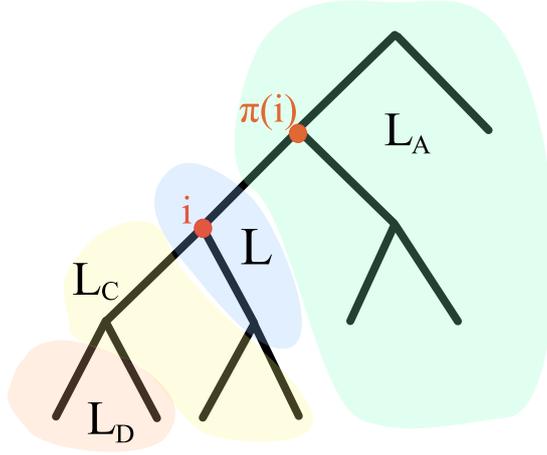}
    \caption{The vertex partition in the proof of Theorem~\ref{thm:bmtm-fully-observed}.}
    \label{fig:main-proof}
\end{figure}

We may factorize the joint distribution of our model according to its DAG structure. Recall the linear structural equations defining a BMTM from Definition \ref{def:bmtm}. We call $p_\theta(W = w)$ the joint density function for the random vector $W$ encompassing all of the nodes, both leaf and latent, given edge parameters $\theta$. With this in mind, we have:
\begin{align*}
    p_{\theta}(W = w) &=
    p_{\theta}(W_{L_A} = w_{L_A}) \, \\
    &\qquad \cdot p_{\theta}(W_{L \cup L_C} = w_{L \cup L_C} \, | \, W_{L_A} = w_{L_A}) \,\\ 
    &\qquad \cdot p_{\theta}(W_{L_D} = w_{L_D} \, | \, W_{L_A \cup L \cup L_C} = w_{L_A \cup L \cup L_C})
\end{align*}
    We may now exploit the conditional independence relations within $\mathcal{B}(T)$. Note that all members of $W_{L_{A}}$ are independent of $W_L$ given the value of $W_\pi(i)$ and all members of $W_{L_D}$ are independent of $W_L$ given the the values of $W_{L_C}$. Thus we obtain:
    \begin{align*}
    p_{\theta}(W = w) &=p_{\theta_{L_A}}(W_{L_A} = w_{L_A}) \,\\ 
    &\qquad \cdot p_{\theta_{L \cup L_C}}(W_{L \cup L_C}  = w_{L \cup L_C} \, | \, W_{\pi(i)} = w_{\pi(i)}) \,\\ 
    & \qquad \cdot p_{\theta_{L_D}}(W_{L_D} =  w_{L_D} \, | \, W_{L_C} = w_{L_C}).
    \end{align*}
Recall the density function of a BMTM is the marginal distribution of its leaf nodes, and its likelihood function is this marginal parametrized by $\theta$ for a fixed data value $x$. Thus, the above decomposition of the joint distribution and conditional independence structure yield:
\begin{align*}
    \hat{\theta}_{L\cup L_C} &= \argmax_{\theta \in \mathcal{B}(T)} \int_{-\infty}^\infty p_{\theta_{L \cup L_C}}(W_{L_C}  = x_{L_C}, W_L = w_L \, | \, W_{\pi(i)} = x_{\pi(i)})\,\,dw_L\\
    &=\argmax_{\theta \in \mathcal{B}(T')} p_\theta(W_{L_C}  = x_{L_C} | W_{\pi(i)} = x_{\pi(i)}),
    \end{align*}
    where $T'$ is the subgraph of $T$ containing all $L \cup L_C$.  Define $x'$ as the vector $x_{L_C}$ subtracted by the value $x_{\pi(i)}$. By the linearity of the structural equation model in Definition~\ref{def:bmtm}, we may then write:
    \begin{align*}
    \hat{\theta}_{L\cup L_C} &=\argmax_{\theta \in \mathcal{B}(T')} \ell_{x'}(\Sigma_{\theta}^{-1}),
\end{align*}
where $\Sigma_\theta^{-1}$ obeys the construction in \eqref{covfromedge} for $\mathcal{B}(T')$. By Definition~\ref{def:bmtm}, node $i$ has an outdegree of 2 or more, so $T'$ has 2 or more leaves. Thus, from Corollary~\ref{zeronearleaforroot}, we obtain that $\hat{\theta}_{L\cup L_C}$ must have a zero edge below its root or a member of $L_C$. However, by definition $\hat{\theta}_{L\cup L_C}$ contains no such zeroes, since all members of $L$ are not observed. We have a contradiction, and it follows that $\hat{\theta}$ is fully observed.
\end{proof}

\section{Uniqueness of the MLE}\label{sec:unique}

In this section, we show that, in the case where it exists, the MLE studied in the previous section is almost surely unique. We begin with some new graph theoretic definitions. Consider a tree $T = (V, E)$. We label its set of ``determined'' nodes, from Definition \ref{fully-observed-def}, as $V^{\mathrm{det}} \subset V$. Furthermore, we consider some order $<$ over the vertices such that $\ell < i$ for any $\ell \in V^{\mathrm{det}}$ and $i \in V \setminus V^{\mathrm{det}}$. Then:

\begin{definition}[Edge Contraction]
Given $(i, j) \in E$, we define \textbf{contracted graph} $T/(i, j)$ as the graph resulting from the contraction of edge $(i, j)$ to form a new vertex. We call this new vertex $i$ if $i < j$ and $j$ otherwise.
\end{definition}

\begin{definition}[Set Contraction]
Given $S \subset E$, we define $T/S$ as the graph resulting from contracting all edges in $S$ in series. We call its edges $E/S$ and its vertices $V/S$.
\end{definition}

For any $S\subset E$, it is clear that $T/S$ does not depend on the order in which we contract the edges. It is also clear that our ordering on the nodes maintains that $V^{\mathrm{det}}$ is always contained within $T/S$. From Definition \ref{fully-observed-def}, we immediately get the following important fact: 

\begin{prop}
$T / S$ is a graph over just vertices $V^{\mathrm{det}}$ if and only if zeroing all edges $S$ in $T$ results in a fully-observed tree.
\end{prop}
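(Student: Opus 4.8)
The plan is to reduce the statement to a purely combinatorial count of the connected components of the subgraph with edge set $S$. The first step is to recall the standard fact that contracting a set of edges $S$ in a graph identifies exactly those vertices lying in a common connected component of the subgraph $(V, S)$; hence the vertices of $T/S$ are in bijection with the connected components of $(V, S)$, where any vertex incident to no edge of $S$ forms its own singleton component. (This is where the order-independence of $T/S$ noted above is used.) Under the chosen ordering, in which every determined node precedes every non-determined node, the representative retained as the name of a contracted component is its globally smallest vertex, and by induction on the contractions this minimum survives every step. Consequently the name of a component is a determined node precisely when that component meets $V^{\mathrm{det}}$.

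Next I would translate the two defining clauses of ``fully-observed'' into this component language, using that after zeroing the $0$-variance edges are exactly $S$. A node $i$ is observed iff it is joined to some determined node by a path in $S$, i.e. iff its component of $(V, S)$ contains a determined node; so ``every node is observed'' is equivalent to ``every component of $(V, S)$ meets $V^{\mathrm{det}}$.'' Likewise, ``no two determined nodes are connected by a path of $0$-variance edges'' is equivalent to ``no component of $(V, S)$ contains two or more determined nodes.'' Combining these, zeroing $S$ yields a fully-observed tree iff every connected component of $(V, S)$ contains exactly one determined node.

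Finally I would match this against the condition on $T/S$. Since a component's name is determined iff the component meets $V^{\mathrm{det}}$, and a determined node $v$ appears as a vertex of $T/S$ iff it is the unique determined node of its component, the vertex set of $T/S$ equals $V^{\mathrm{det}}$ exactly when each component contains exactly one determined node---the same condition obtained above---which closes the equivalence. The one delicate point, and the step I would write most carefully, is the bookkeeping around the naming convention: one must check both that every retained vertex is determined (so no spurious non-determined vertex survives, ruling out an unobserved node) and that no determined vertex is absorbed into another (ruling out two determined nodes sharing a component), so that the bijection between components and vertices of $T/S$ restricts to the identity on $V^{\mathrm{det}}$ precisely under the stated hypothesis.
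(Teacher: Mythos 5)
Your argument is correct: translating both sides of the equivalence into the condition that every connected component of $(V,S)$ contains exactly one determined node, and checking that the minimum-label naming convention keeps exactly the determined representatives, is precisely the right bookkeeping. The paper itself states this proposition without proof, calling it immediate from Definition~\ref{fully-observed-def}, so your write-up is simply a careful formalization of the same (and only natural) approach rather than a different route.
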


Now, given some $S \subset E$, we define $B/S(T)$ as the set of all $\theta \in \mathcal{B}(T)$ such that $\theta_i = 0$ for all $i \in S$. We now show that, restricted to a fully-observed sparsity structure $S$, the BMTM MLE may be computed immediately, which will be integral to our uniqueness proof.


\begin{lemma}\label{fotreelike}
Given a tree $T = (V, E)$ and $S\subset E$ such that $T/S$ is a graph over just $V^{\mathrm{det}}$, we have that
\begin{equation*}
\max_{\theta \in B/S(G)}  \exp(\ell_x(\Sigma_\theta^{-1})) \;\;\propto\;\; \cfrac{1}{\prod_{(i, j) \in E/S} |x_i - x_j|}.
\end{equation*}

\end{lemma}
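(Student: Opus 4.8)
The plan is to show that restricting the BMTM to $B/S(T)$ yields a fully-observed Gaussian tree model on the contracted tree $T/S$, whose one-sample likelihood factorizes into a product of independent univariate Gaussian factors, one per edge of $E/S$; each factor can then be maximized separately in closed form. First I would record the combinatorial structure. Since $T/S$ has vertex set exactly $V^{\mathrm{det}} = \{0\}\cup\{\text{leaves}\}$, it is a rooted tree on $d+1$ nodes with $d$ edges, and its non-root vertices are precisely the $d$ leaves of $T$. For an edge $(i,j)\in E/S$ with $i=\pi(j)$, the path from $i$ to $j$ in $T$ passes only through internal (non-determined) nodes, all of which lie in $S$ and hence carry parameter $0$; the unique surviving edge on this path is the one terminating at $j$, so the increment $W_j-W_i$ has variance exactly $\theta_j$. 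Because distinct edges of $T/S$ have distinct child endpoints, the parameters $\{\theta_j\}_{(i,j)\in E/S}$ are independent free coordinates ranging over $(0,\infty)$.

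Next I would write the likelihood in the increment coordinates. The linear map $M$ sending the increments $(W_j-W_{\pi(j)})_{j\neq 0}$ to the leaf values $(W_j)_{j\neq 0}$ is unipotent (lower-triangular with unit diagonal once the nodes are ordered by depth), so $\det M=1$ and $\Sigma_\theta=M\,\mathrm{diag}(\theta)\,M^\top$. Consequently $\det\Sigma_\theta=\prod_j\theta_j$ and, writing $M^{-1}x$ for the vector of data increments (with $x_0=0$), one gets $x^\top\Sigma_\theta^{-1}x=\sum_{(i,j)\in E/S}(x_i-x_j)^2/\theta_j$. Substituting into $\ell_x(\Sigma_\theta^{-1})=\tfrac12\log\det\Sigma_\theta^{-1}-\tfrac12 x^\top\Sigma_\theta^{-1}x$ yields the factorization
\[
\exp\!\big(\ell_x(\Sigma_\theta^{-1})\big)=\prod_{(i,j)\in E/S}\theta_j^{-1/2}\exp\!\Big(-\tfrac{(x_i-x_j)^2}{2\theta_j}\Big).
\]

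Finally I would maximize factor by factor. For fixed $(i,j)$ the map $\sigma^2\mapsto (\sigma^2)^{-1/2}\exp(-(x_i-x_j)^2/(2\sigma^2))$ is maximized at $\sigma^2=(x_i-x_j)^2$, which is positive and attainable since the data entries are distinct, with maximal value $e^{-1/2}/|x_i-x_j|$. Multiplying over the $d$ edges gives $\max_{\theta\in B/S(T)}\exp(\ell_x(\Sigma_\theta^{-1}))=e^{-d/2}\big/\prod_{(i,j)\in E/S}|x_i-x_j|$, and since $|E/S|=d$ is fixed the constant $e^{-d/2}$ does not depend on $S$, establishing the claimed proportionality. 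I expect the main obstacle to be the factorization step, specifically justifying that each edge variance is a single independent free parameter: this hinges on the combinatorial observation that every contracted path carries exactly one surviving edge, after which the per-factor maximization and the tracking of the $S$-independent constant are routine.
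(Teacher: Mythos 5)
Your proposal is correct and follows essentially the same route as the paper: both factor the one-sample likelihood into a product over the edges of $E/S$ of univariate Gaussian densities of the increments $x_i-x_j$, and then maximize each factor separately at $\sigma^2=(x_i-x_j)^2$ to obtain the value $e^{-1/2}/|x_i-x_j|$ per edge. Your explicit justification of the factorization via $\Sigma_\theta=M\,\mathrm{diag}(\theta)\,M^\top$ with $M$ unipotent is a slightly more detailed version of the paper's appeal to full observability, not a different argument.
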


\begin{proof}
Since any $\theta \in B/S(T)$ is fully observed, all latent nodes are observed, and their values are fixed by observed data values for our leaf nodes. Writing the density function of a normal distribution as $p_{\mu = 0, \sigma^{2} = \theta_i}$, the likelihood becomes
\begin{align*}
    \max_{\theta \in B/S(T)} \exp(\ell_x(\Sigma_\theta^{-1})) 
    &=  \max_{\theta \in B/S(T)} \prod_{(i, j) \in E/S} p_{\mu = 0, \sigma^{2} = \theta_j}(x_i - x_j),
\end{align*}
where, to incorporate the root, we treat $x$ as a $d+1$ dimensional vector with $x_0 = 0$. The edge variances used in each term of the product are disjoint. Thus,
\begin{align*}
    \max_{\theta \in B/S(T)} \exp(\ell_x(\Sigma_\theta^{-1}))
    &=  \prod_{(i, j) \in E/S} \max_{\theta_j \in \mathbb{R}} p_{\mu = 0, \sigma^{2} = \theta_j}(x_i - x_j).
\end{align*}

It is well known that $\argmax_{\theta_j \in \mathbb{R}}p_{\mu = 0, \sigma^{2} = \theta_j}(x_i-x_j) = (x_i-x_j)^2$. It thus follows that
\begin{align*}
    \max_{\theta \in B/S(T)} \exp(\ell_x(\Sigma_\theta^{-1})) 
    &=  \prod_{(i, j) \in E/S} p_{\mu = 0, \sigma^{2} = (x_i - x_j)^2}\left(x_i - x_j\right)\\
    &= \cfrac{(2\pi)^{-d}}{\prod_{(i, j) \in E/S} |x_i - x_j|}\exp\left(-\frac{1}{2}\left(\frac{x_0}{x_0}\right)^2 - \frac{1}{2}\sum_{(i, j) \in E/S} \left(\frac{x_i-x_j}{x_i-x_j}\right)^2\right)\\
    &= \cfrac{(2\pi)^{-d}}{\prod_{(i, j) \in E/S} |x_i - x_j|}\exp\left(-\frac{d}{2}\right)\\
    &\propto\, \cfrac{1}{\prod_{(i, j) \in E/S} |x_i - x_j|},
\end{align*}
which completes the proof.
\end{proof}

As shown in Section \ref{sec:compute}, the above result will motivate a polynomial time algorithm for computing the BMTM MLE, by efficiently searching through candidate fully observed sparsity structures. We now note that two fully-observed contracted graphs are only the same if one has chosen the same edges to contract.

\begin{lemma}\label{contracteddifferent}
Given a tree $T = (V, E)$ as in Definition~\ref{def:bmtm}, and $S, S' \subset E$ such that $S\neq S'$ and such that both $S$ and $S'$ result in a fully-observed tree when zeroed, $T/S$ is not equal to $T/S'$.
\end{lemma}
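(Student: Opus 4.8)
The plan is to prove the contrapositive in the strong form that the contracted set $S$ is \emph{reconstructible} from the labeled quotient $T/S$ together with $T$; then $T/S = T/S'$ will immediately force $S = S'$. By the characterization of fully-observed contractions stated just above the lemma, the hypotheses guarantee that $T/S$ and $T/S'$ are both trees on the common vertex set $V^{\mathrm{det}}$ (a contraction of a tree is a tree), so the whole problem reduces to recovering the contracted edges from this labeled quotient.

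The main device I would use is the \emph{split} of $V^{\mathrm{det}}$ induced by an edge. First, because $T$ is a tree, contracting the edges of $S$ creates no loops or parallel edges, so the surviving edges $F := E \setminus S$ are in bijection with the edges of $T/S$: an edge $e\in F$ maps to the edge $\bar e$ joining the two determined nodes that label the components of its endpoints. Next, for any $e \in E$, deleting $e$ cuts $T$ into two subtrees, and using the degree hypotheses of Definition~\ref{def:bmtm} (internal vertices have degree $\geq 3$, leaves and the root have degree $1$) one checks each side contains at least one determined node, so $e$ determines a genuine bipartition $\mathrm{split}_T(e)$ of $V^{\mathrm{det}}$. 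The first key step is to verify that for $e = f_{\bar e}\in F$ corresponding to $\bar e\in E(T/S)$, this bipartition equals the bipartition of $V^{\mathrm{det}}$ obtained by deleting $\bar e$ from $T/S$; this holds because $f_{\bar e}$ is the \emph{unique} edge of $T$ running between the union of components on one side of $\bar e$ and the union on the other.

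The crux is the injectivity of $e \mapsto \mathrm{split}_T(e)$ on $E$. I would show two distinct edges cannot induce the same bipartition of $V^{\mathrm{det}}$: if they did, the interior vertices of the path joining them could carry no hanging terminal-bearing subtrees and would therefore have degree $2$ (and in the adjacent case the shared vertex would have a third incident subtree whose terminal is split differently by the two edges), contradicting the degree-$\geq 3$ condition on internal nodes. This is exactly where Definition~\ref{def:bmtm}'s suppression of degree-$2$ vertices is indispensable, and it is the step I expect to be the main obstacle, since the statement is simply false once degree-$2$ internal vertices are allowed. Granting injectivity, each split arising from $T/S$ is realized by a unique edge of $T$, so $F = \{\, e \in E : \mathrm{split}_T(e) \in \Pi \,\}$, where $\Pi$ is the set of bipartitions induced by the edges of $T/S$.

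Since $\Pi$ depends only on the labeled tree $T/S$, so does $F$, and hence so does $S = E \setminus F$. Applying this to $S$ and $S'$ under the assumption $T/S = T/S'$ gives $\Pi = \Pi'$, therefore $F = F'$ and $S = S'$, which is the contrapositive of the claim. (As a consistency check one may note the edge counts force $|S| = |S'| = |V| - |V^{\mathrm{det}}|$, though this is not needed for the argument.)
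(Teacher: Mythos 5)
Your proposal is correct and follows essentially the same route as the paper's proof: both identify each edge with the bipartition (cut) it induces on the determined nodes $V^{\mathrm{det}}$, observe that the cuts of the surviving edges are exactly the cuts of the quotient tree, and invoke the absence of degree-$2$ vertices from Definition~\ref{def:bmtm} to conclude that $e \mapsto \mathrm{split}_T(e)$ is injective, so $T/S$ determines $S$. Your write-up is somewhat more explicit about why injectivity holds (the hanging-subtree argument along the path between two candidate edges), but the underlying argument is the same.
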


\begin{proof}

Given some edge $e$ in $T$, we define the $(e, V^{\mathrm{det}})$-cut as the partition over $V^{\mathrm{det}}$ induced by cutting $T$ at $e$. We call the set of $(e, V^{\mathrm{det}})$-cuts for all edges $e \in E$ as the $V^{\mathrm{det}}$-cuts of $E$ in $T$. Since there is only one path between any two nodes in a tree, the $(e, V^{\mathrm{det}})$-cuts for any $e \notin S$ will remain in the $V^{\mathrm{det}}$-cuts of $T/S$. Trivially, the $(e, V^{\mathrm{det}})$-cuts for any $e \in S$ will not be in the $V^{\mathrm{det}}$-cuts of $T/S$. 

Now, assume that $T/S$ and $T/S'$ are the same. Then, their $V^{\mathrm{det}}$-cuts are the same. Call the complements of $S$ and $S'$ as $S^c$ and $S'^c$, respectively. From the above, it follows that the $V^{\mathrm{det}}$-cuts of $S^c$ in $T$ are the same as the $V^{\mathrm{det}}$-cuts of $S'^c$ in $T$. Now, since there are no degree 2 nodes in the graph according to Definition~\ref{def:bmtm}, each edge in $T$ results in a unique cut over $V^{\mathrm{det}}$ and we have that $S^c=S'^c$ and $S = S'$, which is a contradiction.
\end{proof}

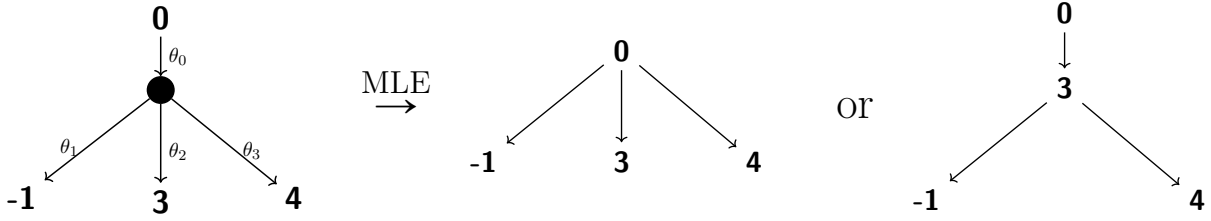
\begin{figure}
\centering
\begin{minipage}{0.3\textwidth}
\centering
    \resizebox{0.8\textwidth}{!}{\begin{tikzpicture}[auto, node distance=3cm, every loop/.style={},
                    thick,main node/.style={font=\sffamily\Large\bfseries},
                    zero node/.style={font=\sffamily\Large\bfseries},
                    solid node/.style={circle,draw,inner sep=1.5,fill=black,minimum size=0.5cm}]
]

  \node[zero node] (r) {0};
  \node[solid node] (1) [below=0.75cm of r] {};
  \node[main node] (11) [below left=1.5cm and 2cm of 1] {-1};
  \node[main node] (12) [below=1.5cm of 1] {3};
  \node[main node] (13) [below right=1.5cm and 2cm of 1] {4};
    \path[every node/.style={font=\sffamily\small}]
    (r) edge[->] node {$\theta_0$} (1)
    (1) edge[->] node[pos=.6, left] {$\theta_1$} (11)
    (1) edge[->] node[pos=.6, right] {$\theta_2$} (12)
    (1) edge[->] node[pos=.6, right] {$\theta_3$} (13)
    ;
\end{tikzpicture}}
\end{minipage}\hfill\Large{$\overset{\text{MLE}}{\to}$}\kern -0.5em\hfill\begin{minipage}{0.3\textwidth}
    \centering
    \resizebox{0.8\textwidth}{!}{\begin{tikzpicture}[auto, node distance=3cm, every loop/.style={},
                    thick,main node/.style={font=\sffamily\Large\bfseries},
                    zero node/.style={font=\sffamily\Large\bfseries},
                    solid node/.style={circle,draw,inner sep=1.5,fill=black,minimum size=0.5cm}]
]

  \node[zero node] (1) {0};
  \node[main node] (11) [below left=1.5cm and 2cm of 1] {-1};
  \node[main node] (12) [below=1.5cm of 1] {3};
  \node[main node] (13) [below right=1.5cm and 2cm of 1] {4};
    \path[every node/.style={font=\sffamily\small}]
    (1) edge[->] node[pos=.6, left] {} (11)
    (1) edge[->] node[pos=.6, right] {} (12)
    (1) edge[->] node[pos=.6, right] {} (13)
    ;
\end{tikzpicture}}
\end{minipage}
\hfill\Large{or}\kern -0.5em\hfill\begin{minipage}{0.3\textwidth}
    \centering
    \resizebox{0.8\textwidth}{!}{\begin{tikzpicture}[auto, node distance=3cm, every loop/.style={},
                    thick,main node/.style={font=\sffamily\Large\bfseries},
                    zero node/.style={font=\sffamily\Large\bfseries},
                    solid node/.style={circle,draw,inner sep=1.5,fill=black,minimum size=0.5cm}]
]

  \node[zero node] (r) {0};
  \node[main node] (1) [below=0.75cm of r] {3};
  \node[main node] (11) [below left=1.5cm and 2cm of 1] {-1};
  \node[main node] (13) [below right=1.5cm and 2cm of 1] {4};
    \path[every node/.style={font=\sffamily\small}]
    (r) edge[->] node {} (1)
    (1) edge[->] node[pos=.6, left] {} (11)
    (1) edge[->] node[pos=.6, right] {} (13)
    ;
\end{tikzpicture}}
\end{minipage}

\caption{We consider the problem of computing a BMTM MLE for the 3-leaf star over data $x = \{-1, 0, 3, 4\}$. The original tree is shown on the left. Setting $\theta_0$ to zero, results in the first MLE structure, shown in the middle. Setting $\theta_2$ to zero achieves another MLE structure, shown on the right. As proven in Theorem \ref{thm:bmtm-unique}, data vectors like $x$ that result in non-unique MLEs occur with probability 0.}\label{fig:nonuniquemle}
\end{figure}

We are now ready to show that the MLE is unique, almost surely. To see why there exist some data vectors $x$ for whom the MLE is non-unique, consider $x = \{-1, 3, 4\}$ for the 3-leaf star in Figure \ref{fig:nonuniquemle}. Recall from Lemma \ref{fotreelike}, that two fully observed sparsity structures $S, S'$ result in the same likelihood if $J(E/S) = J(E/S')$, where $J(E/S) = \prod_{(i, j) \in E/S} |x_i - x_j|$. Note that, in Figure \ref{fig:nonuniquemle}, setting the edge below the root to zero results in a maximum likelihood tree and:
\begin{align*}
    J(E/e_0) = |-1-0|\cdot|3-0|\cdot|4-0| = 12.
\end{align*}
Likewise, setting the edge above the second leaf to zero results in:
\begin{align*}
   J(E/e_2) = |0-3|\cdot|-1-3|\cdot|4-3| = 12.
\end{align*}
Thus, the MLE is non-unique in this case. However, we now show that such data vectors occur with probability 0.

\begin{theorem}
\label{thm:bmtm-unique}
Given a tree $T = (V, E)$ as in Definition~\ref{def:bmtm}, the maximum likelihood estimate of the BMTM $\mathcal{B}(T)$ is unique, almost surely.
\end{theorem}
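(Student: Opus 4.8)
The plan is to reduce the uniqueness of $\hat\theta$ to a statement about the likelihood values of the finitely many fully-observed sparsity structures, and then to establish that statement by a unique-factorization argument combined with the fact that a proper algebraic variety has Lebesgue measure zero.

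First I would assemble the pieces already in hand. By Theorem~\ref{thm:bmtm-fully-observed}, every MLE $\hat\theta$ is fully-observed, so its zero pattern is a sparsity structure $S\subset E$ with $T/S$ a graph over $V^{\mathrm{det}}$. By Lemma~\ref{fotreelike}, the largest likelihood attainable within $B/S(T)$ is, up to the common constant, $1/J(E/S)$ with $J(E/S)=\prod_{(i,j)\in E/S}|x_i-x_j|$, and—because the entries of $x$ are almost surely distinct—this value is attained at the \emph{unique} fully-observed point whose surviving edge variances equal $(x_i-x_j)^2$. Since there are only finitely many fully-observed structures, the global MLE is attained exactly at those $S$ minimizing $J(E/S)$, and within any fixed such $S$ the maximizer is unique; moreover distinct structures have distinct zero patterns and hence give distinct $\hat\theta$. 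Therefore $\hat\theta$ is unique if and only if the minimizer of $J(E/S)$ is unique.

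Next I would show that, almost surely, the values $J(E/S)$ are pairwise distinct. Fix two structures $S\neq S'$. By Lemma~\ref{contracteddifferent}, $T/S\neq T/S'$, so the edge sets $E/S$ and $E/S'$ on $V^{\mathrm{det}}$ differ. Since $J\geq 0$, the event $\{J(E/S)=J(E/S')\}$ is the zero set of the polynomial
\[
Q_{S,S'}(x)=\prod_{(i,j)\in E/S}(x_i-x_j)^2-\prod_{(i,j)\in E/S'}(x_i-x_j)^2,\qquad x_0=0.
\]
Under the convention $x_0=0$, each factor $(x_i-x_j)$ is an irreducible linear form, and distinct edges of $V^{\mathrm{det}}$ yield pairwise non-associate forms; so by unique factorization in $\mathbb{R}[x_1,\dots,x_d]$ the two products coincide as polynomials iff $E/S=E/S'$. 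As the edge sets differ, $Q_{S,S'}\not\equiv 0$, and its zero set is a proper algebraic subvariety of $\mathbb{R}^d$, hence of measure zero. Taking the union over the finitely many pairs $(S,S')$, together with the measure-zero event that the entries of $x$ fail to be distinct and nonzero, and using absolute continuity of the data distribution, I conclude that almost surely the minimizer is unique, and therefore so is $\hat\theta$.

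The main obstacle is the second step: ruling out accidental ties $J(E/S)=J(E/S')$ for distinct structures. Here Lemma~\ref{contracteddifferent} is indispensable, as it guarantees the underlying contracted edge sets genuinely differ; this must then be upgraded to a statement about polynomials, where the delicate point is to check that the $x_0=0$ substitution still leaves all linear factors irreducible and pairwise non-associate, so that unique factorization forbids $Q_{S,S'}$ from vanishing identically. Everything else—the reduction to a finite minimization and the passage from non-identical polynomials to a measure-zero exceptional set—is routine bookkeeping.
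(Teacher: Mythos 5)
Your proposal is correct and follows essentially the same route as the paper: reduce uniqueness to the finitely many fully-observed sparsity structures via Theorem~\ref{thm:bmtm-fully-observed} and Lemma~\ref{fotreelike}, then use Lemma~\ref{contracteddifferent} to see that the tie event $J(E/S)=J(E/S')$ is the zero locus of a not-identically-zero polynomial, hence of measure zero. Your unique-factorization argument for why $E/S\neq E/S'$ forces $Q_{S,S'}\not\equiv 0$ is a welcome elaboration of a step the paper only asserts.
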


\begin{proof}
We define $\mathcal{Z} \subset \mathbb{R}^{d}$ as the the set of all non-zero unique $x$ that result in a non-unique MLE. Our aim is to show that $\mathcal{Z}$ has Lebesgue measure zero. Since the distribution given by $\mathcal{B}(T)$ is absolutely continuous with respect to Lebesgue measure and the set of $x$ with non-unique or zero values has Lebesgue measure zero as well, this yields the claim.

Since all MLEs of a BMTM are fully observed trees, if the MLE of $\mathcal{B}(T)$ is non-unique, then there necessarily exist two fully-observed $S, S' \subset E$ such that the MLEs of $B/S(T)$ and $B/S'(T)$ are the same. Define $\mathcal{Z}_{S, S'} \subset \mathbb{R}^{d}$ as the set of all non-zero, unique $x$ such that the MLEs of $B/S(T)$ and $B/S'(T)$ are the same. Clearly, 
$$\mathcal{Z}\;\subset\;\bigcup_{S,S'}\mathcal{Z}_{S, S'},$$
where the union goes over all pairs of distinct fully-observed trees.

 We show that $\mathcal{Z}_{S, S'}$ has Lebesgue measure of zero, which will then imply that $\mathcal{Z}$ has Lebegue measure zero, since $\mathcal{Z}$ would then be a finite union of measure zero sets. From Lemma \ref{fotreelike}, we have that:
\begin{align}\label{eq:ZSS}
    \mathcal{Z}_{S, S'} = \left\{x \in \mathbb{R}^{d} \,\,| \prod_{(i, j) \in E/S} (x_i - x_j)^2 = \prod_{(i, j) \in E/S'} (x_i - x_j)^2\right\},
\end{align}
where, as in Lemma \ref{fotreelike}, we incorporate the root by treating $x$ as a $d+1$ dimensional vector with $x_0 = 0$. 
Note that, by Lemma \ref{contracteddifferent}, we have that $E/S \neq E/S'$. It follows that the polynomial equation defining $\mathcal{Z}_{S, S'}$ in \eqref{eq:ZSS} is not identically zero and so the set where this equation holds has measure zero; see e.g. \cite{okamoto1973distinctness}.
\end{proof}

\section{Computing the MLE}\label{sec:compute}

In this section, leveraging our fully-observed result, we present a polynomial time algorithm for computing a one-sample BMTM MLE. From Theorem~\ref{thm:bmtm-fully-observed}, given a tree $T = (V, E)$ with  $d$ leaf nodes, the problem of finding the one-sample MLE for the BMTM $\mathcal{B}(T)$ with data $x$ reduces to picking the best fully-observed sparsity structure. By Lemma~\ref{fotreelike}, to identify a maximum likelihood sparsity structure, one may solve the following optimization problem:
\begin{align}
    \min_{S \in \mathcal{F}(T)} \prod_{(i, j) \in E/S} |x_i - x_j|\label{eq:algobj},
\end{align}
where $\mathcal{F}(T)$ contains all $S \subset E$ such zeroing $S$ results in a fully-observed tree. 

We note that this problem is not trivial. The size of $\mathcal{F}(T)$ is lower bounded by the product of all latent node out-degrees, which is itself lower bounded by $2^{d}$. However, we may leverage the conditional independence structure of our model. Namely, once we've decided what we'd like the observed value of any given latent node to be, we may then optimize the sparsity structure of all its child subtrees independently.


\subsection{Procedure} 

\begin{figure}
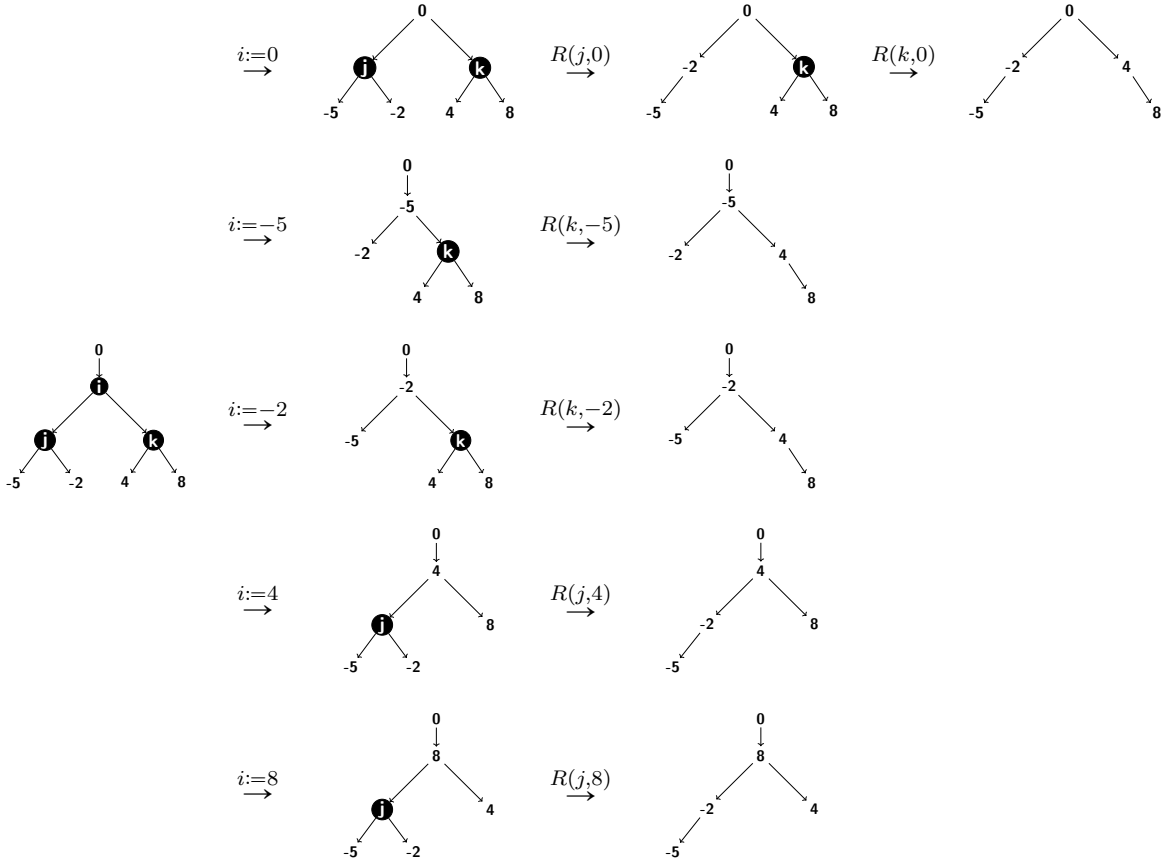

\include{algofigure}
\centering
\caption{A depiction of the execution tree of $R(i, 0)$. All possible data values for $i$ are considered. For each such value, its leaf node is contracted to $i$, and $R$ is called on the remaining subtrees of $i$. The objective is minimized when $i$ is set to $0$. This is shown on the top row and the minimum objective value is $|-5+2|\cdot|-2-0|\cdot|0-4|\cdot|4-8| = 96$.}\label{fig:proc}
\end{figure}

Consider a directed tree $T = (V, E)$ with a unique, non-zero data vector $x$ indexed by leaf nodes. In pursuit of a dynamic programming solution to Problem (\ref{eq:algobj}), we define a recursive subroutine $R(i, x_\ell)$ on a node $i \in V$ and data value $x_\ell$. $R(i, x_\ell)$ corresponds to the contribution of the subtree rooted at node $i$ to the objective \eqref{eq:algobj}, including the edge to its parent, when its parent value is $x_\ell$. There are two cases. First, if $i$ is not a leaf node, then we define our subroutine $R$ as:
\begin{align*}
    R(i, x_\ell) = \min\left\{ \prod_{(i, j) \in E} R(j, x_\ell), \min_{x_m \in {\rm front}(i, x_\ell)} \left[|x_\ell - x_m|\cdot\prod_{(i, j) \in E} R(j, x_m)\right]\right\}.\\
\end{align*}
Here, ${\rm front}(i, x_\ell)$ denotes the admissible observed values for node $i$. It returns $\{x_\ell\}$ if $\ell$ is a leaf node in the subtree rooted at $i$. Otherwise, ${\rm front}(i, x_\ell)$ returns the set of all $x_j$ where $j$ is a leaf node in the subtree rooted at $i$. If $i$ is a leaf node, then we define:
\begin{align*}
    R(i, x_\ell) = \begin{cases}
    1 \text{ if } x_\ell = x_i,\\
    |x_i - x_\ell|\text{ otherwise.}\\
    \end{cases}
\end{align*}

\begin{lemma}
Given a directed tree $T = (V, E)$ with root $r \in V$ and unique, non-zero data values $x$, then $R(r, 0)$ solves Problem (\ref{eq:algobj}).
\end{lemma}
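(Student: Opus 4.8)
The plan is to prove, by structural induction on $T$ from the leaves up to the root, that the quantity $R(i, x_\ell)$ computed by the recursion equals the minimal possible contribution of the subtree rooted at $i$ — together with the edge joining $i$ to its parent — to the objective \eqref{eq:algobj}, under the hypothesis that the parent of $i$ has been assigned the value $x_\ell$. Granting this, the lemma follows by specializing to the root, whose value is $0$: since the data are nonzero, $0$ is never a leaf value, so the top-level call $R(r,0)$ imposes no spurious constraint and the recursion ranges over the admissible labelings of the whole tree, returning the global minimum in \eqref{eq:algobj}. By Lemma~\ref{fotreelike} this minimum is exactly the maximum likelihood value over fully-observed structures.

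First I would make precise the correspondence between the fully-observed sparsity structures $S \in \mathcal{F}(T)$ minimized over in \eqref{eq:algobj} and value labelings $\mathrm{val}\colon V \to \{0\}\cup\{x_1,\dots,x_d\}$ of the tree. Given $S$, contract its edges; by Definition~\ref{fully-observed-def} each resulting $0$-edge component contains exactly one determined node, and I assign every node of a component the value of that determined node (a leaf value, or $0$ for the root's component). Because the data values are unique and nonzero, distinct components receive distinct values, so $(u,w)\in S$ if and only if $\mathrm{val}(u)=\mathrm{val}(w)$; this makes $S \mapsto \mathrm{val}$ a bijection and rewrites the objective as the edge-factorized product $\prod_{(u,w)\in E,\ \mathrm{val}(u)\neq \mathrm{val}(w)} |\mathrm{val}(u)-\mathrm{val}(w)|$ over $T$.

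Next I would carry out the induction. For a leaf $m$ the only admissible value is $x_m$, and the edge to its parent (of value $x_\ell$) contributes $|x_m - x_\ell|$ when $x_m\neq x_\ell$ and is contracted otherwise, which is precisely the leaf clause of $R$; note this clause also prevents a leaf from ever joining the root's component. For an internal node $i$ with parent value $x_\ell$, node $i$ either inherits $x_\ell$ (its parent edge is contracted, contributing no factor and leaving the children as independent subproblems with parent value $x_\ell$, i.e. $\prod_{(i,j)\in E} R(j, x_\ell)$), or it takes a fresh value $x_m$ drawn from its own subtree (the parent edge contributes $|x_\ell - x_m|$ and the children become independent subproblems with parent value $x_m$). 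The decisive structural point — where the tree/conditional-independence structure enters — is that once $\mathrm{val}(i)$ is fixed, the minimizations over the distinct child subtrees are mutually independent, so the minimal cost is the product of the per-child minima; taking the better of the two options reproduces the two-term formula for $R(i, x_\ell)$. The role of $\mathrm{front}(i, x_\ell)$ is to encode which fresh values are legitimate: if $x_\ell$ already originates from a leaf $\ell$ inside the subtree of $i$, the $0$-path carrying $x_\ell$ upward must pass through $i$, forcing $\mathrm{val}(i)=x_\ell$ and leaving inheritance as the only option; otherwise $i$ may take any leaf value of its subtree.

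The main obstacle I anticipate is precisely this verification that $\mathrm{front}$ captures the fully-observed constraint both soundly and completely: I must check that every labeling explored by the recursion corresponds to a genuine $S\in\mathcal{F}(T)$ — each $0$-component containing exactly one determined node, never two and never none — and, conversely, that every such $S$ arises from some sequence of recursive choices, including the correct boundary handling that fixes the root's value to $0$. Once soundness and completeness of this decomposition are established, they combine with the bijection of the first step and the edge-factorization of the objective to close the induction, so that $R(r,0)$ equals the minimum in \eqref{eq:algobj}, as claimed.
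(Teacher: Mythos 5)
Your proposal is correct and follows essentially the same route as the paper's proof: structural induction from the leaves up, with $R(i,x_\ell)$ interpreted as the optimal contribution of the subtree at $i$ plus its parent edge given parent value $x_\ell$, the two-term recursion splitting on whether $i$ inherits $x_\ell$ or takes a fresh leaf value from $\mathrm{front}(i,x_\ell)$, and independence of the child subproblems once $\mathrm{val}(i)$ is fixed. Your version is in fact more careful than the paper's, which asserts rather than verifies the bijection between fully-observed sparsity structures and labelings and the soundness/completeness of $\mathrm{front}$.
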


\begin{proof}

It is clear that $R(r, 0)$ solves Problem (\ref{eq:algobj}) in the base case where $r$ is a leaf node. Consider some $i \in V$ and $x_\ell$. For any child $j \in V$ of node $i$, assume that $R(j, x_m)$ solves (\ref{eq:algobj}) for the subtree rooted at $j$. Then, $R(i, x_\ell)$ computes the optimal objective by considering all admissible data values for node $i$. First, $\prod_{(i, j) \in E} R(j, x_\ell)$ computes the optimal objective value when $i$ is set to $x_\ell$. Next, $|x_\ell - x_m|\cdot \prod_{(i, j) \in E} R(j, x_m)$ calculates the optimal objective value when $i$ is set to some $x_m$ in ${\rm front}(i, x_\ell)$. Note that, by construction, ${\rm front}(i, x_\ell)$ returns all feasible data values for node $i$. If $x_\ell$ is in node $i$'s subtree, then there must exist a zero edge between node $i$ and its parent, and thus its value must be $x_\ell$. Otherwise, $i$ may be set to any $x_m$ where node $m$ is a leaf in $i$'s subtree by placing a path of zero edge parameters from $m$ to $i$.
\end{proof}

We illustrate the procedure on a simple example in Figure~\ref{fig:proc}.

\subsection{Runtime} 

We compute $R(r, 0)$ using dynamic programming with memoization. We also cache the result of $C(i, x_\ell) := \prod_{(i, j) \in E} R(j, x_\ell)$ for any $(i, x_\ell)$ pair. Recall our graph has $d$ leaf nodes and at most $2d-1$ total nodes. There are $O(d^2)$ possible parameter values for both $R$ and $C$. To compute any $R(i, x_\ell)$ or any $C(i, x_\ell)$, $O(d)$ subproblems are queried. Thus, computing $R$ and $C$ each contributes $O(d^{3})$ to the runtime. The total runtime for computing $R(r, 0)$ is then~$O(d^{3})$.

\section{Numerical Experiments}\label{sec:experiments}

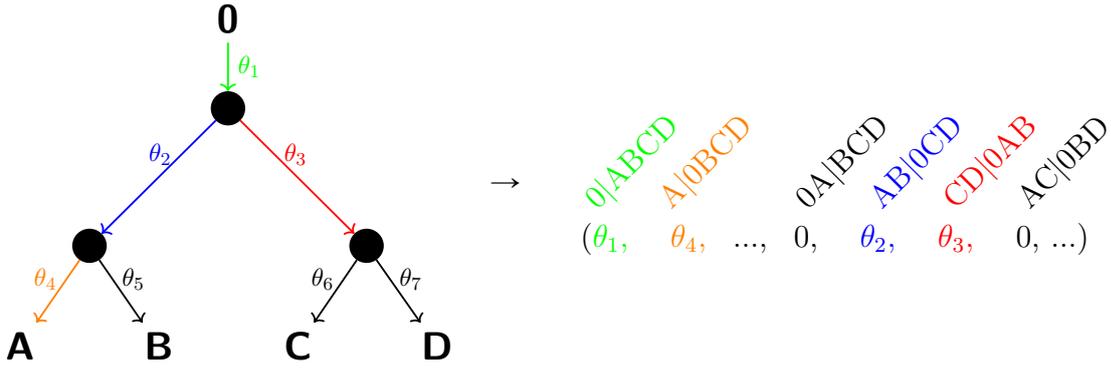
\begin{figure}
\centering
\begin{minipage}{0.4\textwidth}
\raggedleft
    \resizebox{0.9\textwidth}{!}{\begin{tikzpicture}[auto, node distance=3cm, every loop/.style={},
                    thick,main node/.style={font=\sffamily\Large\bfseries},
                    zero node/.style={font=\sffamily\Large\bfseries},
                    solid node/.style={circle,draw,inner sep=1.5,fill=black,minimum size=0.5cm}]
]

  \node[zero node] (r) {0};
  \node[solid node] (1) [below=0.75cm of r] {};
  \node[solid node] (11) [below left of=1] {};
  \node[solid node] (12) [below right of=1] {};
  \node[main node] (111) [below left=1cm and 0.5cm of 11] {A};
  \node[main node] (112) [below right=1cm and 0.5cm of 11] {B};
  \node[main node] (121) [below left=1cm and 0.5cm of 12] {C};
  \node[main node] (122) [below right=1cm and 0.5cm of 12] {D};
    \path[every node/.style={font=\sffamily\small}]
    (r) edge[->, green] node {$\theta_1$} (1)
    (1) edge[->, blue] node[pos=.3, left] {$\theta_2$} (11)
    (1) edge[->, red] node[pos=.3, right] {$\theta_3$} (12)
    (11) edge[->, orange] node[pos=.3, left] {$\theta_4$} (111)
    (11) edge[->] node[pos=.3, right] {$\theta_5$} (112)
    (12) edge[->] node[pos=.3, left] {$\theta_6$} (121)
    (12) edge[->] node[pos=.3, right] {$\theta_7$} (122)
    ;
\end{tikzpicture}}
\end{minipage}\hfill$\to$\kern 1em\hfill\begin{minipage}{0.5\textwidth}
    \color{green}\rotatebox{45}{0$\vert$ABCD}\kern-1em
    \color{orange}\rotatebox{45}{A$\vert$0BCD}\kern 1em
    \color{black}\rotatebox{45}{0A$\vert$BCD}\kern-1em
    \color{blue}\rotatebox{45}{AB$\vert$0CD}\kern-1em
    \color{red}\rotatebox{45}{CD$\vert$0AB}\kern-1em
    \color{black}\rotatebox{45}{AC$\vert$0BD}\\
(\color{green}$\theta_1$, \color{orange}\kern 1em $\theta_4$, \color{black}\kern 0.5em ..., \kern 0.5em 0, \color{blue}\kern 1em $\theta_2$, \color{red}\kern 1em $\theta_3$, \color{black}\kern 1em 0, ...)
\end{minipage}
\caption{In BHV space, a tree is represented as a vector of edge lengths, where each edge is defined by the split it induces over the leaf nodes and the root. Left: an example tree is shown with edge lengths $
\theta$. Right: a section of the corresponding BHV vector for the example tree. Edges that do not exist in our tree (e.g. 0A$\vert$BCD) are given a length of 0. This figure is inspired by \cite{owensslides}.} \label{fig:bhvexplain}
\end{figure}

In the following, we present numerical simulation results for the performance of the BMTM MLE in the one-sample regime. Specifically, given one sample of data from some ground truth BMTM, we investigate the MLE's ability to recover both the underlying covariance matrix and the underlying phylogenetic tree. To measure performance in the covariance regime, we use as loss function the squared Frobenius norm on the difference between the estimated matrix and the ground truth. To test for phylogenetic tree reconstruction, we compute the $L^2$ distance over the geodesic between the estimated and ground truth trees placed in Billera, Holmes, and Vogtmann (BHV) tree space \cite{billera2001geometry}. BHV is a continuous non-Euclidean space whose constituent points represent phylogenetic trees according to their edge lengths. Each dimension of the space corresponds to a particular edge, and each edge is itself identified by the split it induces over the leaf nodes of the tree. Figure \ref{fig:bhvexplain} depicts an example BHV representation of a BMTM.

We compare the BMTM MLE to several common covariance estimators and phylogenetic tree reconstruction methods. These include:

\begin{itemize}
    \item \textbf{DDM MLE} is the MLE over Diagonally Dominant Gaussian Models which was presented in Section~\ref{sec:ddm}.
    \item \textbf{UPGMA} is a well-known estimator that produces an ultrametric tree, that is a tree whose leaves are all a fixed distance away from the root. To compute the estimator, one starts with a forest of trees, each consisting of a single leaf with a unique element of the data vector. Crucially, a distance is defined between two trees as follows:
    \begin{align*}
        D(X', X'') = \frac{1}{|X'||X''|}\sum_{a \in X'}\sum_{b \in X''} |a-b|,
    \end{align*}
    where $X'$ and $X''$ are sets containing the data values for the leaves of the two trees. The UPGMA tree is iteratively built by combining the trees in the forest with the lowest pairwise distance at each step, until only one tree remains. Two trees are combined into one by creating a new root node, whose two children are the root nodes of the original two trees.
    \item \textbf{Neighbor Joining} is a simple hierarchical tree reconstruction method \cite{saitou1987neighbor}. Like UPGMA, neighbor joining iteratively combines the two closest trees in a forest until only one member remains. However, neighbor joining uses a different distance metric:
    \begin{align*}
        D(X_1, X_2) = (|\mathcal{X}|-2)m(X_1, X_2) - \sum_{X_i \in \mathcal{X}} m(X_1, X_i) - \sum_{X_i \in \mathcal{X}} m(X_2, X_i),
    \end{align*}
    where $\mathcal{X} = \{X_1, X_2, ...\}$ is a set where any member $X_i$ is the set of data values for the leaves of the $i$th tree in the forest and where
    \begin{align*}
        m(X_1, X_2) = \argmin_{a, b \in X_1 \times X_2} \|a-b\|_2^2.
    \end{align*}.
    \item \textbf{Least Squares} is the member of BMTMs that reduces the squared Frobenius loss to the sample covariance matrix:
    \begin{align*}
        \hat{\theta}^{LS} = \argmin_{\theta \in \mathcal{B}(T)} \|\Sigma_{\theta}  - xx^T\|^2_2.
    \end{align*}
    In practice, this estimator is computed using a semidefinite program solver. 
\end{itemize}

\begin{figure}
    \includegraphics[width=.48\linewidth]{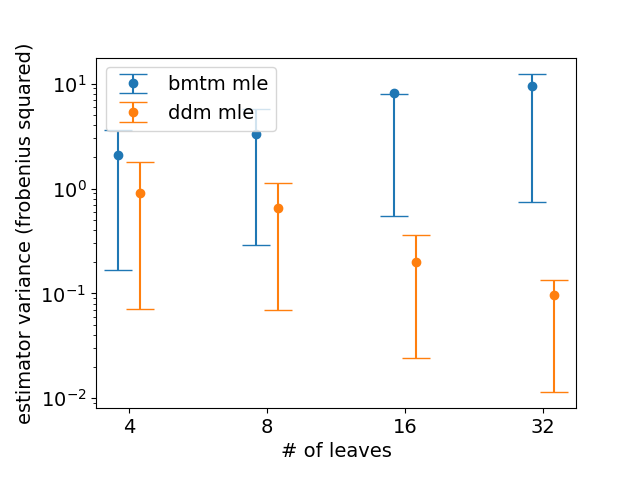}\hfill
    \includegraphics[width=.48\linewidth]{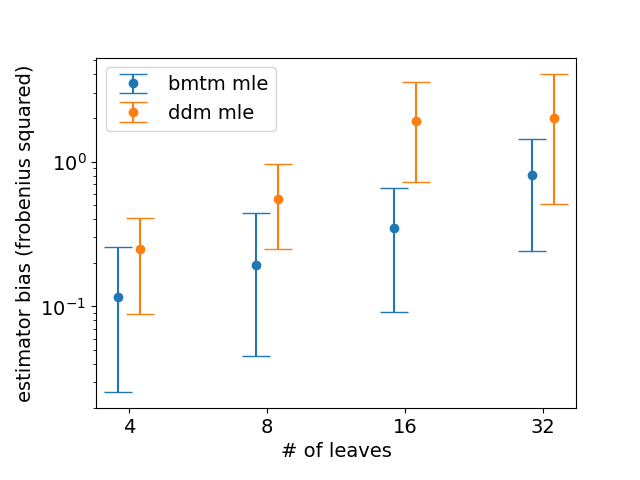}\hfill
    \hfill\includegraphics[width=.48\linewidth]{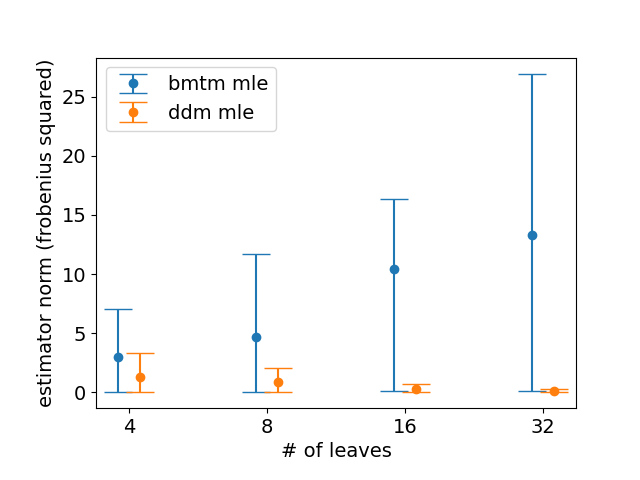}
    \caption{The average variance (top left) bias (top right) and norm (bottom) of the reconstructed BMTM and DDGM MLEs is shown for trees of varying size. Each point is an average over samples from 1000 ground-truth ultrametric trees. Confidence intervals are shown for the top and bottom deciles. The squared Frobenius norm on the space of covariance matrices is used as a distance measure. The DDM MLE shrinks to zero, leading to lower variance and higher bias than the BMTM MLE.}\label{fig:ddm-and-bmtm}
\end{figure}

\begin{figure}
\centering
    \includegraphics[width=.85\linewidth]{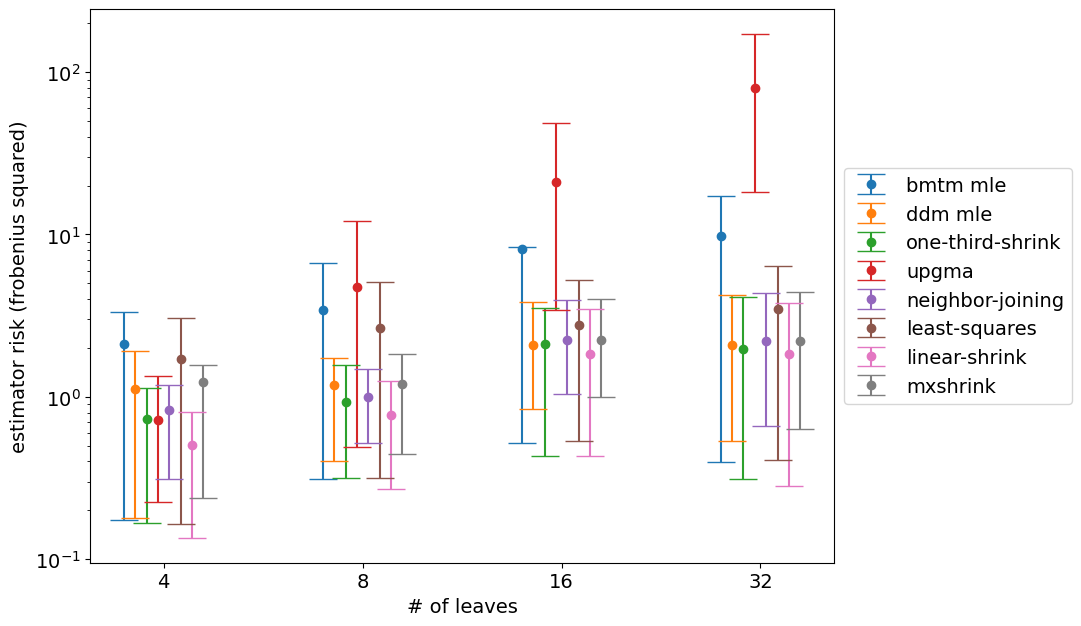}
    \caption{The empirical Frobenius risk (y-axis) of several one-sample covariance matrix estimators is plotted against tree size (x-axis). Each point is computed by averaging risk results from 1000 ground-truth ultrametric trees. Confidence intervals are shown for the top and bottom deciles. UPGMA and Neighbor Joining are two common phylogenetic tree reconstruction methods. Mxshrink \cite{dey1985estimation} is the minimax covariance shrinkage estimator under Stein loss. ``one-third-shrink'' computes the BMTM MLE and multiplies each edge by $\frac{1}{3}$.}\label{fig:fr-risk}
\end{figure}

\begin{figure}
\centering

    \includegraphics[width=.8\linewidth]{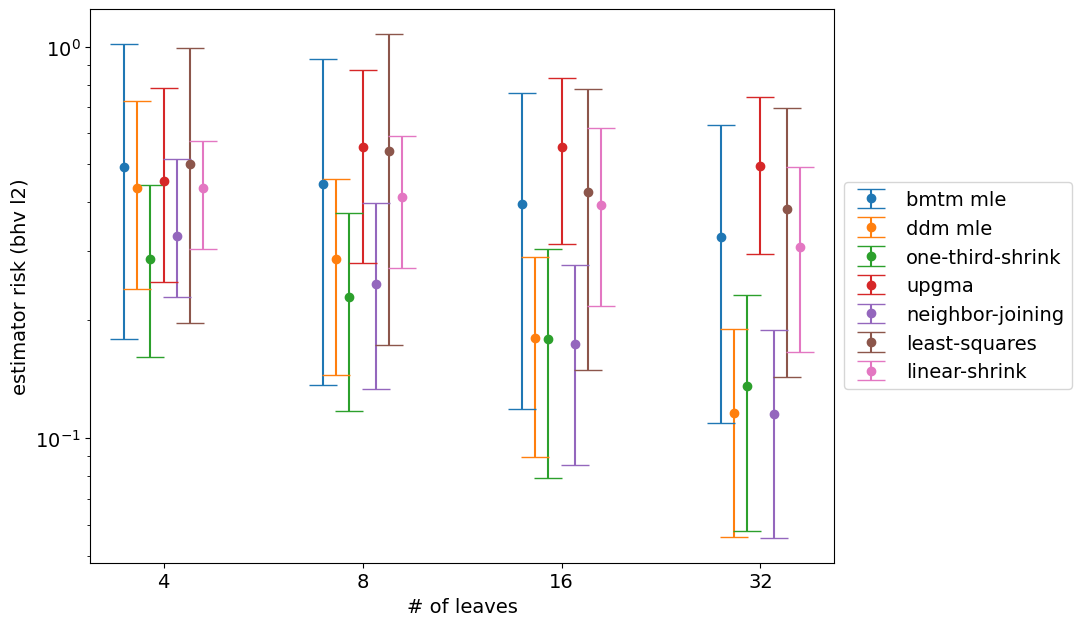}
    \caption{The empirical BHV risk (y-axis) of several one-sample covariance matrix estimators is plotted against tree size (x-axis). Each point is computed by averaging risk results from 1000 ground-truth ultrametric trees. Confidence intervals are shown for the top and bottom deciles.}\label{fig:bhv-risk}
\end{figure}

As shown in Figure \ref{fig:ddm-and-bmtm}, despite its low expected bias, the BMTM MLE suffers from high expected variance when measured against the comparable DDM estimator. This suggests that the BMTM MLE may benefit from shrinkage, the practice of artificially lowering the variance of the MLE and increasing its bias in order to reduce its risk. Shrinkage estimators derived from the MLE are well studied for unconstrained covariance models \cite{ledoit2012nonlinear}. However, we are unaware of any work on shrinkage for BMTMs or linear covariance models more generally. To study the limits of shrinkage on the BMTM MLE, we include one heuristic shrinkage estimator, one common high-dimensional shrinkage estimator, and one estimator that approximates the limits of linear shrinkage.

\begin{itemize}
    \item \textbf{One-third-shrink} is a heuristic shrinkage estimator of our own design. Given the BMTM MLE $\hat{\theta}$, this estimator is simply:
    \begin{align*}
        \hat{\theta}^{OTS}_{i} = \frac{\hat{\theta}_i}{3}
    \end{align*}
    The choice of $\frac{1}{3}$ is inspired by the fact that the risk minimizing estimator of the variance of a mean-zero normal given one sample $x \sim N(0, \sigma^2)$ is $\hat{\sigma}^2 = \frac{1}{3}x^2$. Thus, under an $L^2$ risk measure in BHV space, $\theta^{OTS}$ is the risk minimizing estimator given the ground truth sparsity structure. 
    \item \textbf{Mxshrink} is the minimax estimator of the covariance matrix under the Stein loss~\cite{dey1985estimation}. This estimator shrinks the eigenvalues of the MLE covariance matrix and is often used in very low sample regimes. Assuming that $\hat{\theta}$ is the BMTM MLE, $V$ is the matrix of eigenvectors of $\Sigma_{\hat{\theta}}$, and $\lambda_i$ are the ordered eigenvalues of $\Sigma_{\hat{\theta}}$, then this estimator is given~by
    \begin{align*}
        \Sigma^{MX} = V\tilde{D}V^T,
    \end{align*}
    where $\tilde{D}$ is a diagonal matrix such that
    \begin{align*}
        \tilde{D}_{ii} = \frac{1}{1 + d - 2i}\lambda_{i}.
    \end{align*}

\item \textbf{Linear-shrink} In the context of BMTMs it is natural to consider the simple linear shrinkage of \cite{ledoit2004well}:
$$
\Sigma^{\rm LinS}\;=\;\delta_1 I_d+\delta_2 \Sigma_{\hat \theta},
$$
where $\delta_1,\delta_2$ are positive constants. Let $\Sigma^*$ be the true covariance matrix and define $\mu=\tfrac{1}{d}{\rm tr}(\Sigma^*)$, $\alpha^2=\|\Sigma^*-\mu I_d\|^2$ and $\beta^2=\E[\|\Sigma^*-\Sigma_{\hat \theta}\|^2]$. Then the optimal values of $\delta_1,\delta_2$ that minimize the mean squared error for an unbiased estimator are $\delta_1=\tfrac{\beta^2}{\alpha^2+\beta^2}\mu$, $\delta_2=\tfrac{\alpha^2}{\alpha^2+\beta^2}$. This estimator has various appealing properties. First, $\Sigma^{\rm LinS}$ lies in the same Brownian motion tree model as the MLE $\Sigma_{\hat{\theta}}$. Second, the off-diagonal entries of the inverse of $\Sigma^{\rm LinS}$ are strictly negative. This property is often shared by the ground truth matrix, since $\Sigma^*$ is commonly a covariance matrix in a BMTM over a tree whose only observed nodes are the leaves and the root. Linear shrink is not a \textit{bona fide} estimator, since it has access to the ground truth matrix. Instead, it gives a sense of the upper limit on linear shrinkage's abilities.

\end{itemize}

Figures \ref{fig:fr-risk} and \ref{fig:bhv-risk} present empirical risk results for the mentioned estimators over Frobenius and BHV loss. The ground truth BMTM models are generated from the space of ultrametric binary trees with covariance matrix restricted to a fixed operator norm. The performance of all estimators was collected over 1000 trials for each number of leaves ($d$). Averages are shown as dots. Best and worst deciles are shown as bars. 

The BMTM MLE derived shrinkage estimators perform well for all values of $d$ and across both loss functions. Neighbor joining scores perform similarly, while UPGMA and Least Squares -- both natural estimators -- perform significantly worse. Notably, the heuristic one third estimator is the best or second best \textit{bona fide} estimator on all but one studied regime. While Linear Shrink is the best estimator under Frobenius risk, it performs poorly by the BHV measure. While not the focus of this work, these results support the need for a rigorous treatment of shrinkage estimation in constrained covariance models and in the BHV regime. 


\bibliographystyle{alpha}
\bibliography{ref}

\appendix

\section{Relation to Other Gaussian Tree Models}\label{othermodels}

In this supplement, we discuss an extension of our results to two related classes of tree models. In particular, in the one-sample case, we show the existence of the MLE (with probability 1) for contrast BMTMs (defined below). We also show the inexistence of the MLE for positive latent Gaussian tree models (defined below) when $d \geq 3$.  

\subsection{Contrast Brownian Motion Trees}

It is natural to want to model the divergence of observed populations from one another. BMTMs attack this problem indirectly, by tracking the divergence of observed populations from an unobserved ancestor. They may instead be replaced by contrast models, a solution first introduced by Felsenstein \cite{felsenstein1985phylogenies}. 

\begin{definition}
Given a tree $T = (V, E)$ with $d$ leaf nodes, $\mathcal{B}(T)$ defines a set of distributions over leaf random variables $X$ parameterized by edge lengths $\theta$. The \textbf{Contrast Brownian Motion Tree Model} (\textbf{CBMTM}) $\mathcal{C}(T)$ is the associated set of distributions over $X_i - X_j$ for all $i < j$ parameterized by edge lengths $\theta$.
\end{definition}
Note that $\mathcal{C}(T)$ is a set of distributions over $\binom{d}{2}$ random variables. However, only $d-1$ of these random variables are necessary to identify any member of $\mathcal{C}(T)$. Define a random vector $Y$ as $Y_i = X_i - X_{1}$. Define $\mathcal{C}_Y(T)$ as the set of marginal distributions over $Y$ for all members of $\mathcal{C}(T)$.  Since any $X_i - X_j = Y_i - Y_j$, any member of $\mathcal{C}(T)$ is identified by its associated member of $\mathcal{C}_Y(T)$. Focusing on $\mathcal{C}_Y(T)$, one finds the following structure:

\begin{lemma}
Given a tree $T = (V, E)$, the distributions in $\mathcal{C}_Y(T)$ are precisely the distributions $\mathcal{B}(T')$, where $T'$ is the version of $T$ rooted at the leaf $1$ (and with the original root $0$ removed).
\end{lemma}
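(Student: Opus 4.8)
The plan is to prove the two families of distributions coincide by matching covariance matrices, using that both $\mathcal{C}_Y(T)$ and $\mathcal{B}(T')$ consist of mean-zero Gaussian vectors indexed by the leaves $\{2,\dots,d\}$; hence it suffices, for each $\theta\in\mathcal{B}(T)$, to produce an edge-length vector $\theta'$ on $T'$ (and conversely) inducing the same covariance, since a centered Gaussian is determined by its covariance.

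First I would record the covariance of a contrast in terms of the tree metric. Writing $D_{ab}=\mathrm{Var}(X_a-X_b)=\sum_{e\in P_T(a,b)}\theta_e$ for the additive metric induced by the edge variances, where $P_T(a,b)$ is the unique path between leaves $a$ and $b$, a one-line computation from $Y_a=X_a-X_1$ gives the standard identity $\mathrm{Cov}(Y_a,Y_b)=\tfrac12(D_{1a}+D_{1b}-D_{ab})$, together with $\mathrm{Var}(Y_a)=D_{1a}$. The key observation is that $D$ is intrinsic to the \emph{unrooted} weighted tree and does not reference the root $0$: since $0$ is a pendant vertex attached only to its unique child $c$, no path between two leaves traverses the edge $(0,c)$, so its variance $\theta_c$ never enters any $D_{ab}$.

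Next I would identify the structural model on $T'$. Rooting $T$ at the leaf $1$ and deleting $0$ leaves a tree $T'$ with edge set $E'=E\setminus\{(0,c)\}$ and leaves $\{2,\dots,d\}$; assign $\theta'_e=\theta_e$ for every $e\in E'$. The general BMTM covariance formula \eqref{covfromedge} then gives $\mathrm{Cov}(X'_a,X'_b)=d'(1,\mathrm{lca}_{T'}(a,b))=\tfrac12(d'(1,a)+d'(1,b)-d'(a,b))$, where $d'$ is the tree metric of $T'$. Because re-rooting and deleting the pendant $0$ change neither the path between two leaves nor the path from $1$ to a leaf, we have $d'(1,a)=D_{1a}$, $d'(1,b)=D_{1b}$, and $d'(a,b)=D_{ab}$, whence $\mathrm{Cov}(X'_a,X'_b)=\mathrm{Cov}(Y_a,Y_b)$. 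Thus each member of $\mathcal{C}_Y(T)$ lies in $\mathcal{B}(T')$; conversely, any $\theta'\in\mathbb{R}_{\geq 0}^{E'}=\mathcal{B}(T')$ is realized by taking $\theta_e=\theta'_e$ on $E'$ and $\theta_c$ arbitrary, giving the reverse inclusion and hence equality.

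The main obstacle I anticipate is the bookkeeping at the old root. One must verify that the parameter correspondence is clean, namely that the map $\theta\mapsto\theta'$ forgets exactly the single coordinate $\theta_c$, which is unidentifiable from contrasts. A related subtlety is that the child $c$ may drop to degree $2$ in $T'$; I would note that such a vertex can be suppressed by merging its two incident edges, whose lengths only ever appear as a sum in $D$, without altering the set of distributions, so that $T'$ can be taken to satisfy Definition~\ref{def:bmtm}. If one prefers to avoid the tree-metric identity, the same equality follows by expanding $Y_a$ directly as a signed sum of the independent noises $\varepsilon_e$ over $P_T(1,a)$ and matching coefficients, with orientations flipping precisely along $P_T(0,1)$; I would favor the metric route to sidestep this sign bookkeeping.
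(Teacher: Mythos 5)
Your proposal is correct and follows essentially the same route as the paper: reduce to matching covariance matrices of mean-zero Gaussians, express $\var(Y_i)$ and $\cov(Y_i,Y_j)$ as path sums in the weighted tree, and recognize this as the BMTM parameterization of $T$ re-rooted at leaf $1$ with the pendant root edge dropped, its variance being the one unidentifiable coordinate. Your use of the polarization identity $\cov(Y_a,Y_b)=\tfrac12(D_{1a}+D_{1b}-D_{ab})$ is a cosmetic variant of the paper's direct ``distance from the new root to the new lca'' computation, and your explicit treatment of the suppressed degree-$2$ vertex and of both inclusions is, if anything, slightly more careful than the paper's.
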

\begin{proof}
Note that all distributions in $\mathcal{C}_Y(T)$ are mean-zero, since they consist of differences between mean-zero random variables. Thus, it is sufficient to show that $\{\Sigma_\theta | \theta \in \mathcal{C}_Y(T)\} = \{\Sigma_\theta | \theta \in \mathcal{B}(T')\}$. Using \eqref{covfromedge} we see that 
$$
{\rm cov}(X_i,X_j)\;=\;\sum_{(\pi(k), k) \in \overline{0 \lca(i,j)}} \theta_k,
$$
where $\lca(i,j)$ denotes the most recent common ancestor of $i,j$ in the tree $T$ rooted at $0$, and $\overline{ij}$ denotes the path between $i$ and $j$ in $T$. We easily check that
\begin{align*}
    \var(Y_i) =\var(X_i-X_1) = \sum_{(\pi(k), k) \in \overline{1 i}} \theta_k,
\end{align*}
and
\begin{align*}
    \cov(Y_i, Y_j) = \sum_{(\pi(k), k) \in \overline{1 \lca(i, j)}} \theta_k.
\end{align*}
Note the similarity between the above variance and covariance parameterization to that of a BMTM. In other words, the covariance matrix of $Y$ is a covariance matrix in a Brownian motion tree model over the tree rerooted at $1$ with the original root $0$ removed. The graph $T'$ is equivalent to $T$ with the following modifications:
\begin{enumerate}
    \item ${1}$ is made to be the root. The edge above the previous root ceases to exist and all edge directions are modified accordingly.
    \item Any nodes with outdegree 1 (except node 1) are removed, and their parent and child are directly connected.
\end{enumerate}

Now note that we may map $\theta \in \mathcal{C}_Y(T)$ to $\theta' \in \mathcal{B}(T')$ as follows:

\begin{enumerate}
    \item If $(i, j) \in E$ and $(i, j) \in E'$, for either order of $i$ and $j$, then $\theta'_{(i, j)} = \theta_{(i, j)}$; 
    \item If $(i, j) \notin E$ and $(i, j) \in E'$, for either order of $i$ and $j$, then $\theta'_{(i, j)} = \sum_{(i, j) \in \overline{ij}} \theta_{(i, j)}$.
\end{enumerate}

The resulting covariance and variance values are the same for $Y$, and so, $\Sigma_{\theta} = \Sigma_{\theta'}$.
\end{proof}

That $\mathcal{C}_Y(T)$ is a BMTM establishes a direct link between the MLE of CBMTMs and BMTMs.

\begin{corollary}[Contrast MLE exists]
Given a tree $T = (V, E)$ with $d\geq 2$ leaf nodes and a data vector $x$ of unique values, then the MLE of $\mathcal{C}_Y(T)$, and thus also the MLE of $\mathcal{C}(T)$, exists with probability 1, in which case it is unique and fully-observed.
\end{corollary}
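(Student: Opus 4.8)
The plan is to reduce the statement entirely to the BMTM results already established, since the preceding lemma shows that $\mathcal{C}_Y(T)$ coincides, as a family of distributions, with the Brownian motion tree model $\mathcal{B}(T')$, where $T'$ is $T$ rerooted at the leaf $1$ with the original root removed and degree-two vertices suppressed. Consequently the maximum likelihood problem for $\mathcal{C}_Y(T)$ is, verbatim, the maximum likelihood problem for $\mathcal{B}(T')$, and existence, uniqueness, and full-observability of its MLE can be read off from Lemma~\ref{existence}, Theorem~\ref{thm:bmtm-unique}, and Theorem~\ref{thm:bmtm-fully-observed}, once the data hypotheses are checked.

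The one point requiring care is the data vector. A sample $X=(X_1,\dots,X_d)$ is mapped to the sample $Y$ of $\mathcal{C}_Y(T)$ by $Y_i = X_i - X_1$, so the data driving $\mathcal{B}(T')$ is the vector $y$ with entries $y_i = x_i - x_1$ indexed by the leaves $i \neq 1$ of $T'$. I would verify that the unique-values hypothesis on $x$ forces $y$ to have unique, non-zero entries: $y_i = 0$ would mean $x_i = x_1$, and $y_i = y_j$ would mean $x_i = x_j$, both excluded by distinctness of the $x_k$. Hence $y$ satisfies exactly the hypotheses under which the BMTM MLE was shown to exist (Lemma~\ref{existence}) and to be fully-observed (Theorem~\ref{thm:bmtm-fully-observed}). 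The boundary case $d = 2$ produces a single-leaf $T'$, for which existence and full-observability are the explicit univariate computation recorded at the start of Section~\ref{sec:fo}.

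For the almost-sure claims, I would note that $Y$ is an absolutely continuous random vector, being a fixed linear image of the absolutely continuous $X$, so with probability one its realized data $y$ has distinct, non-zero coordinates; in particular the measure-zero exceptional set of Theorem~\ref{thm:bmtm-unique} is avoided, yielding uniqueness almost surely. Finally I would transfer the conclusion from $\mathcal{C}_Y(T)$ back to $\mathcal{C}(T)$: since every member of $\mathcal{C}(T)$ is identified by its marginal in $\mathcal{C}_Y(T)$ through $X_i - X_j = Y_i - Y_j$, the two families carry the same likelihood under the same parameterization by $\theta$, so the MLE of $\mathcal{C}(T)$ exists, is unique, and is fully-observed exactly when that of $\mathcal{C}_Y(T)$ is.

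Because the analytic content is supplied by the BMTM theorems, I do not expect a genuine obstacle here; the only thing to get right is the bookkeeping of the reparameterization — confirming that the rerooting in the preceding lemma produces a tree satisfying Definition~\ref{def:bmtm}, that the induced data vector inherits distinctness and non-vanishing, and that the degenerate small-$d$ case is covered by the single-leaf computation.
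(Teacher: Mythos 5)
Your proposal is correct and follows essentially the same route the paper intends: the corollary is stated as an immediate consequence of the preceding lemma identifying $\mathcal{C}_Y(T)$ with $\mathcal{B}(T')$, after which existence, full-observability, and almost-sure uniqueness are read off from Lemma~\ref{existence}, Theorem~\ref{thm:bmtm-fully-observed}, and Theorem~\ref{thm:bmtm-unique}. Your added bookkeeping (that uniqueness of the $x_i$ forces the contrast data $y_i = x_i - x_1$ to be distinct and non-zero, and that $d=2$ reduces to the univariate case) is exactly the verification the paper leaves implicit.
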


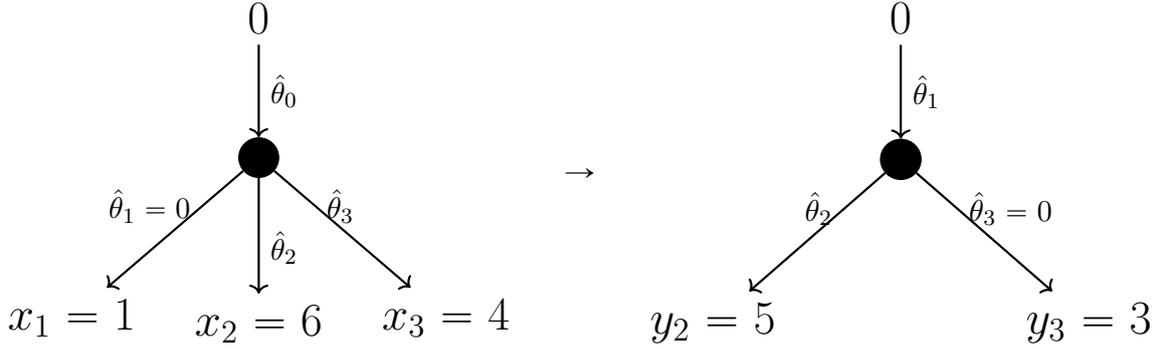
\begin{figure}
\centering
\begin{minipage}{0.45\textwidth}
\raggedleft
    \resizebox{0.9\textwidth}{!}{\begin{tikzpicture}[auto, node distance=3cm, every loop/.style={},
                    thick,main node/.style={font=\sffamily\Large\bfseries},
                    zero node/.style={font=\sffamily\Large\bfseries},
                    solid node/.style={circle,draw,inner sep=1.5,fill=black,minimum size=0.5cm}]
]

  \node[zero node] (r) {$0$};
  \node[solid node] (1) [below=1.2cm of r] {};
  \node[zero node] (11) [below left=1.5cm and 1.25cm of 1] {$x_1 = 1$};
  \node[zero node] (12) [below=1.5cm of 1] {$x_2 = 6$};
  \node[zero node] (13) [below right=1.5cm and 1.25cm of 1] {$x_3 = 4$};
    \path[every node/.style={font=\sffamily\small}]
    (r) edge[->] node {$\hat{\theta}_0$} (1)
    (1) edge[->] node[pos=.3, left] {$\hat{\theta}_1=0$} (11)
    (1) edge[->] node[pos=.6, right] {$\hat{\theta}_2$} (12)
    (1) edge[->] node[pos=.3, right] {$\hat{\theta}_3$} (13)
    ;
\end{tikzpicture}}
\end{minipage}\hfill$\to$\hfill\begin{minipage}{0.45\textwidth}
\raggedright
\resizebox{0.9\linewidth}{!}{\begin{tikzpicture}[auto, node distance=3cm, every loop/.style={},
                    thick,main node/.style={font=\sffamily\Large\bfseries},
                    zero node/.style={font=\sffamily\Large},
                    solid node/.style={circle,draw,inner sep=1.5,fill=black,minimum size=0.5cm}]
]

  \node[zero node] (r) {$0$};
  \node[solid node] (1) [below=1.2cm of r] {};
  \node[zero node] (11) [below left=1.5cm and 1.25cm of 1] {$y_2 = 5$};
  \node[zero node] (12) [below right=1.5cm and 1.25cm of 1] {$y_3 = 3$};
    \path[every node/.style={font=\sffamily\small}]
    (r) edge[->] node {$\hat{\theta}_1$} (1)
    (1) edge[->] node[pos=.3, left] {$\hat{\theta}_2$} (11)
    (1) edge[->] node[pos=.3, right] {$\hat{\theta}_3=0$} (12)
    ;
\end{tikzpicture}}
\end{minipage}
\caption{Left: a BMTM over a 3-leaf star graph $T_3$ is shown with an observed data vector of $\{1, 6, 4\}$. Right: the associated contrast model $\mathcal{C}_Y(T_3)$ may be written as a BMTM over a 2-leaf star graph with observed values of $y = \{6-1, 4-1\} = \{5, 3\}$. The sparsity pattern of each MLE $\hat{\theta}$ is shown. Note a zero is placed on the edge above $x_1$ in $\mathcal{B}(T)_3)$ and on the the edge above $y_3$ in $\mathcal{C}_Y(T_3)$, and one cannot immediately map one MLE to the other. }\label{fig:contrast}
\end{figure}

Note that, trivially, our algorithm for computing the BMTM MLE from Section~\ref{sec:compute} can be used to compute the MLE of $\mathcal{C}_Y(T)$, by constructing $T'$ and computing the MLE of $\mathcal{B}(T')$. However, we know of no direct way to convert from the MLE of $\mathcal{B}(T)$ into the MLE of $\mathcal{C}_Y(T)$. Figure \ref{fig:contrast} offers a concrete example of this problem, by considering the 3-leaf star BMTM and its associated contrast model.


\subsection{Positive Latent Gaussian Trees}
Next, we consider a collection of Gaussian distributions on a tree $T$ that supersedes $\mathcal{B}(T)$. In particular, we consider the set of distributions whose correlation matrix is supported on $T$ \cite{choi2011learning}.

\begin{definition}
Given a tree $T = (V, E)$ consider the fully-observed zero-mean Gaussian graphical model over $T$ with an additional constraint that all correlations between adjacent variables are non-negative. A \textbf{positive latent Gaussian tree model} (\textbf{PLGTM}) over a tree $T$, denoted by $\mathcal{L}(T)$, is the set of induced marginal distributions over the leaves of $T$.
\end{definition}

\begin{rem}
There are two alternative ways to define the positive latent Gaussian tree model. First, we can use a similar structural representation as in Definition~\ref{def:bmtm} but with the linear equations of the form: $W_i=\lambda_i W_{\pi(i)}+\epsilon_i$ for $\lambda_i\geq 0$. This definition shows that the Brownian motion tree model is just a submodel with $\lambda_i=1$ for all $i$. The second alternative definition will be useful in the rest of this section: {PLGTM} is the set of zero-mean Gaussian distributions whose covariance matrices may be written as $DSD$, for some diagonal matrix $D$ with all positive entries along its diagonal and $S_{ij} = \prod_{e\in\overline{ij}}\rho_e$ for some positive edge-indexed vector $\rho$ with $\rho_{e}\in [0,1]$.
\end{rem}

Our aim is to show that the one-sample MLE does not exist for positive latent Gaussian trees when $d\geq 3$. To do so, we will reduce the positive latent Gaussian tree problem to taking the MLE of a BMTM when given non-unique data; that is, given a data vector $x$ such that $x_a = x_b$ for some $a\neq b$. We begin by showing that the MLE of a BMTM does not exist in this case. 

\begin{definition}
If a tree $T = (V, E)$ has $d$ leaf nodes and $d+2$ total nodes (including the root), we call it a \textbf{star}. 
\end{definition}

\begin{lemma}
\label{identicalstar}
Given a data vector $x$ with two identical entries $x_a$ and $x_b$ and a star tree $T = (V, E)$, then the likelihood of $\mathcal{B}(T)$ has no upper bound.
\end{lemma}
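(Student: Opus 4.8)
The plan is to exhibit an explicit one-parameter family of admissible parameters along which $\ell_x$ diverges to $+\infty$. Let $c$ denote the unique non-root, non-leaf node of the star. Writing $s\ge 0$ for the variance on the edge from the root to $c$ and $t_1,\dots,t_d\ge 0$ for the leaf-edge variances, \eqref{covfromedge} gives
\[
\Sigma_\theta=\operatorname{diag}(t_1,\dots,t_d)+s\,\mathds{1}\mathds{1}^\top ,
\]
so that, in the structural description of Definition~\ref{def:bmtm}, $X_i=W_c+\varepsilon_i$ with $W_c\sim N(0,s)$ the value at $c$ and $\varepsilon_i\sim N(0,t_i)$ independent.

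The degeneration I would use fixes $s$ and all $t_i$ with $i\neq a,b$ at arbitrary positive constants and lets $t_a=t_b=\epsilon\to 0^+$; each such $\Sigma_\theta$ is positive definite and hence admissible. The geometric picture is that as $\epsilon\to 0$ the law $N(0,\Sigma_\theta)$ concentrates on the hyperplane $\{u:u_a=u_b\}$, and since $x_a=x_b$ the observation $x$ lies exactly on this hyperplane, so its density must blow up.

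To make this rigorous I would change to the coordinates $Y_1=X_a-X_b$, $Z=\tfrac12(X_a+X_b)$, and $X_i$ for $i\neq a,b$. When $t_a=t_b$ a one-line covariance computation shows $Y_1$ is uncorrelated with, hence independent of, the remaining $d-1$ coordinates, so the joint density factorises (the Jacobian of this linear change of variables is the constant $1$):
\[
p_\theta(x)= p_{Y_1}(x_a-x_b)\cdot p_{(Z,(X_i)_{i\neq a,b})}\big(\tfrac12(x_a+x_b),(x_i)_{i\neq a,b}\big).
\]
Here $Y_1\sim N(0,2\epsilon)$ and $x_a-x_b=0$, so the first factor equals $(4\pi\epsilon)^{-1/2}\to+\infty$, while the covariance of $(Z,(X_i)_{i\neq a,b})$ converges to that of $(W_c,(X_i)_{i\neq a,b})$, which is positive definite, so the second factor tends to a finite positive constant. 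Hence $p_\theta(x)\to+\infty$ and $\ell_x(\Sigma_\theta^{-1})\to+\infty$.

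The step I expect to be the crux is controlling the complementary marginal: one must ensure that collapsing $X_a$ and $X_b$ does not simultaneously degenerate the other $d-1$ directions. This is precisely why $Z$ is taken to be the average and not $X_a$ itself, since $\operatorname{var}(Z)=s+\epsilon/2\to s>0$; the limiting covariance of $(Z,(X_i)_{i\neq a,b})$ is that of $(W_c,(X_i)_{i\neq a,b})$, which is positive definite because it is the image of $\operatorname{diag}(s,t_i)$ under an invertible unit-triangular map. As a fully computational alternative, the matrix determinant lemma gives $\det\Sigma_\theta=(\prod_i t_i)(1+s\sum_i 1/t_i)=\Theta(\epsilon)$, and Sherman--Morrison gives $x^\top\Sigma_\theta^{-1}x=\sum_i x_i^2/t_i-s(\sum_i x_i/t_i)^2/(1+s\sum_i 1/t_i)$, in which the two $\Theta(1/\epsilon)$ contributions cancel exactly because $x_a=x_b$, leaving a bounded quantity; together these yield $\ell_x=-\tfrac12\log\det\Sigma_\theta-\tfrac12 x^\top\Sigma_\theta^{-1}x\to+\infty$.
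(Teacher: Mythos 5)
Your proof is correct, and it takes a somewhat different route from the paper's. The paper first sets $\theta_a=0$ exactly, which collapses the latent node onto the leaf $a$ and makes the tree fully observed; the likelihood then factorizes into univariate normal densities of increments (as in Lemma~\ref{fotreelike}), and the factor $p_{\mu=0,\sigma^2=\theta_b}(x_a-x_b)=p_{\mu=0,\sigma^2=\theta_b}(0)$ blows up as $\theta_b\to 0$ while everything else stays fixed. You instead stay in the interior of the parameter cone, send $t_a=t_b=\epsilon\to 0^+$ symmetrically, and isolate the degenerating direction via the change of variables $(Y_1,Z)=(X_a-X_b,\tfrac12(X_a+X_b))$; the key observations — that $Y_1$ decorrelates from the rest when $t_a=t_b$, that the Jacobian is $1$, and that the complementary block converges to the positive definite covariance of $(W_c,(X_i)_{i\neq a,b})$ — all check out, as does your Sherman--Morrison alternative (the two $\Theta(1/\epsilon)$ terms in $x^\top\Sigma_\theta^{-1}x$ do cancel precisely because $x_a=x_b$, and $\det\Sigma_\theta=\Theta(\epsilon)$). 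The paper's argument is shorter because it reuses the fully-observed factorization machinery; yours is more self-contained, makes the geometric mechanism (concentration of the law on the hyperplane $\{u_a=u_b\}$ containing $x$) explicit, and does not require passing to the boundary of the parameter set before taking the limit.
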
 
\begin{proof}
We begin by setting $\theta_a$ -- the edge variance above the leaf node for $x_a$ -- to zero. This restricts us to a set of fully observed trees. From Definition
\ref{def:bmtm}, just as in Lemma~\ref{fotreelike}, the likelihood of any such tree may be written as:

\begin{align*}
    p(x|\theta) &=  p_{\mu = 0, \sigma^{2} = \theta_0}(x_a)\cdot\prod_{\substack{i \in V,\\ i \neq a}} p_{\mu = 0, \sigma^{2} = \theta_i}(x_a - x_i)\\
    &=  p_{\mu = 0, \sigma^{2} = \theta_b}(0)\cdot p_{\mu = 0, \sigma^{2} = \theta_0}(x_a)\cdot\prod_{\substack{i \in V,\\ i \neq a, i\neq b}} p_{\mu = 0, \sigma^{2} = \theta_i}(x_a - x_i).
\end{align*}
Note that $\lim_{\theta_b \to 0} p_{\mu = 0, \sigma^{2} = \theta_b}(0) = \infty$. Thus, holding all entries of $\theta$ fixed while sending $\theta_b \to 0$ sends the likelihood to $\infty$.
\end{proof}

\begin{corollary}
\label{nonuniqueinfty}
Given a data vector $x$ with two identical entries $x_a$ and $x_b$, then the likelihood of $\mathcal{B}(T)$ for any tree $T = (V, E)$ has no upper bound.
\end{corollary}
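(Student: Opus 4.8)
The plan is to lift the star computation of Lemma~\ref{identicalstar} to an arbitrary tree by zeroing out a path so that the parent of $b$ becomes a determined node carrying the value $x_a=x_b$. Set $c=\pi(b)$ and let $\overline{ac}$ be the unique path in $T$ joining the leaf $a$ to $c$; every interior vertex of this path is a non-leaf, so $c$ and all such vertices are internal nodes of degree $\ge 3$. For $t>0$ I would define a one-parameter family $\theta(t)\in\mathcal{B}(T)$ by setting $\theta_e=0$ for every edge $e$ on $\overline{ac}$ (in particular the pendant edge above $a$), $\theta_b=t$, and $\theta_e=1$ for all remaining edges. The point of this choice is that, along the zero path, the structural equations of Definition~\ref{def:bmtm} force $W_c=W_a$, so that $c$ is pinned to the observed value $x_a$.

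First I would verify that $\Sigma_{\theta(t)}\succ 0$ for every $t>0$, so the likelihood is finite on the family. Writing $W_i=\sum_{e\in\overline{0i}}\varepsilon_e$, a combination $\sum_i\alpha_iW_i$ has variance $\sum_e\beta_e^2\theta_e$ with $\beta_e=\sum_{i\,\text{below}\,e}\alpha_i$. Each pendant edge other than the one above $a$ has weight $1$ or $t$, forcing $\alpha_j=0$ for all leaves $j\neq a$; and since $\lca(a,b)\neq 0$, the path $\overline{0a}$ still contains the positive root edge, so $\var(W_a)\ge 1$ forces $\alpha_a=0$ as well, giving positive definiteness. Next, to evaluate the leaf-marginal density at $x$, I would integrate out the latent vertices: the zero edges on $\overline{ac}$ pin the latent path vertices to $x_a$, so in particular $w_c=x_a$, and since $b$ is a leaf adjacent only to $c$, the single factor $p_{\mu=0,\sigma^2=\theta_b}(x_b-w_c)=p_{\mu=0,\sigma^2=\theta_b}(0)$ carrying all dependence on $\theta_b$ separates from the rest, exactly as the $b$-term is isolated in Lemma~\ref{identicalstar}. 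This yields $f_{\theta(t)}(x)=M\cdot p_{\mu=0,\sigma^2=t}(0)=M/\sqrt{2\pi t}$, where $M\in(0,\infty)$ is the marginal density of the remaining leaves in the submodel with $b$ deleted and does not depend on $t$. Using $x_a=x_b$ crucially, sending $t\to 0^+$ drives $p_{\mu=0,\sigma^2=t}(0)\to\infty$, so the likelihood of $\mathcal{B}(T)$ is unbounded.

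The step needing the most care is the clean separation of the $\theta_b$-factor: one must justify that pinning the determined node $c$ to $x_a$ decouples the pendant edge $(c,b)$ from the remainder of the tree, so that $M$ is simultaneously independent of $t$, strictly positive, and finite. This is precisely where the conditional-independence structure of the BMTM, already invoked in the proof of Theorem~\ref{thm:bmtm-fully-observed}, enters; positivity and finiteness of $M$ then follow from the nondegeneracy established in the first step. A minor bookkeeping point is checking that zeroing $\overline{ac}$ never links two leaves by a zero path, which is what the positive-definiteness computation confirms.
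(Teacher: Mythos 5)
Your argument is correct, and it rests on the same mechanism as the paper's: use zero-variance edges to pin the parent of $b$ to the observed value $x_a$, then let $\theta_b\to 0^+$ so that the Gaussian factor $p_{\mu=0,\sigma^2=\theta_b}(x_b-x_a)=p_{\mu=0,\sigma^2=\theta_b}(0)$ diverges. The packaging differs, though. The paper's proof of Corollary~\ref{nonuniqueinfty} is a two-line reduction: it observes that the star is a submodel of \emph{every} BMTM (zero all edge variances except those on pendant edges and the root edge) and then simply invokes Lemma~\ref{identicalstar}, where the factorization and the blow-up have already been carried out. You instead stay inside the original tree $T$, zeroing only the path from $a$ to $\pi(b)$ and keeping the rest of the tree at generic positive weights, and you redo the factorization directly via the conditional-independence structure. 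What your route buys is self-containedness and rigor on a point the paper leaves implicit: you explicitly verify that $\Sigma_{\theta(t)}\succ 0$ along the divergent family (so the likelihood is actually finite and well-defined at each $t>0$) and that the residual factor $M$ is positive, finite, and $t$-independent. What the paper's route buys is brevity and reuse: once the star lemma is in hand, the general case is immediate, and there is no need to track which edges of an arbitrary tree are zeroed. Both arguments are valid; yours is essentially a direct generalization of the proof of Lemma~\ref{identicalstar} that bypasses the star reduction.
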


\begin{proof}
The star is a submodel of any BMTM. To see this, if we restrict all edge variances to zero except for those of the edges right above the leaf nodes and right under the root, we are left with a star BMTM. By \ref{identicalstar}, we get that the likelihood is unbounded.
\end{proof}

\begin{lemma}
\label{foreverybmtm}
Given any $\Sigma \in \mathcal{B}(T)$, then we know that $D\Sigma D \in \mathcal{L}(T)$ for any positive diagonal matrix $D$.
\end{lemma}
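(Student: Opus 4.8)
The plan is to exhibit the explicit ``diagonal $\times$ correlation $\times$ diagonal'' decomposition of a BMTM covariance matrix and then observe that conjugating by a positive diagonal matrix alters only the diagonal factor, leaving the tree-structured correlation factor untouched. Throughout I would work with the second characterization of $\mathcal{L}(T)$ from the Remark: a matrix lies in $\mathcal{L}(T)$ if and only if it can be written as $\tilde{D} S \tilde{D}$ with $\tilde{D}$ positive diagonal and $S_{ij} = \prod_{e \in \overline{ij}} \rho_e$ for some edge-indexed $\rho_e \in [0,1]$.

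First I would fix $\Sigma = \Sigma_\theta \in \mathcal{B}(T)$ and introduce the node variances $v(u) := \sum_{(\pi(k),k) \in \overline{0u}} \theta_k$. By the covariance formula for BMTMs recorded in the proof of the CBMTM lemma above (a direct consequence of \eqref{covfromedge}), one has $\Sigma_{ij} = v(\lca(i,j))$ for leaves $i,j$, and in particular $\Sigma_{ii} = v(i)$. Setting $D_0 := \operatorname{diag}\big(\sqrt{v(i)}\big)_i$ and $S := D_0^{-1}\Sigma D_0^{-1}$ then gives $\Sigma = D_0 S D_0$, where $S_{ij} = v(\lca(i,j))/\sqrt{v(i)v(j)}$ is the correlation matrix of $\Sigma$; here $D_0$ is genuinely positive because $\Sigma$ is positive definite, forcing every $v(i) > 0$.

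Next I would define edge correlations $\rho_{(\pi(c),c)} := \sqrt{v(\pi(c))/v(c)}$. Since $v(c) = v(\pi(c)) + \theta_c$ with $\theta_c \geq 0$, we have $0 \leq v(\pi(c)) \leq v(c)$ and hence $\rho_e \in [0,1]$. The central computation is the factorization $S_{ij} = \prod_{e \in \overline{ij}} \rho_e$: splitting the leaf-to-leaf path $\overline{ij}$ at $m := \lca(i,j)$ into its upward portion from $i$ to $m$ and its downward portion from $m$ to $j$, the product of the $\rho_e$ telescopes to $\sqrt{v(m)/v(i)}\cdot\sqrt{v(m)/v(j)} = v(m)/\sqrt{v(i)v(j)} = S_{ij}$. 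The root edge never lies on any leaf-to-leaf path, so the degenerate value $v(0)=0$ causes no trouble.

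Finally, for an arbitrary positive diagonal matrix $D$, diagonal matrices commute, so $D\Sigma D = (D D_0)\, S\, (D D_0)$ with $\tilde{D} := D D_0$ again positive diagonal and $S$ of exactly the required product form; by the characterization above this places $D\Sigma D \in \mathcal{L}(T)$. I expect the only delicate step to be the telescoping identity of the third paragraph, together with the bookkeeping of the path decomposition at the least common ancestor; the remaining steps are essentially immediate from the definitions.
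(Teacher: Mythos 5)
Your proof is correct and follows essentially the same route as the paper's: factor $\Sigma$ as a positive diagonal matrix conjugating the tree-structured correlation matrix $S_{ij}=\prod_{e\in\overline{ij}}\rho_e$, then absorb $D$ into the diagonal factor. Your write-up is in fact slightly more careful, since your orientation $\rho_{(\pi(c),c)}=\sqrt{v(\pi(c))/v(c)}$ correctly lands in $[0,1]$ and telescopes to $v(\lca(i,j))/\sqrt{v(i)v(j)}$, whereas the paper's stated definition $\rho_{(i,j)}=\sqrt{\var(W_j)/\var(W_i)}$ has parent and child transposed.
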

\begin{proof}
We first prove that the correlation matrix of $\mathcal{B}(T)$ is of the form $S_{ij} = \prod_{e\in\overline{ij}}\rho_e$ for some positive vector $\rho$ indexed by the edges of $T$. To see this, define $\rho_{(i, j)} = \sqrt{\frac{\var(W_j)}{\var(W_i)}}$, where $W$ is the random vector in Definition~\ref{def:bmtm}. Then, 
$$S_{ij} = \prod_{e\in\overline{ij}}\rho_e = \frac{\var(W_{\lca(i, j)
})}{\sqrt{\var(W_i)\var(W_j)}}.$$ 
This exactly matches the correlation between nodes $i$ and $j$ in a BMTM. 

Since $S$ is the correlation matrix of $\mathcal{B}(T)$, we may write any covariance matrix $\Sigma$ in $\mathcal{B}(T)$ as $\Sigma = D'SD'$ for some diagonal matrix $D'$. Then, for any diagonal matrix $D$, it follows that $D\Sigma D = (DD')S(DD') \in \mathcal{L}(T)$, since $DD'$ is a diagonal matrix and since $S$ is a positive correlation matrix supported on $T$.
\end{proof}

Though we do not need it for this work, note that a similar statement in the reverse direction is true. That is, given any $\Sigma\in \mathcal{L}(T)$,  there exists a diagonal matrix $D$ such that $D\Sigma D\in \mathcal{B}(T)$. Taken together with Lemma~\ref{foreverybmtm}, this means that the set of correlation matrices for both BMTMs and PLGTMs is the same.

\begin{corollary}[PLGTM MLE does not exist]
Given a data vector $x$, the one-sample likelihood of a positive latent Gaussian tree model over a tree $T = (V, E)$ with 3 or more leaf nodes has no upper bound.
\end{corollary}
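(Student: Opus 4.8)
The plan is to deduce the unboundedness of the PLGTM likelihood from the corresponding fact for BMTMs fed non-unique data, namely Corollary~\ref{nonuniqueinfty}. The bridge is Lemma~\ref{foreverybmtm}, which shows that the family $\{D\Sigma_\theta D : \Sigma_\theta \in \mathcal{B}(T),\ D\text{ positive diagonal}\}$ is contained in the PLGTM covariance class $\mathcal{L}(T)$. Since enlarging the feasible set can only increase a supremum, it suffices to prove
\begin{equation*}
    \sup_{D,\ \Sigma_\theta} \ell_x\big((D\Sigma_\theta D)^{-1}\big) = +\infty.
\end{equation*}

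The first step is an elementary reparametrization. Writing $(D\Sigma_\theta D)^{-1} = D^{-1}\Sigma_\theta^{-1}D^{-1}$ and expanding the Gaussian log-likelihood gives
\begin{equation*}
    \ell_x\big((D\Sigma_\theta D)^{-1}\big) = \ell_{D^{-1}x}(\Sigma_\theta^{-1}) - \log\det D,
\end{equation*}
where on the right-hand side $\ell$ is the ordinary BMTM log-likelihood evaluated on the rescaled data vector $z := D^{-1}x$. The key consequence is that, once $D$ is fixed, the term $-\log\det D$ is a finite constant, so the PLGTM likelihood restricted to the slice $\{D\Sigma_\theta D : \Sigma_\theta \in \mathcal{B}(T)\}$ is unbounded above precisely when the BMTM likelihood for data $z$ is unbounded above over $\mathcal{B}(T)$. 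Thus everything reduces to choosing a single positive diagonal $D$ so that $z = D^{-1}x$ has two coinciding coordinates, after which Corollary~\ref{nonuniqueinfty} applies to our tree $T$ (the corollary holds for every tree because the star is a submodel of every BMTM).

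The remaining, and most delicate, point is the combinatorial claim that such a $D$ always exists when $d \geq 3$; this is exactly where the dimension hypothesis enters. Positive scaling fixes the sign of each nonzero coordinate while allowing its magnitude to be arbitrary, and it keeps zero coordinates at zero. Hence: if two coordinates of $x$ vanish they already agree; if two nonzero coordinates share a sign we rescale them to a common value and leave the rest unchanged; and the only configuration with $d\geq 3$ not covered by these is $x$ having exactly one positive, one negative, and one zero coordinate. In that last case the zero coordinate already equals the implicit root value $x_0 = 0$. I would dispatch it by the same mechanism: the proof of Lemma~\ref{identicalstar} applies verbatim when one of the two equal values is the root value $0$, since it only uses that some edge of a fully observed star joins two nodes carrying equal data, and sending that edge variance to $0$ drives the likelihood to $+\infty$.

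I expect the main obstacle to be this bookkeeping around signs and the root, rather than any analytic difficulty: one must verify that for every data vector with $d\geq 3$ a common value can be forced (possibly by matching a leaf to the implicit root value $0$), and that Corollary~\ref{nonuniqueinfty}, via Lemma~\ref{identicalstar}, genuinely covers the leaf-equals-root case. The rest — the reparametrization identity and the reduction to a known unbounded-likelihood result — is routine.
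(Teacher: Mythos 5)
Your proposal is correct and follows essentially the same route as the paper: rescale the data by a positive diagonal matrix so that two coordinates coincide, use Lemma~\ref{foreverybmtm} to keep the rescaled BMTM covariances inside $\mathcal{L}(T)$, and invoke Corollary~\ref{nonuniqueinfty} together with the identity $\ell_x\bigl((D\Sigma D)^{-1}\bigr)=\ell_{D^{-1}x}(\Sigma^{-1})-\log\det D$. The only difference is that you treat the configurations involving zero coordinates (including matching a leaf to the implicit root value $x_0=0$), which the paper sidesteps by tacitly assuming all entries of $x$ are nonzero so that two of the $d\geq 3$ entries must share a sign; your extra verification that the blow-up in Lemma~\ref{identicalstar} also works against the root term is sound.
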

\begin{proof}
Given some candidate covariance matrix $\Sigma$, then the log-likelihood of $\mathcal{L}(T)$ is the same as for any mean zero Gaussian:
\begin{align*}
    \ell_{x}(\Sigma^{-1}) &= \frac{1}{2}\log\det(\Sigma^{-1}) - \frac{1}{2}x^\top \Sigma^{-1} x
\end{align*}

We know that there exists $x_a$ and $x_b$ such that either both are positive or both are negative, since there are three or more entries of $x$.  Now, consider some positive diagonal matrix $D$ such that $D_{aa} = 1$ and $D_{bb} = \frac{x_b}{x_a}$. Call $x'=Dx$. We know that $x'_a = x'_b = x_a$. Call $B'(T) = \{DS_T D | S_T \in \mathcal{B}(T)\}$. Then for any $\Sigma \in B'(T)$, we have that:
\begin{align*}
    \ell_{x}(\Sigma) &= \frac{1}{2}\log\det(D^{-1}S_T^{-1} D^{-1}) - \frac{1}{2}x^\top D^{-1}S_T^{-1} D^{-1} x\\
    &= \log\det(D^{-1}) + \frac{1}{2}\log\det(S_T^{-1}) - \frac{1}{2}x'^\top S_T^{-1}x'\\
\end{align*}
Thus, we know that $\log\det(D^{-1}) + \max_{S_T \in \mathcal{B}(T)}\ell_{x'}(S_T) = \max_{\Sigma \in B^{ab}(T)}\ell_{x}(\Sigma)$. By Corollary~\ref{nonuniqueinfty}, $\max_{S_T \in \mathcal{B}(T)}\ell_{x'}(S_T)$ is not bounded from above, so $\max_{\Sigma \in B^{ab}(T)}\ell_{x}(\Sigma)$ is not bounded from above. By Lemma~\ref{foreverybmtm}, we know that $B^{ab}(T) \subset \mathcal{L}(T)$. Thus, $\max_{\Sigma \in \mathcal{L}(T)} \ell_{x}(\Sigma)$ is not bounded from above, and the MLE of $\mathcal{L}(T)$ does not exist.
\end{proof}

Taken together, we've now considered the existence of and structure of the MLE for Brownian Motion Tree Models, Diagonally Dominant Gaussian Models, Contrast BMTMs, and Positive Latent Gaussian Trees. Building on our results for the BMTM and DDGM MLEs, we've established the existence, uniqueness, and structure of Contrast BMTMs by showing their equivalence to a BMTM over a modified graph. In addition, we've now proved that the 1-sample likelihood of a Positive Latent Gaussian Model is unbounded, and so, the MLE does not exist. To do so, we showed that the PLGM MLE is equivalent to finding the MLE of a BMTM over non-unique data.


\section{The one-to-one correspondence of DDMs and L-GMRFs}
\label{one-to-one-ddm-lgmrf-proof}

This section contains the proof of Lemma~\ref{one-to-one-ddm-lgmrf}, which we restate here for convenience.

\begingroup
\renewcommand{\thetheorem}{\ref{one-to-one-ddm-lgmrf}}
\begin{lemma}
There exists a bijection between the precision matrices of Diagonally Dominant Gaussian Models $\mathbb{D}^{d}$ and mean-zero L-GMRFs $\mathbb{L}^{d+1}$.
\end{lemma}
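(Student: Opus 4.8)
The plan is to exhibit an explicit two-sided inverse and then check that both directions respect the constraint sets. Alongside the map $L(\cdot)\colon \mathbb{D}^d\to\mathbb{L}^{d+1}$ from the statement, I would introduce the restriction map $M\colon \mathbb{L}^{d+1}\to\mathbb{S}^d$ sending $\Lambda$ to its principal submatrix $M(\Lambda)$ indexed by $[d]=\{1,\dots,d\}$ (that is, delete the $0$-th row and column). The goal is to show that $L(\cdot)$ lands in $\mathbb{L}^{d+1}$, that $M$ lands in $\mathbb{D}^d$, and that $L(\cdot)$ and $M$ are mutually inverse; together these give the claimed bijection.

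First I would verify $L(K)\in\mathbb{L}^{d+1}$ for $K\in\mathbb{D}^d$. Symmetry is immediate, and the off-diagonal entries are nonpositive: for $i,j\neq 0$ one has $L(K)_{ij}=K_{ij}\le 0$, while the border entries satisfy $L(K)_{i0}=-\sum_{\ell}K_{i\ell}\le 0$ by diagonal dominance. A direct summation using the border and corner definitions gives $L(K)\1=0$. Finally, $K$ is a nonsingular $d\times d$ principal submatrix of $L(K)$ (as $K\succ 0$), so $\operatorname{rank}(L(K))\ge d$, whereas $L(K)\1=0$ forces $\operatorname{rank}(L(K))\le d$; hence $\operatorname{rank}(L(K))=d$ and $L(K)\in\mathbb{L}^{d+1}$.

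The substantive step is to show $M(\Lambda)\in\mathbb{D}^d$ for every $\Lambda\in\mathbb{L}^{d+1}$; write $K:=M(\Lambda)$. Nonpositivity of off-diagonals is inherited from $\Lambda$, and diagonal dominance follows from $\Lambda\1=0$: for each $i\in[d]$, $\sum_{j=1}^d K_{ij}=\sum_{j=1}^d \Lambda_{ij}=-\Lambda_{i0}\ge 0$. For positive definiteness I would use the Laplacian quadratic form: since $\Lambda_{ij}\le 0$ off the diagonal and $\Lambda\1=0$, we have $y^\top \Lambda y=\sum_{i<j}(-\Lambda_{ij})(y_i-y_j)^2\ge 0$ for all $y\in\mathbb{R}^{d+1}$, so $\Lambda\succeq 0$; combined with $\operatorname{rank}(\Lambda)=d$ and $\1\in\ker\Lambda$, this forces $\ker\Lambda=\operatorname{span}(\1)$. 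Now for $x\in\mathbb{R}^d$ set $\bar x=(0,x_1,\dots,x_d)\in\mathbb{R}^{d+1}$; then $x^\top K x=\bar x^\top \Lambda\bar x\ge 0$, with equality only if $\bar x\in\ker\Lambda=\operatorname{span}(\1)$, which is impossible for $\bar x\neq 0$ since $\bar x_0=0$. Hence $K\succ 0$ and $K\in\mathbb{D}^d$.

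It remains to check that $L(\cdot)$ and $M$ are mutually inverse. The identity $M(L(K))=K$ is immediate because $L(\cdot)$ leaves the $[d]\times[d]$ block unchanged. Conversely, for $\Lambda\in\mathbb{L}^{d+1}$ and $K=M(\Lambda)$ I would verify $L(K)=\Lambda$ entrywise: the $[d]\times[d]$ blocks agree by definition, while $L(K)_{i0}=-\sum_\ell K_{i\ell}=-\sum_\ell \Lambda_{i\ell}=\Lambda_{i0}$ and $L(K)_{00}=\sum_{k,\ell}K_{k\ell}=\sum_{k,\ell}\Lambda_{k\ell}=\Lambda_{00}$, both using $\Lambda\1=0$. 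Thus $L(\cdot)$ is a bijection with inverse $M$. The only genuinely nontrivial point is the positive definiteness of the reduced matrix $K$, which is precisely where the rank (connectedness) hypothesis on $\Lambda$ enters; everything else is bookkeeping with row sums.
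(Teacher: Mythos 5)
Your proof is correct and establishes the same bijection as the paper --- the inverse of $L(\cdot)$ is the principal submatrix obtained by deleting the $0$-th row and column --- though the paper derives both maps as precision-matrix transformations of the Gaussian sample under the operators $\bs P = I - \tfrac{1}{d+1}\1\1^\top$ and $\bs P_0 = I - \1 e_0^\top$, while you verify everything entrywise. Your explicit argument for positive definiteness of the restricted matrix, via the Laplacian quadratic form and the identification $\ker\Lambda = \operatorname{span}(\1)$, supplies a step the paper asserts without detail.
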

\addtocounter{theorem}{-1}
\endgroup

\begin{proof}

Define $U_0 = \{x \in \R^{d + 1} : x_0 = 0 \} = (\operatorname{span} e_0)^\perp$ and $\Pi \in \R^{d \times (d + 1)}$ as the restriction of a $(d+1)$-dimensional vector to its last $d$ entries, so that, for any $x \in \mathbb{R}^{d}$, $\Pi^\top x \in U_0$, and:
\begin{equation}
    (\Pi^\top {x})_i =
    \left\{
    \begin{aligned}
        &0, \quad && i = 0,\\
        &x_i, \quad && i \in [d].
    \end{aligned}
    \right.
\end{equation}

Now, given a precision matrix $K \in \mathbb{D}^{d}$ and an associated random sample $x \sim N(0, K^{-1})$, the embedding $\Pi^\top {x}$ gives rise to a (degenerate) $d+1$ dimensional Gaussian distribution with precision matrix $\Pi^\top K \Pi$ and covariance matrix $\Pi^\top K^{-1} \Pi$ satisfying
\begin{equation}
    \Pi^\top K \Pi =
    \left\{
    \begin{aligned}
        &(K)_{ij}, \quad && i \neq 0 \text{ and } j \neq 0,\\
        &0, \quad && i = 0 \text{ or } j = 0,
    \end{aligned}
    \right.
    \quad
    \Pi^\top K^{-1} \Pi =
    \left\{
    \begin{aligned}
        &(K^{-1})_{ij}, \quad && i \neq 0 \text{ and } j \neq 0,\\
        &0, \quad && i = 0 \text{ or } j = 0.
    \end{aligned}
    \right.
\end{equation}
We introduce the following operators on $\R^{d+1}$:
\begin{equation*}
    \bs P = I - \frac{1}{d + 1} \1 \1^\top, \quad \bs P_0 = I - \1 e_0^\top,
\end{equation*}
and note that $\bs P$ corresponds to the projection onto $U_1$, and $\bs P_0$ subtracts off the 0th coordinate of a given vector from all entries, so that $ \bs P \bs P_0 x = x $ for all $x \in U_1$ and $\bs P_0 \bs P x = x$ for all $x \in U_0$.
Then $\bs P \Pi^\top {x}$ follows a (degenerate) Gaussian distribution on $W$ with covariance matrix $\bs P \Pi^\top K \Pi \bs P$ and precision matrix $L = L(K) = \bs P_0^\top \Pi^\top K \Pi \bs P_0$ satisfying
\begin{equation*}
    \label{eq:lldefinition}
    (L(K))_{ij} =
    \left\{
    \begin{array}{llll}
        &K_{ij},& \quad &i \neq 0 \text{ and } j \neq 0,\\
        &-\sum_{\ell = 1}^d K_{i\ell},& \quad &j = 0 \text{ and } i \neq 0,\\
        &-\sum_{k = 1}^d K_{kj},& \quad &i = 0 \text{ and } j \neq 0,\\
        &\sum_{k, \ell = 1}^d K_{k\ell},& \quad &j = 0 \text{ and } i = 0.\\
    \end{array}
    \right.
\end{equation*}
From the above, it immediately follows that $L(K) \in \mathbb{L}^{d + 1}$.
Conversely, if $y$ is a sample from a L-GMRF, then $\Pi \bs P_0 y$ follows a (non-degenerate) Gaussian distribution in $\mathbb{R}^d$ with precision matrix $\Pi \bs P L \bs P \Pi^\top = \Pi L \Pi^\top \in \mathbb{D}^{d}$, which corresponds to the principal submatrix of $L$ with indices $\{1, \dots, d\}$.

\end{proof}

\end{document}